\numberwithin{equation}{section}
\newtheorem{Theorem}{Theorem}[section]
\newtheorem*{Theorem*}{Theorem}
\newtheorem{Corollary}[Theorem]{Corollary}
\newtheorem{Proposition}[Theorem]{Proposition}
 { \theoremstyle{definition}
\newtheorem{Definition}[Theorem]{Definition}

 }
\begin{document}
\allowdisplaybreaks

\newcommand{\arXivNumber}{2401.00500}

\renewcommand{\PaperNumber}{061}

\FirstPageHeading

\ShortArticleName{Deformation Quantization with Separation of Variables of $G_{2,4}(\mathbb{C})$}

\ArticleName{Deformation Quantization\\ with Separation of Variables of $\boldsymbol{G_{2,4}(\mathbb{C})}$}

\Author{Taika OKUDA~$^{\rm a}$ and Akifumi SAKO~$^{\rm b}$}

\AuthorNameForHeading{T.~Okuda and A.~Sako}

\Address{$^{\rm a)}$~Graduate School of Science, Department of Mathematics and Science Education,\\
\hphantom{$^{\rm a)}$}~Tokyo University of Science, 1-3 Kagurazaka, Shinjuku-ku, Tokyo 162-8601, Japan}
\EmailD{\href{mailto:warriors.aboot@gmail.com}{warriors.aboot@gmail.com}}
\URLaddressD{\url{https://sites.google.com/view/taiokutus/}}

\Address{$^{\rm b)}$~Faculty of Science Division II, Department of Mathematics, Tokyo University of Science,\\
\hphantom{$^{\rm b)}$}~1-3 Kagurazaka, Shinjuku-ku, Tokyo 162-8601, Japan}
\EmailD{\href{mailto:sako@rs.tus.ac.jp}{sako@rs.tus.ac.jp}}
\URLaddressD{\url{https://www.rs.tus.ac.jp/sako/}}

\ArticleDates{Received December 10, 2024, in final form July 14, 2025; Published online July 23, 2025}

\Abstract{We construct a deformation quantization with separation of variables of the Grassmannian $G_{2,4}(\mathbb{C})$. A star product on $G_{2,4}(\mathbb{C})$ can be explicitly determined as the solution of the recurrence relations for $G_{2,4}(\mathbb{C})$ given by Hara and one of the authors (A.~Sako). To provide the solution to the recurrence relations, it is necessary to solve a system of linear equations in each order. However, to give a concrete expression of the general term is not simple because the variables increase with the order of the differentiation of the star product. For this reason, there has been no formula to express the general term of the recurrence relations. In this paper, we overcome this problem by transforming the recurrence relations into simpler ones. We solve the recurrence relations using creation and annihilation operators on a Fock space. From this solution, we obtain an explicit formula of a star product with separation of variables on $G_{2,4}(\mathbb{C})$.}

\Keywords{noncommutative differential geometry; deformation quantization; complex Grassmannians; K\"ahler manifolds; locally symmetric spaces}

\Classification{14M15; 32Q15; 46L87; 53D55}

\section{Introduction}
\label{G24_introduction}

In this paper, we construct a noncommutative complex Grassmannian $G_{2,4}(\mathbb{C})$ explicitly.
In gauge theories with background magnetic fields or in high-energy physics where gravity is quantized, 
noncommutative manifolds or quantized spacetime appear naturally. $G_{2,4}(\mathbb{C})$ appears in various situations in physics. As an example of the motivation for this paper, let us take the twistor theory here.

The twistor theory proposed by Penrose was introduced in an attempt to quantize gravity theories \cite{Pen1,Pen2,Pen3}. Although it does not directly correspond to the original purpose, 
Penrose's twistor theory is useful for constructing solutions of (anti-)self-dual Yang--Mills equations, Einstein's equations, etc. For example, it is known that solutions of (anti-)self-dual Yang--Mills equations correspond to holomorphic vector bundles over the complex projective space $\mathbb{C}P^{3}$ \cite{Pen1,Pen2,Pen3,Ward1,WaWe}.
For more detailed discussions, see \cite{MW,WaWe}. This correspondence is understood as the twistor (Klein) correspondence \smash{$G_{2,4}(\mathbb{C})\xleftarrow{\pi_{2}}F_{1,2,4}(\mathbb{C})\xrightarrow{\pi_{1}}\mathbb{C}P^{3}$}. Noncommutative deformation of the (anti-)self-dual Yang--Mills theory is known as a successful example of noncommutative deformations of integrable systems.
See, for example, \cite{Hama1,HaNa1,NS} and references therein.
It is also expected that it makes sense to construct a noncommutative deformation of the twistor correspondence \smash{$G_{2,4}(\mathbb{C})\xleftarrow{\pi_{2}}F_{1,2,4}(\mathbb{C})\xrightarrow{\pi_{1}}\mathbb{C}P^{3}$},
where $F_{1,2,4}(\mathbb{C})$ is the complex flag manifold defined by
$F_{1,2,4}(\mathbb{C})
:=\{(V_{1},V_{2})\mid
V_{1}
\subset
V_{2}
\subset
\mathbb{C}^{4},\,
\dim_{\mathbb{C}}
V_{k}
=
k,\,
k=1,2
\} $.
To achieve this, it is essential to construct a deformation quantization of $G_{2,4}(\mathbb{C})$. Noncommutative $\mathbb{C}P^{3}$ was already constructed. However, their explicit ones have not yet been determined for general complex Grassmannians. The quantizations of $\mathbb{C}P^{3}$ and $G_{2,4}(\mathbb{C})$ may contribute not only to the development of noncommutative twistor theory but also to the successful quantization of gravity.
For this reason, it is worth constructing noncommutative $G_{2,4}(\mathbb{C})$ explicitly in physics.

For a symplectic manifold, the quantization method called ``deformation quantization" is one of the construction methods of noncommutative differentiable manifolds. It is known as the quantization method based on a deformation for a Poisson algebra. Two types of deformation quantizations are known: ``formal deformation quantization" proposed by Bayen et al.~\cite{BFFLS} and ``strict deformation quantization", based on $C^{\star}$-algebra proposed by Rieffel \cite{Rief1,Rief2,Rief3}. In this paper, we use ``deformation quantization" in the sense of ``formal deformation quantization".
\begin{Definition}\label{dq_def}
 Let
 $
 (
 M,
 \{
 \cdot,
 \cdot
 \}
 )
 $
 be a Poisson manifold and
 $
 C^{\infty}
 (M)
 [\!
 [
 \hbar
 ]
 \!]
 $
 be the ring of formal power series over
 $
 C^{\infty}(M).
 $ Let $\ast$ be a star product denoted by
 $
 f
 \ast
 g
 =
 \sum_{k}
 C_{k} (f,g )
 \hbar^{k}
 $
 satisfying the following conditions:
 \begin{enumerate}\itemsep=0pt
 \item[(1)]
 For any
 $ f,g,h \in C^{\infty} (M)[\![ \hbar] \!]$, $ f \ast( g \ast h) = ( f \ast g) \ast h$.
 \item[(2)]
 For any
 $
 f
 \in
 C^{\infty}
 (M)
 [\!
 [
 \hbar
 ]
 \!]
 $,
 $
 f
 \ast
 1
 =
 1
 \ast
 f
 =
 f.
 $
 \item[(3)]
 Each $C_{k}(\cdot,\cdot)$ is a bidifferential operator.

 \item[(4)]
 $
 C_{0}
 (
 f,
 g
 )
 =
 fg$,
 $
 C_{1}
 (
 f,
 g
 )
 -
 C_{1}
 (
 g,
 f
 )
 =
 \{f,g\} $.
 \end{enumerate}
 A pair
 $
 (
 C^{\infty}
 (M)
 [\!
 [
 \hbar
 ]
 \!],
 \ast
 )
 $
 is called ``a deformation quantization" for the Poisson manifold $M$.
\end{Definition}

A well-known example of a star product is the Moyal product. It was constructed independently by Groenewold \cite{Groe} and Moyal \cite{Moya}, and is often called Groenewold-Moyal product. Since it is a star product on $\mathbb{R}^{2N}$, it gives the noncommutative $\mathbb{R}^{2N}$ as a deformation quantization. A star product on $\mathbb{C}^{N}$, called Wick--Voros product, is also known as a star product that is equivalent to the Moyal product~\cite{Voro}. For more detailed reviews of deformation quantization, see for example \cite{GUT,Mosha}. For any symplectic manifolds, a construction methods of deformation quantization is given by de Wilde--Lecomte \cite{DL}, Omori--Maeda--Yoshioka \cite{OMY1} and Fedosov~\cite{Fedo}. More generally, a deformation quantization was constructed by Kontsevich for Poisson manifolds~\cite{Kont1}. Recently, for any contact manifolds, its deformation quantization was constructed by Elfimov--Sharapov~\cite{ES}.

On the other hand, construction method of a deformation quantization for any K\"ahler manifold was studied by Karabegov \cite{Kara2,Kara1}. Let $M$ be an $N$-dimensional K\"ahler manifold and
$(U,\phi)$ be a chart of $M$. For
$
p
\in
M
$, we choose
$
\phi(p)
=
 \bigl(z^{1}_{p},\dots,z^{N}_{p} \bigr)
$
as local coordinates at
$
p
\in
M
$.
To use the K\"ahler potential of $M$, we assume that $U$ is contractible in the following discussion. In the following discussion, we omit
$
p
\in
M
$
when denoting local coordinates of $U$. For the K\"ahler manifold $M$, the K\"ahler 2-form $\omega$ and the K\"ahler metric $g$ can be locally expressed as
\begin{align*}
 \omega
 =
 i
 g_{k\bar{l}}
 {\rm d}z^{k}
 \wedge
 {\rm d}\bar{z}^{l},
 \qquad
 g_{k\bar{l}}
 =
 \partial_{k}
 \partial_{\bar{l}}
 \Phi
\end{align*}
by using a K\"ahler potential $\Phi$, where
$
\partial_{k}
:=
\partial/
\partial
z^{k}
$
and
$
\partial_{\bar{l}}
:=
\partial/
\partial
z^{\bar{l}}
=
\partial/
\partial
\overline{z}^{l}
$.
Note that we use the Einstein summation convention for index $k$ and $\bar{l}$ on the above.
We also denote the inverse matrix of $ (g_{\overline{k}l} )$ by $ \bigl(g^{l\overline{k}} \bigr)$.
Karabegov proposed a construction method of deformation quantization for K\"ahler manifolds such that the following conditions, called ``separation of variables", are satisfied~\cite{Kara2,Kara1}.
\begin{Definition} \label{sep_of_bar_starprd}
 Let $M$ be an $N$-dimensional K\"ahler manifold. A star product $\ast$ on $M$ is ``separation of variables" if the following two conditions are satisfied for any open set $
 U
 \subset
 M
 $
 and
 $
 f
 \in
 C^{\infty}
 (
 U
 )
 $:
 \begin{enumerate}\itemsep=0pt
 \item[(1)]
 For any holomorphic function
 $ a $
 on $U$,
 $ a
 \ast
 f
 =
 af $.
 \item[(2)]
 For any anti-holomorphic function
 $ b $
 on $U$,
 $
 f
 \ast
 b
 =
 fb
 $.
 \end{enumerate}
\end{Definition}

Here we introduce two local differential operators. The differential operators
$
 D^{k}
$
and
$
 D^{\overline{k}}
$
are defined by
$
 D^{k}
 :=
 g^{k\overline{l}}
 \partial_{\overline{l}}$,
$
 D^{\overline{k}}
 :=
 g^{\overline{k}l}
 \partial_{l}$,
where we use the Einstein summation convention for index~$l$ and $\bar{l}$.
Note that \smash{$ g^{k\overline{l}}= g^{\overline{l}k}$}.
We define the set of differential operators
\begin{align*}
 {\mathcal S}
 :=
 \bigg\{
 A \,
 \bigg|\,
 A
 =
 \sum_{\vec{\beta}^{\ast}}
 a_{\vec{\beta}^{\ast}}
 D^{\vec{\beta}^{\ast}},\,
 a_{\vec{\beta}^{\ast}}
 \in
 C^{\infty}
 (U)
 \bigg\}.
\end{align*}
Here,
$
D
:=
 \bigl(
D^{\bar{1}},
\dots,
D^{\bar{N}}
 \bigr)
$
and
$
 \vec{\beta}
 =
 (
 \beta_{1},
 \dots,
 \beta_{N}
 )
$ is a multi-index, i.e., \smash{$D^{\vec{\beta}^{\ast}}$} is expressed as
\smash{$
 D^{\vec{\beta}^{\ast}}
 =
 \bigl(
 D^{\bar{1}}
 \bigr)^{\beta_{1}}
 \cdots
 \bigl(
 D^{\bar{N}}
 \bigr)^{\beta_{N}}
$}.
We also define the differential operators
\smash{$
 D^{\vec{\alpha}_{n}}
$}
and
\smash{$
 D^{\vec{\beta}_{n}^{\ast}}
$}
as
\begin{gather*}
 D^{\vec{\alpha}_{n}}
 :=
 D^{\alpha_{1}^{n}}\cdots D^{\alpha_{N}^{n}},
 \qquad
 D^{\alpha_{k}^{n}}
 :=
 \bigl(
 D^{k}
 \bigr)^{\alpha_{k}^{n}},
 \\
 D^{\vec{\beta}_{n}^{\ast}}
 :=
 \overline{
 D^{\vec{\beta}_{n}}
 }
 =
 \overline{D^{\beta_{1}^{n}}}
 \cdots
 \overline{D^{\beta_{N}^{n}}},
 \qquad
 \overline{D^{\beta_{k}^{n}}}
 :=
 \bigl(
 D^{\overline{k}}
 \bigr)^{\beta_{k}^{n}}
\end{gather*}
for
\begin{align*}
 \vec{\alpha}_{n},
 \vec{\beta}_{n}
 \in
 \left\{
 \bigl(\gamma_{1}^{n},\dots,\gamma_{N}^{n} \bigr)
 \in
 \mathbb{Z}^{N}
 \,\bigg|\,
 \sum_{k=1}^{N}
 \gamma_{k}^{n}
 =
 n
 \right\}.
\end{align*}
The sum
\smash{$\vec{\alpha}_{n}+
 \vec{\beta}_{m}$} is defined in $\mathbb{Z}^{N} $ as usual.

We define
$ D^{\vec{\alpha}_{n}}:=0$
when there exists at least one negative
$ \alpha_{k}^{n}
 \notin\mathbb{Z}_{\geq0}$
for
$
 k
 \in
\{1,\dots,N\}
$.
Similarly,
\smash{$
 D^{\vec{\beta}_{n}^{\ast}}:=0
$}
if \smash{$\vec{\beta}_{n}^{\ast}$} has negative components.
For
$
 f
 \in
 C^{\infty}
 (U)
$,
we can construct the left $\ast$-multiplication operator $L_{f} $ with respect to $f$
such that
$ L_{f} g := f \ast g $. The following theorem states that $L_{f}$ is expressed by a formal power series with respect to the differential operators in~$\mathcal{S}$.

\begin{Theorem}[\cite{Kara2,Kara1}]\label{thm_Kara_star_opr}
Let $M$ be an $N$-dimensional K\"ahler manifold
with the K\"ahler form~$\omega$.
 There exists a unique star product $\ast$ on $M$ such that for any contractible open neighborhood $U$ on
 $M$ with a K\"ahler potential $\Phi$ of the form $\omega$ and any
 $
 f
 \in
 C^{\infty}
 (U)
$, the left $\ast$-multiplication operator $L_{f}$ satisfies the following three conditions:
\begin{enumerate}\itemsep=0pt
\item[$(1)$] $[ L_{f}, R_{\partial_{\overline{l}}\Phi}] = 0$,
where
$
 R_{\partial_{\overline{l}}\Phi}
 =
 \partial_{\overline{l}}\Phi
 +
 \hbar
 \partial_{\overline{l}}
$
is the right $\ast$-multiplication operator with respect to $\partial_{\overline{l}}\Phi$.
\item[$(2)$]
$
 L_{f}1
 =
 f
 \ast
 1
 =
 f
$.
\item[$(3)$]
For any $
 g,h
 \in
 C^{\infty}
 (U)
$,
the left $\ast$-multiplication operator is associative, i.e.,
\begin{align*}
 L_{f} ( L_{g} h ) = f \ast ( g \ast h ) = ( f \ast g ) \ast h = L_{ L_{f}g } h .
\end{align*}
\end{enumerate}
\end{Theorem}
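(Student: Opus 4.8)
The plan is to construct the operator $L_{f}$ order by order in $\hbar$ within the class $\mathcal{S}[\![\hbar]\!]$ and then define the product by $f\ast g:=L_{f}g$. The motivation for working inside $\mathcal{S}[\![\hbar]\!]$ is separation of variables: since $L_{f}$ must commute with right multiplication by every anti-holomorphic function (which acts as ordinary multiplication), $L_{f}$ can contain no $\partial_{\overline{l}}$, hence only holomorphic derivatives, i.e.\ $L_{f}\in\mathcal{S}[\![\hbar]\!]$. The computational engine is the commutation relation
\begin{align*}
 \bigl[D^{\overline{k}},\partial_{\overline{l}}\Phi\bigr]=g^{\overline{k}m}\bigl(\partial_{m}\partial_{\overline{l}}\Phi\bigr)=g^{\overline{k}m}g_{m\overline{l}}=\delta^{k}_{l},
\end{align*}
which exhibits multiplication by $\partial_{\overline{l}}\Phi$ and the operators $D^{\overline{k}}$ as a Weyl (creation/annihilation) pair; this is precisely what makes the Fock-space formalism applicable. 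As a consequence, the map $A\mapsto\bigl[A,\partial_{\overline{l}}\Phi\bigr]$ acts on $\mathcal{S}$ as differentiation with respect to $D^{\overline{l}}$.

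For uniqueness I would write $L_{f}=\sum_{n\ge0}\hbar^{n}L_{n}$ and expand condition $(1)$, namely $\bigl[L_{f},\partial_{\overline{l}}\Phi+\hbar\partial_{\overline{l}}\bigr]=0$, into
\begin{align*}
 \bigl[L_{0},\partial_{\overline{l}}\Phi\bigr]=0,\qquad \bigl[L_{n},\partial_{\overline{l}}\Phi\bigr]=-\bigl[L_{n-1},\partial_{\overline{l}}\bigr]\quad(n\ge1).
\end{align*}
By the relation above, the order-$0$ equation forces $L_{0}$ to be $D^{\overline{k}}$-free, and condition $(2)$, $L_{0}1=f$, gives $L_{0}=f$. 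At order $n$ the recurrence prescribes all $D^{\overline{l}}$-derivatives of $L_{n}$, determining it up to a $D^{\overline{k}}$-free additive term, which $L_{n}1=0$ then pins down. Hence $L_{f}$ is unique if it exists.

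Existence amounts to solving the same recurrence, and the one thing requiring care is that the system is over-determined (one equation per index $l$). The integrability condition $\bigl[[L_{n-1},\partial_{\overline{l}}],\partial_{\overline{m}}\Phi\bigr]=\bigl[[L_{n-1},\partial_{\overline{m}}],\partial_{\overline{l}}\Phi\bigr]$ follows from the Jacobi identity: it reduces to the symmetry $\partial_{\overline{l}}\partial_{\overline{m}}\Phi=\partial_{\overline{m}}\partial_{\overline{l}}\Phi$, the identity $[\partial_{\overline{l}},\partial_{\overline{m}}]=0$, and the order-$(n-1)$ recurrence applied inductively. This is exactly where the K\"ahler (global potential) structure enters, and I expect it to be the main technical obstacle. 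Once $L_{f}$ is built, associativity (condition $(3)$) follows cheaply from uniqueness: $L_{f}L_{g}$ again lies in $\mathcal{S}[\![\hbar]\!]$, still commutes with $\partial_{\overline{l}}\Phi+\hbar\partial_{\overline{l}}$, and satisfies $(L_{f}L_{g})1=L_{f}(L_{g}1)=f\ast g$, so uniqueness gives $L_{f}L_{g}=L_{f\ast g}$, which is precisely $(f\ast g)\ast h=f\ast(g\ast h)$.

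Finally I would verify that $\ast$ is a genuine star product with separation of variables. For holomorphic $a$, multiplication by $a$ lies in $\mathcal{S}[\![\hbar]\!]$ and commutes with $\partial_{\overline{l}}\Phi+\hbar\partial_{\overline{l}}$ (since $\partial_{\overline{l}}a=0$), so uniqueness yields $L_{a}=a$ and thus $a\ast f=af$; and since every term of $L_{n}$ with $n\ge1$ carries at least one $D^{\overline{k}}$, which annihilates any anti-holomorphic $b$, one gets $f\ast b=fb$. The axioms of Definition~\ref{dq_def} are then read off: $C_{0}(f,g)=fg$ from $L_{0}=f$, bidifferentiality from the construction, and $C_{1}(f,g)-C_{1}(g,f)=\{f,g\}$ from a direct first-order computation relating $L_{1}$ to $g^{k\overline{l}}$ and the form $\omega$.
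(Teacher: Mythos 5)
The first thing to note is that the paper contains no proof of this statement: Theorem~\ref{thm_Kara_star_opr} is imported verbatim from Karabegov's work [Kara2, Kara1], so there is no internal argument to compare yours against. Measured against Karabegov's original proof, your proposal reconstructs its computational core correctly: the relation $\bigl[D^{\overline{k}},\partial_{\overline{l}}\Phi\bigr]=g^{\overline{k}m}g_{m\overline{l}}=\delta^{k}_{l}$, the interpretation of $A\mapsto\bigl[A,\partial_{\overline{l}}\Phi\bigr]$ as differentiation in $D^{\overline{l}}$ on $\mathcal{S}$, the order-by-order recursion with the Jacobi-identity integrability check, associativity deduced from uniqueness, and the separation-of-variables verification are all the right ingredients and your computations are sound.

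There is, however, a genuine gap in the uniqueness half. You place $L_{f}$ in $\mathcal{S}[\![\hbar]\!]$ by asserting that right multiplication by anti-holomorphic functions acts as ordinary multiplication --- but that is separation of variables, a \emph{consequence} of the theorem, not one of the hypotheses $(1)$--$(3)$. The restriction is not cosmetic: conditions $(1)$ and $(2)$ alone do not determine $L_{f}$ outside $\mathcal{S}[\![\hbar]\!]$. On flat $\mathbb{C}$ with $\Phi=z\overline{z}$ one has $\partial_{\overline{z}}\Phi=z$, and for any holomorphic $h$ the operator $B=h(z)\partial_{\overline{z}}$ satisfies $\bigl[B,\,z+\hbar\partial_{\overline{z}}\bigr]=0$ and $B1=0$, so $L_{f}+\hbar^{n}B$ passes $(1)$ and $(2)$ at every order while lying outside $\mathcal{S}[\![\hbar]\!]$. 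To close this hole you must derive, from the hypotheses, that $R_{b}=b$ for anti-holomorphic $b$: since condition $(1)$ is imposed for every K\"ahler potential on every contractible chart, and two potentials of the same chart differ by $\psi(z)+\overline{\chi}(\overline{z})$, subtracting the two instances of $R_{\partial_{\overline{l}}\Phi}=\partial_{\overline{l}}\Phi+\hbar\partial_{\overline{l}}$ gives $R_{\partial_{\overline{l}}\overline{\chi}}=\partial_{\overline{l}}\overline{\chi}$, and locally every anti-holomorphic function is of this form; associativity $(3)$ then yields $\bigl[L_{f},b\bigr]=0$, which forces every order of $L_{f}$ (a differential operator, by the star-product axioms) to contain only holomorphic derivatives, i.e., $L_{f}\in\mathcal{S}[\![\hbar]\!]$. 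Only after this step does your induction prove uniqueness among \emph{all} star products satisfying $(1)$--$(3)$, rather than only among those already assumed to separate variables.

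Two smaller omissions: the theorem asserts a single star product on $M$, while your construction is chart-by-chart, so you need the gluing remark that the constructed $L_{f}\in\mathcal{S}[\![\hbar]\!]$ automatically commutes with $\partial_{\overline{l}}\Phi'+\hbar\partial_{\overline{l}}$ for any other potential $\Phi'=\Phi+\psi+\overline{\chi}$ on an overlap (because $\bigl[L_{f},\partial_{\overline{l}}\overline{\chi}\bigr]=0$), whence local uniqueness makes the chart-wise products agree. And for existence you should check that condition $(1)$ holds in its full strength, namely that $\partial_{\overline{l}}\Phi+\hbar\partial_{\overline{l}}$ really \emph{is} the right $\ast$-multiplication operator of your product: $g\ast\partial_{\overline{l}}\Phi=g\,\partial_{\overline{l}}\Phi+\hbar\partial_{\overline{l}}g$ follows because $L_{0}=g$, the first-order term $(\partial_{\overline{m}}g)D^{\overline{m}}$ sends $\partial_{\overline{l}}\Phi$ to $\partial_{\overline{l}}g$, and for $n\geq2$ the coefficient of the pure $D^{\overline{l}}$ monomial in $L_{n}$ vanishes, since its source term in your recursion is the $D$-free part of $-\bigl[L_{n-1},\partial_{\overline{l}}\bigr]$, which is zero once $L_{n-1}$ has no $D$-free part.
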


More generally, Karabegov constructed a unique separation of variables type star product~$\ast_{\omega_{\hbar}}$ for the formal deformation $\omega_{\hbar}=\omega+\hbar\omega_{1}+h^{2}\omega_{2}+\cdots$ for the K\"ahler form $\omega$. $\omega_{\hbar}$ is called a Karabegov form of $\ast_{\omega_{\hbar}}$, denoted by $kf(\ast_{\omega_{\hbar}})$. The formal K\"ahler potential $\Phi_{\hbar}$ is given by~${\Phi_{\hbar}=\Phi+\hbar\Phi_{1}+\hbar^{2}\Phi_{2}+\cdots}$ using each K\"ahler potential $\Phi, \Phi_{1}, \Phi_{2},\dots$. Conversely, given a~star product $\ast$ for a K\"ahler manifold $M$ and its K\"ahler form $\omega$, there exists a unique Karabegov form $kf(\ast)=\omega_{\hbar}=\omega+\hbar\omega_{1}+h^{2}\omega_{2}+\cdots$ such that $\ast=\ast_{\omega_{\hbar}}$ \cite{Kara5}. In particular, the star product for a Karabegov form with no higher-order form terms, i.e., $\omega_{\hbar}=\omega$, is called the standard product. We note that the aim in our study is to give an explicit standard product for the K\"ahler form $\omega$, which contains no higher-order formal terms.

The final goal of this study is to construct a deformation quantization with separation of variables for the complex Grassmannian $G_{p,p+q}(\mathbb{C})$. Noncommutative complex Grassmannians have been studied from several viewpoints. From fuzzy physics, noncommutative Grassmannians as fuzzy manifolds have been obtained by Dolan--Jahn \cite{DJ} and Halima--Wurzbacher \cite{HW}. From the viewpoint of deformation quantization, Karabegov has studied deformation quantization with separation of variables as an example for pseudo-K\"ahler manifold~\cite{Kara4}.
In \cite{Kara4}, however, in the sense that each term of the expansion of the star product is written out explicitly, no explicit expression is given for a star product with separation of variables for $G_{p,p+q}(\mathbb{C})$.
On the other hand, it is known that there is a previous study that has obtained an explicit star product on~$ G_{p,p+q}(\mathbb{C})$~\cite{Schir} using a phase space reduction proposed by Bordemann et al.~\cite{BBEW}. As another approach to the above background, there are known construction methods that explicitly give a~star product with separation of variables on locally symmetric K\"ahler manifolds. These methods were proposed by one of the authors of this paper (A.~Sako), Suzuki, and Umetsu \cite{SSU1,SSU2} and Hara and Sako~\cite{HS1,HS2}.
From these methods, explicit star products with separation of variables on any locally symmetric K\"ahler manifolds were obtained for the $1$- and $2$-dimensional cases~\cite{HS1,HS2,OS_Varna,OS_jrnl}.
Using these construction methods, we can determine a star product with separation of variables by solving the recurrence relations. In previous works \cite{HS1,HS2}, the recurrence relations for $G_{2,4}(\mathbb{C})$ were obtained via the construction methods. In this work, we construct the explicit star product with separation of variables by solving the recurrence relations given by \cite{HS1,HS2}.\looseness=-1

We now introduce the complex Grassmannian. The complex Grassmannian $G_{p,p+q}(\mathbb{C})$ is defined by
\begin{align*}
 G_{p,p+q}(\mathbb{C})
 :=
\bigl\{ V \subset \mathbb{C}^{p+q} \mid V \mbox{ is $p$-dimensional complex vector subspace}
\bigr\}.
\end{align*}
In particular if $p=1$, $G_{1,1+q}(\mathbb{C})$ is complex projective space $\mathbb{C}P^{q}$. The simplest complex Grassmannian which is not $\mathbb{C}P^{N}$ is known to be $G_{2,4}(\mathbb{C})$, i.e., $p=q=2$.
We introduce the local coordinates of $G_{p,p+q}(\mathbb{C})$ using the following procedure \cite{KN}. Let $\bigl(z^{1}, z^{2} ,\dots ,z^{p+q}\bigr)$ be a natural local coordinate of $\mathbb{C}^{p+q}$, and $S$ be the $p$-dimensional subspace of $\mathbb{C}^{p+q}$ such that~${z^{\alpha_{1}},\dots, z^{\alpha_{p}}}$ are linearly independent. Here $z^{\alpha_{1}}|_{S},\dots, z^{\alpha_{p}}|_{S}$ are the local coordinates restricted to $S$, and~${\alpha_{1},\dots, \alpha_{p}}$ are the integers such that $1\leq\alpha_{1}<\cdots<\alpha_{p}\leq p+q$. We take the subset $U_{\alpha}\subset G_{p,p+q}(\mathbb{C})$ defined by
\begin{align*}
U_{\alpha}
:=
\{
S
\subset
\mathbb{C}^{p+q}
\mid
\dim\nolimits_{\mathbb{C}}S
=
p,\,
z^{\alpha_{1}}|_{S},\dots, z^{\alpha_{p}}|_{S}
\ \text{are linearly independent}
\}
\end{align*}
for the set $\alpha:=\{\alpha_{1},\dots, \alpha_{p}\}$. We can take $\binom{p+q}{p}$ patterns of such $U_{\alpha}$, and they give the open coverings of $G_{p,p+q}(\mathbb{C})$. In the following discussion, we consider the case $\alpha=\{1,\dots,p\}$, and denote $U_{\alpha}=U_{\{1,\dots,p\}}$ simply as $U$.

Let us choose a local coordinate of $G_{p,p+q}(\mathbb{C})$ using
$
U
\subset
G_{p,p+q}(\mathbb{C})
$
introduced above. Let~${M^{\ast}(p+q,p;\mathbb{C})}$ be the set of
$
(p+q)
\times
p
$
matrices of rank $p$, and
$
\pi\colon
M^{\ast}(p+q,p;\mathbb{C})
\to
G_{p,p+q}(\mathbb{C})
$
be a smooth projection. ${\rm GL}_{p}(\mathbb{C})$ acts freely and transitively on $M^{\ast}(p+q,p;\mathbb{C})$ from the right. Therefore, $M^{\ast}(p+q,p;\mathbb{C})$ is a principal ${\rm GL}_{p}(\mathbb{C})$-bundle over $G_{p,p+q}(\mathbb{C})$. We now take a set
\begin{align*}
 \pi^{-1}(U)
:={} &
 \left\{
 Y
 =
 \left(
 \begin{array}{c}
 Y_{0}\\
 Y_{1}
 \end{array}
 \right)
 \in
 M^{\ast}(p+q,p;\mathbb{C})
 \,\bigg|\,
 Y_{0}
 \in
 {\rm GL}_{p}
 (
 \mathbb{C}
 ),\,
 Y_{1}
 \in
 M
 (
 q,p
 ;
 \mathbb{C}
 )
 \right\}\\
 \subset{}&
 M^{\ast}(p+q,p;\mathbb{C})
 .
\end{align*}
We consider a map
$
 \phi
\colon
 \pi^{-1}(U)
 \to
 M(q,p;\mathbb{C})
 \cong\mathbb{C}^{qp}
$
defined by
$
 Y
 \mapsto
 \phi
 (Y)
 :=
 Y_{1}
 Y_{0}^{-1}
$
for
$
Y
\in
\pi^{-1}(U)
$.
$\phi$ is a ${\rm GL}_{p}(\mathbb{C})$-invariant, i.e., for any
$
Y
\in
\pi^{-1}(U)
$
and
$
A
\in
{\rm GL}_{p}(\mathbb{C})
$,
$\phi(YA)=\phi(Y)$. By using this $\phi$, we can choose
$
 Z
 =
 \bigl(
 z^{I}
 \bigr)
 =
 \bigl(
 z^{ii'}
 \bigr)
 =
 \phi
 (Y)
$
as a local coordinate of~$U
\subset
G_{p,p+q}(\mathbb{C})
$.
A more detailed discussion of the local coordinates of $G_{p,p+q}(\mathbb{C})$ is given, for example, in~\cite{KN}.
Here $I=ii'$ is ``the capital letter index'', which consists of two indices $i$ and $i'$. In general,
\begin{align*}
 I = ii' = 11', \dots, 1p', \dots, q1', \dots, qp'
 \end{align*}
for
$i\in \{1,\dots,q \}$
and
$i'\in \{1',\dots,p' \}$.
For example,
$
 I
 =
 ii'
 =
 11',
 12',
 13',
 21',
 22',
 23'
$
for~${
q=2}$,
$p=3
$.
If $p=q=2$, then the other index which is not $i$ (or not $i'$) is uniquely determined when $i$
(or $i'$) is fixed. We denote by $\cancel{i}$ (or $\cancel{i}'$) the index such that it is not $i$ (or not $i'$). For example, if $I=12'$, then
$
i\cancel{i}'=11'$,
$\cancel{i}i'=22'
$
and
$
\cancel{I}
=
21'
$.
By using this notation, we can choose~\smash{$
 \bigl(
 z^{I}
 \bigr)
 :
 =
 \bigl(
 z^{I},
 z^{\cancel{I}},
 z^{i\cancel{i}'},
 z^{\cancel{i}i'}
 \bigr)
$}
as the local coordinates of $G_{2,4}(\mathbb{C})$ up to their order. The set of capital letter indices
$
 (
 \{
 I=ii'
\mid
 1
 \leq
 i
 \leq
 q,\,
 1'
 \leq
 i'
 \leq
 p'
 \},
 \leq_{C}
 )
$
can be identified with the set of ordinary indices
$
 (
 \{
 1,\dots,qp
 \},
 \leq
 )
$
as a totally ordered set. By this identification, a capital letter index can also be regarded as an ordinary index
$
 I
 =
 1,
 \dots,
 qp
$.
See Appendix~\ref{appendix_properties_Gpq} for more details.
For $p=q=2$, the set $\mathcal{I}$ of capital letter indices is defined by
$
\mathcal{I}
:=
\{
I,
\cancel{I},
i\cancel{i}',
\cancel{i}i'
\}
$
for a fixed $I$.
Note that it can also be rewritten as
$
\mathcal{I}
=
\{
11',
12',
21',
22'
\}
$.
The K\"ahler potential of $G_{p,p+q}(\mathbb{C})$ is given by
$
 \Phi
 =
 \log
 \det
 B$,
where
$
 B
 =
 \mathrm{Id}_{p}
 +
 Z^{\dagger}
 Z
$.
When we express
$
 B
 =
 (
 b_{\overline{i'}j'}
 )
$,
in terms of its element each entry $b_{\overline{i'}j'}$ is written as
\begin{align}
 b_{\overline{i'}j'}
 =
 \delta_{i'j'}
 +
 \overline{z}^{mi'}
 z^{mj'}
 =
 \delta_{i'j'}
 +
 z^{\overline{mi'}}
 z^{mj'}.
 \label{b_entry}
\end{align}
Since $B$ is regular, its inverse matrix $B^{-1}$ is given by
\smash{$
 B^{-1}
 =
 \bigl(
 b^{i'\overline{j'}}
 \bigr)
 =
 \frac{1}{
 \det
 B
 }
 \bigl(
 \widetilde{B}_{\overline{j'}i'}
 \bigr)$},
where $\widetilde{B}$ is an adjugate matrix of $B$, and
$
 \bigl(
 \widetilde{B}_{\overline{j'}i'}
 \bigr)
$
is the transpose matrix of
$
 \bigl(
 \widetilde{B}_{i'\overline{j'}}
 \bigr)
$.
By using the above $\Phi$, the K\"ahler metric $g_{I\overline{J}}$ can be expressed as
$
 g_{I\overline{J}}
 =
 \partial_{I}
 \partial_{\overline{J}}
 \Phi
$.
Here $\partial_{I}=\partial_{ii'}$ and $\partial_{\overline{J}}=\partial_{\overline{jj'}}$ are defined by
\smash{$
\partial_{I}
:=
\partial/
\partial
z^{I}
=
\partial/
\partial
z^{ii'}
$}
and
\smash{$
\partial_{\overline{J}}
:=
\partial/
\partial
\overline{z}^{J}
=
\partial/
\partial
\overline{z}^{jj'}
$}.
Some useful properties of~${G_{p,p+q}(\mathbb{C})}$ are summarized in Appendix~\ref{appendix_properties_Gpq}.

In order to give an explicit star product with separation of variables on $G_{2,4}(\mathbb{C})$ by using the construction method proposed by \cite{HS1,HS2}, it is necessary to solve the recurrence relations. However, if we attempt to solve the recurrence relations sequentially order by order, we need to solve a system of linear equations for each order. In addition, the recurrence relations contain variables given by the four partitions of order $n$. The number of variables increases combinatorially with increasing $n$. Therefore, there has been no formula for the general term of the recurrence relations. In this work, we derive a method that
determines the general term of the recurrence relations for $G_{2,4}(\mathbb{C})$ without solving the system of linear equations. Furthermore, we~explicitly determine a star product with separation of variables on $G_{2,4}(\mathbb{C})$.
In other words, we obtain a deformation quantization with separation of variables for $G_{2,4}(\mathbb{C})$ based on the solution of the recurrence relations given by~\cite{HS1,HS2}. The main theorem of this paper for an explicit star product with separation of variables on $G_{2,4}(\mathbb{C})$ is as follows.

\begin{Theorem}[main result]\label{thm_star_prd_G2_4}
For $f,g\in C^{\infty}(G_{2,4}(\mathbb{C}))$, a star product with separation of variables on $G_{2,4}(\mathbb{C})$ is given by
\begin{align}
 f \ast g
={}&
 \sum_{n=0}^{\infty}
 \sum_{
 \substack{
 J_{i}
 \in
\{
 J_{i}
\}_{n}\\
 Y_{i}
 \in
\{
 Y_{i}
\}_{n}
 }
 }
 \sum_{
 \substack{
 k_{i}=1\\
 k_{i}
 \in
\{
 k_{i}
\}_{n}
 }
 }^{2}
 \left(
 \prod_{l=1}^{n}
 \frac{
 g_{\overline{k_{l}j_{l}'},Y_{l}}
 \Upsilon_{
 l,
 \{
 J_{i}
 \}_{n},
 \{
 k_{i}
 \}_{n}
 }
 }{\tau_{l}}
 \right)
 \left(
 \prod_{
 S
 \in
 \mathcal{I}
 }
 \prod_{r=1}^{n}
 \theta
 \left(
 \sum_{m=1}^{r-1}
 d_{
 S,
 J_{m},
 k_{m}
 }
 \right)
 \right)
 \nonumber
 \\
 &
 \times
 \bigl(
 D^{
 \sum_{m=1}^{n}
 \vec{e}_{{Y_{m}}}
 }
 f
 \bigr)
 \bigl(
 D^{
 \sum_{
 X
 \in
 \mathcal{I}
 }
 \sum_{m=1}^{n}
 d_{
 X,
 J_{m},
 k_{m}
 }
 \vec{e}_{X}^{\ast}
 }
 g
 \bigr),
 \label{star_prd_G2_4}
\end{align}
where
\begin{gather*}
 \Upsilon_{
 l,
 \{
 J_{i}
 \}_{n},
 \{
 k_{i}
 \}_{n}
 }
 :=
 \frac{
 \tau_{l}
 \delta_{j_{l}k_{l}}
 +
 1
 +
 \sum_{m=1}^{l}
 d_{
 m,
 j_{l}\cancel{j_{l}}',
 J_{m},
 k_{m}
 }
 }{
 l
 (
 \tau_{l}+1
 )
 +
 2
 \bigl\{
 \sum_{m=1}^{l}
 (
 \delta_{IJ_{m}}
 +
 \delta_{\cancel{i}i',J_{m}}
 )
 \bigr\}
 \bigl\{
 \sum_{m=1}^{l}
 (
 \delta_{\cancel{I}J_{m}}
 +
 \delta_{i\cancel{i}',J_{m}}
 )
 \bigr\}
 },
 \\
 d_{
 S,
 J_{m},
 k_{m}
 }
 :=
 \delta_{
 S,
 J_{m}
 }
 +
 \delta_{\cancel{j_{m}}k_{m}}
 (
 \delta_{
 S,
 \cancel{J_{m}}
 }
 -
 \delta_{
 S,
 j_{m}\cancel{j_{m}}'
 }
 )
\end{gather*}
for $l,r=1,\dots,n$,
$
\tau_{l}
 :=
 1
 -
 l
 +
 \frac{1}{\hbar}
$,
\smash{$
 \sum_{
 J_{i}
 \in
 \{
 J_{i}
 \}_{n}
 }
 :=
 \sum_{
 J_{1}
 \in
 \mathcal{I}
 }
 \cdots
 \sum_{
 J_{n}
 \in
 \mathcal{I}
 }
$},
\smash{$
 \sum_{
 Y_{i}
 \in
 \{
 Y_{i}
 \}_{n}
 }
 :=
 \sum_{
 Y_{1}
 \in
 \mathcal{I}
 }
 \cdots
 \sum_{
 Y_{n}
 \in
 \mathcal{I}
 }
$},
\smash{$
 \sum_{
 \substack{
 k_{i}=1
 \\
 k_{i}
 \in
 \{
 k_{i}
 \}_{n}
 }
 }^{2}
 :=
 \sum_{k_{1}=1}^{2}
 \cdots
 \sum_{k_{n}=1}^{2}
$},
and
$\theta\colon \mathbb{R}\to\{0,1\}$
is the step function defined by
\begin{align*}
 \theta(x)
 :=
 \begin{cases}
 1, & x\geq0,
 \\
 0, & x<0.
 \end{cases}
\end{align*}
\end{Theorem}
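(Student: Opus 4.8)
The plan is to show that the closed form \eqref{star_prd_G2_4} solves the recurrence relations of Hara and Sako for $G_{2,4}(\mathbb{C})$; since Theorem~\ref{thm_Kara_star_opr} guarantees a unique star product with separation of variables, exhibiting a single expression that satisfies these relations is enough. The separation-of-variables property is already manifest from the structure of \eqref{star_prd_G2_4}: in every term with $n\geq 1$ the operators acting on $f$ are of type $D^{\vec{e}_{Y_m}}$ (built from $D^{k}=g^{k\overline{l}}\partial_{\overline{l}}$), while those acting on $g$ are of the conjugate type $D^{\vec{e}_{X}^{\ast}}$ (built from $D^{\overline{k}}=g^{\overline{k}l}\partial_{l}$). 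Hence a holomorphic function placed in the $f$-slot is annihilated by $\partial_{\overline{l}}$ and an anti-holomorphic function in the $g$-slot by $\partial_{l}$, so all contributions with $n\geq 1$ vanish and only the $n=0$ term $fg$ remains; this is precisely conditions $(1)$ and $(2)$ of Definition~\ref{sep_of_bar_starprd}.

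First I would make the recurrence relations explicit. Writing the left multiplication operator $L_{f}$ as a formal series in the operators of $\mathcal{S}$ and imposing condition $(1)$ of Theorem~\ref{thm_Kara_star_opr}, namely $[L_{f},R_{\partial_{\overline{l}}\Phi}]=0$ with $R_{\partial_{\overline{l}}\Phi}=\partial_{\overline{l}}\Phi+\hbar\,\partial_{\overline{l}}$, and expanding order by order in $\hbar$, one obtains a linear system expressing the order-$n$ coefficient in terms of the lower-order ones. With the K\"ahler potential $\Phi=\log\det B$, $B=\mathrm{Id}_{2}+Z^{\dagger}Z$, and the entries \eqref{b_entry}, the geometric input reduces to derivatives of $b_{\overline{i'}j'}$ and of the inverse metric. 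For $G_{2,4}(\mathbb{C})$ the order-$n$ unknowns are labelled by the four capital indices of $\mathcal{I}$, equivalently by four partitions of $n$, and their number grows combinatorially; this is exactly the obstacle that has so far blocked a closed-form solution.

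The central step is to recast this recurrence on a Fock space. I would introduce creation and annihilation operators whose occupation numbers record the multi-index exponents of $D^{\vec{\alpha}}$ and $D^{\vec{\beta}^{\ast}}$, so that the metric data enter as matrix elements between number states and the commutator condition becomes a ladder relation shifting one occupation number at a time. Solving this simplified relation by iterating the ladder from the vacuum should reproduce the coefficients of \eqref{star_prd_G2_4}: the denominators $\tau_{l}$ and the quantities $\Upsilon_{l,\{J_{i}\}_{n},\{k_{i}\}_{n}}$ arise as the normalization accumulated at each rung, the sums over $J_{i}$, $Y_{i}$, $k_{i}$ record the order in which the modes are raised, and the step functions $\theta(\cdot)$ enforce non-negativity of the occupation numbers---equivalently the convention that $D^{\vec{\beta}^{\ast}}:=0$ whenever a component is negative.

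Finally I would substitute \eqref{star_prd_G2_4} back into the simplified recurrence, verify it rung by rung, and run an induction on $n$ to recover the original Hara--Sako relations at every order; the uniqueness in Theorem~\ref{thm_Kara_star_opr} then identifies \eqref{star_prd_G2_4} with the sought star product. The hardest part is the Fock-space reformulation itself: one must choose the operators so that the four-partition bookkeeping of $G_{2,4}(\mathbb{C})$ linearizes into a tractable ladder recurrence, and then prove that the product of accumulated weights equals \emph{exactly} the coefficient displayed in \eqref{star_prd_G2_4}. Matching the ladder output to the explicit factors $g_{\overline{k_{l}j_{l}'},Y_{l}}\,\Upsilon_{l,\dots}/\tau_{l}$ together with the $\theta$-constraints is the combinatorial heart of the argument.
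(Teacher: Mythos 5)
Your overall strategy --- recurrence relations, a Fock-space reformulation with creation/annihilation operators, iteration from the vacuum, and back-substitution --- is the same route the paper takes, and your observation that separation of variables is manifest from the structure of \eqref{star_prd_G2_4} is correct. However, there is a genuine gap exactly at the point you defer to ``the combinatorial heart of the argument,'' and it is not located where you think it is. The recurrence relations for $G_{2,4}(\mathbb{C})$ (equation \eqref{rec_rel_Grassmann_2_4_slash}, obtained by specializing the Hara--Sako relations \eqref{HS_recrel}) do \emph{not} express an order-$n$ coefficient in terms of order-$(n-1)$ data: each equation couples the two distinct order-$n$ unknowns \smash{$T^{n}_{\vec{\alpha}_{n},\vec{\beta}_{n}^{\ast}}$} and \smash{$T^{n}_{\vec{\alpha}_{n},\vec{\beta}_{n}^{\ast}-\vec{e}_{I}^{\ast}+\vec{e}_{i\cancel{i}'}^{\ast}+\vec{e}_{\cancel{i}i'}^{\ast}-\vec{e}_{\cancel{I}}^{\ast}}$}. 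No choice of ladder operators turns this, as it stands, into a recursion that can be iterated rung by rung; the Fock space is pure bookkeeping and inherits the coupling, so hoping that ``the Fock-space reformulation itself'' will linearize the four-partition bookkeeping misplaces the difficulty.

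The missing idea is elementary linear algebra performed \emph{before} any Fock space appears: write the same recurrence \eqref{rec_rel_Grassmann_2_4_slash} with the fixed index $I$ replaced by $\cancel{i}i'$, evaluate it at the shifted multi-index, observe that the two equations close on the same pair of unknowns, and invert the $2\times2$ coefficient matrix, whose determinant $\tau_{n}\bigl(\tau_{n}+\beta_{\cancel{I}}^{n}+\beta_{i\cancel{i}'}^{n}+1\bigr)$ is invertible as a formal power series in $\hbar$ (this is Proposition~\ref{Prop_G24_relation}). Since the resulting relation \eqref{G24_n_relation} still carries the prefactor $\beta_{I}^{n}$, which may vanish, one must additionally sum over $I\in\mathcal{I}$ and use the identity $\sum_{J\in\mathcal{I}}\beta_{J}^{n}\bigl(\tau_{n}+\beta_{\cancel{J}}^{n}+\beta_{j\cancel{j}'}^{n}+1\bigr)=n(\tau_{n}+1)+2\bigl(\beta_{I}^{n}+\beta_{\cancel{i}i'}^{n}\bigr)\bigl(\beta_{\cancel{I}}^{n}+\beta_{i\cancel{i}'}^{n}\bigr)$ to obtain a formula determining \smash{$T^{n}_{\vec{\alpha}_{n},\vec{\beta}_{n}^{\ast}}$} outright from order-$(n-1)$ coefficients (Proposition~\ref{Cor_G24_relation_sum}); only then does the ladder iteration you describe produce the weights $g_{\overline{k_{l}j_{l}'},Y_{l}}\Upsilon_{l,\{J_{i}\}_{n},\{k_{i}\}_{n}}/\tau_{l}$ and the $\theta$-constraints. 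A second, smaller lacuna: your final appeal to the uniqueness in Theorem~\ref{thm_Kara_star_opr} would require first showing that your closed form defines a product satisfying Karabegov's three conditions, which verifying the recurrences does not directly give. The cleaner logic is the paper's: a star product with separation of variables whose coefficients satisfy \eqref{rec_rel_Grassmann_2_4_slash} exists by Theorem~\ref{theorem:star_prd}; those coefficients necessarily satisfy the de-coupled relation; that relation has a unique solution obtained by iteration from $T^{0}_{\vec{0},\vec{0}^{\ast}}=1$; hence the explicit solution \emph{is} the star product.
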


It may appear that
$\Upsilon_{l,\{J_{i}\}_{n},\{k_{i}\}_{n}}$
depends on the capital letter index $I\in\mathcal{I}$. Note, however, that in fact, it does not depend on the choice of $I\in\mathcal{I}$. Proposition~\ref{prop_app_sym_fct} stated in Appendix~\ref{appendix_symmetry_function} guarantees this $I$-independence. From this fact, the star product \eqref{star_prd_G2_4} is the expression which is independent of $I\in\mathcal{I}$.

This paper is organized mainly into four sections and three appendices. In Section~\ref{G24_review}, we~review the previous works by
\cite{HS1,HS2} related to a deformation quantization with separation of variables for a locally symmetric K\"ahler manifold. In Section~\ref{G24_DQ_sep_var}, we give the solvable recurrence relations which gives a star product with separation of variables for general $G_{p,p+q}(\mathbb{C})$. We focus on the simplest complex Grassmannian $G_{2,4}(\mathbb{C})$ which is not $\mathbb{C}P^{N}$. We give the solution of the recurrence relations for $G_{2,4}(\mathbb{C})$ by using a linear operator on a Fock space. Furthermore, we construct an explicit star product with separation of variables on~$G_{2,4}(\mathbb{C})$ from the solution. In Section~\ref{G24_sol_n012}, we concretely give the explicit lower-order solutions of the recurrence relations for confirmation. In Section~\ref{G24_summary}, we summarize our work. In addition, we present a future work for noncommutative deformation of twistor correspondence using an explicit star product on~$G_{2,4}(\mathbb{C})$.
In Appendix~\ref{appendix_properties_Gpq}, we summarize useful properties of $G_{p,p+q}(\mathbb{C})$. In Appendix~\ref{appendix_symmetry_function}, we denote a property of symmetric functions whose variables are capital letter indices. By using this property, we state that some functions and operators appearing in this paper do not depend on the choice of a capital letter index $I$. Finally, we show that the obtained star product on~$G_{2,4}(\mathbb{C})$ is the expression which is independent of $I$. In Appendix~\ref{appendix_coeff_T}, we show that the explicit form of solutions is actually recovered from a linear operator.\looseness=1

\section[Construction method of deformation quantization for locally symmetric Kahler manifolds]{Construction method of deformation quantization\\ for locally symmetric K\"ahler manifolds}
\label{G24_review}

A general definition of quantization was first proposed by Berezin \cite{Bere1,Bere2}. In particular, Berezin also constructed the quantization of K\"ahler manifolds in the case of phase space by using symbol algebras. Bordemann et al.\ also studied the quantization of K\"ahler manifolds via Toeplitz quantization \cite{BMS}. For the case of compact K\"ahler manifolds, its Berezin--Toeplitz quantization was studied by Karabegov and Schlichenmaier \cite{KS1,Schli1,Schli3,Schli2,Schli4,Schli5}. See the review~\cite{Schli6,Schli7} summarized by Schlichenmaier for their previous works related to Berezin--Toeplitz quantization of compact K\"ahler manifolds. From other perspectives of deformation quantization, deformation quantizations for K\"ahler manifolds were provided by Moreno~\cite{Mor1,Mor2}, Omori--Maeda--Miyazaki--Yoshioka~\cite{OMMY} and Reshetikhin--Takhtajan~\cite{RT}. It is also known that the relations between Berezin quantization and deformation quantization were studied by Cahen--Gutt--Rawnsley \cite{CGR1,CGR2,CGR3,CGR4}. As a different approach to Moreno, Omori--Maeda--Miyazaki--Yoshioka and Reshetikhin--Takhtajan, a deformation quantization with separation of variables for K\"ahler manifolds was proposed by Karabegov \cite{Kara2,Kara1}. The formula for Karabegov's deformation quantization given by interpreting the graph as a bidifferential operator was proposed by Gammelgaard~\cite{Gam}. After that, a deformation quantization with separation of variables for locally symmetric K\"ahler manifolds was studied by one of the authors of this paper (A.~Sako), Suzuki, and Umetsu \cite{SSU1,SSU2} and Hara and Sako \cite{HS1,HS2}, inspired by the construction method by Karabegov. Deformation quantization of K\"ahler manifolds is applied for modern physics. Using Karabegov's deformation quantization, Fock representations on noncommutative K\"ahler manifolds constructed by Sako, Suzuki and Umetsu as a recipe to construct gauge theories on noncommutative K\"ahler manifolds \cite{SSU1,SSU2,SU1,SU2,SU3}. For the case of complex Grassmannians, their Fock representations are studied by authors \cite{OS_Gpq_F}. Gauge theories on noncommutative K\"ahler manifolds were studied by Maeda, Sako, Suzuki and Umetsu \cite{MSSU,Sako}. In particular, for~$\mathbb{C}P^{N}$, Sako, Suzuki and Umetsu analyzed exact solutions corresponding to Bogomol'ny equation \cite{SSU3}. As an application to solitons, scalar noncommutative multi-solitons in noncommutative scalar field theory on K\"ahler manifolds were constructed and analyzed their stability conditions for homogeneous ones by Spradlin--Volovich~\cite{SV}.

In this section, we review the construction method of a deformation quantization with separation of variables for locally symmetric K\"ahler manifolds proposed by Hara and Sako~\cite{HS1,HS2}. A~K\"ahler manifold $M$ is ``locally symmetric" if its Riemann curvature tensor satisfies~$
\nabla_{\partial_{E}}
{R_{ABC}}^{D}\allowbreak
=
0
$
for
$
A,
B,
C,
D,
E
\in
\bigl\{
1,\dots,N,
\overline{1},\dots,\overline{N}
\bigr\}
$.
In this paper, we define the Riemann curvature tensor
$
R^{\nabla}
\colon
\Gamma(TM)
\times
\Gamma(TM)
\times
\Gamma(TM)
\to
\Gamma(TM)
$
on $M$ for vector fields~${
X,Y
\in
\Gamma(TM)
}$~by
\begin{align*}
R^{\nabla}
 (X,Y )
:=
\nabla_{X}
\nabla_{Y}
-
\nabla_{Y}
\nabla_{X}
-
\nabla_{[X,Y]}.
\end{align*}
We denote the component of $R^{\nabla}$ by
$
R^{\nabla}
 (
\partial_{A},
\partial_{B}
 )
\partial_{C}
:=
{R_{ABC}}^{D}
\partial_{D}
$
as a local expression. Here, we note that the notation ${R_{ABC}}^{D}$ used in this paper can be expressed by the relation
$
{R_{ABC}}^{D}
=
{\mathfrak{R}^{D}}_{CAB}
$
using the notation
\begin{align}
R^{\nabla}
 (
\partial_{A},
\partial_{B}
 )
\partial_{C}
:=
{\mathfrak{R}^{D}}_{CAB}
\partial_{D}
\label{Riem_curv_KN}
\end{align}
by Kobayashi--Nomizu \cite{KN}. Note that
${R_{ABC}}^{D}=-{\mathfrak{R}_{ABC}}^{D}$.

Let $M$ be an $N$-dimensional locally symmetric K\"ahler manifold, and $(U,\phi)$ be a chart of $M$. We now assume the following form for a star product with separation of variables on $M$ for any~${ f,g \in C^{\infty} ( U )}$:
\begin{align}
 f\ast g
 =
 L_{f}g
 :=
 \sum_{n=0}^{\infty}
 \sum_{
 \vec{\alpha}_{n},
 \vec{\beta}_{n}^{\ast}
 }
 T_{\vec{\alpha}_{n},\vec{\beta}_{n}^{\ast}}^{n}
 \bigl(
 D^{\vec{\alpha}_{n}}f
 \bigr)
 \bigl(
 D^{\vec{\beta}_{n}^{\ast}}g
 \bigr).
 \label{star_expand}
\end{align}
Here, recalling that
$
 D^{\vec{\alpha}_{n}},
 D^{\vec{\beta}_{n}^{\ast}}
$
are differential operators defined by
\begin{align*}
 &
 D^{\vec{\alpha}_{n}}
 :=
 D^{\alpha_{1}^{n}}\cdots D^{\alpha_{N}^{n}},
 \qquad
 D^{\alpha_{k}^{n}}
 :=
 (
 D^{k}
 )^{\alpha_{k}^{n}},
 \\
 &
 D^{\vec{\beta}_{n}^{\ast}}
 :=
 \overline{
 D^{\vec{\beta}_{n}}
 }
 =
 \overline{D^{\beta_{1}^{n}}}
 \cdots
 \overline{D^{\beta_{N}^{n}}},
 \qquad
 \overline{D^{\beta_{k}^{n}}}
 :=
 \bigl(
 D^{\overline{k}}
 \bigr)^{\beta_{k}^{n}},
 \\
 &
 \vec{\alpha}_{n},
 \vec{\beta}_{n}
 \in
 \left\{
 (\gamma_{1}^{n},\dots,\gamma_{N}^{n} )
 \in
 \mathbb{Z}^{N}
 \,\middle|\,
 \sum_{k=1}^{N}
 \gamma_{k}^{n}
 =
 n
 \right\},
\end{align*}
respectively.
Note that $\hbar$ is included in the coefficient \smash{$T\raisebox{1.5pt}{${}_{\vec{\alpha}_{n},\vec{\beta}_{n}^{\ast}}^{n}$}$} for $n\geq1$. The coefficients
\smash{$
 T\raisebox{1.5pt}{${}_{\vec{\alpha}_{n},\vec{\beta}_{n}^{\ast}}^{n}$}
$}
can be assumed to be covariantly constants by using the fact that $M$ is a locally symmetric space. Here, we define
\smash{$
 T\raisebox{1.5pt}{${}_{\vec{\alpha}_{n},\vec{\beta}_{n}^{\ast}}^{n}$}
 :=
 0
$}
when
$
 \vec{\alpha}_{n}
 \notin
 \mathbb{Z}_{\geq0}^{N}
$
or
\smash{$
 \vec{\beta}_{n}
 \notin
 \mathbb{Z}_{\geq0}^{N}
$}.
For $n=0,1$, it is known that~the coefficient
$
 T_{\vec{\alpha}_{n},\vec{\beta}_{n}^{\ast}}^{n}
$
are explicitly given as follows.

\begin{Proposition}[\cite{HS1,HS2}]\label{prop:covar_const_0and1}
 For a star product with separation of variables $\ast$ on $U,$
 \begin{align}
 T_{\vec{0},\vec{0}^{\ast}}^{0}
 =
 1,
 \qquad
 T_{\vec{e}_{i},\vec{e}_{j}^{\ast}}^{1}
 =
 \hbar
 g_{i\overline{j}},
 \label{init_cond_01}
 \end{align}
 \noindent
 where
 $
 \vec{e}_{i}
 =
 (\delta_{1i},\dots,\delta_{Ni} ).
 $
\end{Proposition}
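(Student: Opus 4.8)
The plan is to read off the two coefficients directly from the defining properties of $\ast$: the normalization axiom fixes $T_{\vec 0,\vec 0^\ast}^0$, while Karabegov's characterization in Theorem~\ref{thm_Kara_star_opr} fixes $T_{\vec e_i,\vec e_j^\ast}^1$. No recurrence needs to be solved.

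For the zeroth-order coefficient I would use $f\ast 1=f$, i.e.\ condition~$(2)$ of Theorem~\ref{thm_Kara_star_opr}. In the expansion \eqref{star_expand} every summand with $n\ge 1$ carries a factor $D^{\vec{\beta}_{n}^{\ast}}g$ whose multi-index satisfies $\sum_{k}\beta_{k}^{n}=n\ge 1$, so at least one operator $D^{\overline k}=g^{\overline k l}\partial_{l}$ acts; since $D^{\overline k}1=0$, all such terms vanish at $g=1$. Only the $n=0$ term survives, giving $f\ast 1=T_{\vec 0,\vec 0^\ast}^0 f$, whence $T_{\vec 0,\vec 0^\ast}^0=1$. (Alternatively one matches the $\hbar^{0}$-part of $f\ast g$ with $C_{0}(f,g)=fg$ from Definition~\ref{dq_def}.)

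For the first-order coefficient I would expand the Karabegov condition $[L_{f},R_{\partial_{\overline l}\Phi}]=0$, with $R_{\partial_{\overline l}\Phi}=\partial_{\overline l}\Phi+\hbar\partial_{\overline l}$, to first order in $\hbar$. Write $L_{f}=L_{f}^{(0)}+L_{f}^{(1)}+\cdots$, where $L_{f}^{(0)}$ is multiplication by $f$ and $L_{f}^{(1)}g=\sum_{i,j}T_{\vec e_i,\vec e_j^\ast}^1(D^{i}f)(D^{\overline j}g)$. Since the $n$-th order term of \eqref{star_expand} contributes at order $\hbar^{n}$ and $L_{f}^{(0)}$ commutes with multiplication by $\partial_{\overline l}\Phi$, the only contributions at order $\hbar$ are the commutator of multiplication by $f$ with $\hbar\partial_{\overline l}$, which equals multiplication by $-\hbar(\partial_{\overline l}f)$, and the commutator of $L_{f}^{(1)}$ with multiplication by $\partial_{\overline l}\Phi$. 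The key computation is
\begin{align*}
 \bigl[D^{\overline j},\,\partial_{\overline l}\Phi\bigr]
 =
 g^{\overline j m}\bigl(\partial_{m}\partial_{\overline l}\Phi\bigr)
 =
 g^{\overline j m}g_{m\overline l}
 =
 \delta^{\overline j}_{\overline l},
\end{align*}
where the bracket denotes the commutator with multiplication by $\partial_{\overline l}\Phi$, and I have used $g_{m\overline l}=\partial_{m}\partial_{\overline l}\Phi$ together with the inverse-metric relation. Setting the order-$\hbar$ part to zero yields $\sum_{i}T_{\vec e_i,\vec e_l^\ast}^1(D^{i}f)=\hbar\,\partial_{\overline l}f$ for all $f$; substituting $D^{i}f=g^{i\overline m}\partial_{\overline m}f$ and contracting with $g_{n\overline m}$ gives $T_{\vec e_n,\vec e_l^\ast}^1=\hbar\,g_{n\overline l}$, which is the assertion \eqref{init_cond_01}.

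I expect the first-order step to be the only real obstacle. The point to be careful about is that the antisymmetry axiom $C_{1}(f,g)-C_{1}(g,f)=\{f,g\}$ of Definition~\ref{dq_def} fixes only the antisymmetric part of $C_{1}$ and leaves its symmetric part free; what determines $T_{\vec e_i,\vec e_j^\ast}^1$ uniquely is the separation-of-variables form of the ansatz \eqref{star_expand} combined with Karabegov's condition~$(1)$, and this extra input enters precisely through the commutator identity above. Some attention is also required for the $\hbar$-bookkeeping—one must confirm that no further terms of \eqref{star_expand} contribute at order $\hbar$—but this follows at once from $L_{f}^{(0)}$ being exact multiplication and the $n$-th term being of order $\hbar^{n}$.
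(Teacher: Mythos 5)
First, a remark on the comparison itself: the paper contains no proof of Proposition~\ref{prop:covar_const_0and1} — it is imported verbatim from \cite{HS1,HS2} — so your proposal can only be measured against the derivation in those references and against the logic available in Section~\ref{G24_introduction}. Your argument for $T^{0}_{\vec{0},\vec{0}^{\ast}}=1$ is correct and complete, and your route to $T^{1}$ — extracting it from Karabegov's condition $[L_{f},R_{\partial_{\overline{l}}\Phi}]=0$ via the identity $D^{\overline{j}}\partial_{\overline{l}}\Phi=g^{\overline{j}m}g_{m\overline{l}}=\delta_{jl}$ — is precisely the mechanism used in the cited works (it is the same computation that produces the left-hand side of the recurrence \eqref{HS_recrel}).

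There is, however, a genuine gap in your $\hbar$-bookkeeping for the first-order step. You collect terms by powers of $\hbar$ under the premise that the $n$-th term of \eqref{star_expand} ``contributes at order $\hbar^{n}$''. Two problems arise. (i)~At this stage the paper only guarantees $T^{n}=O(\hbar)$ for $n\geq1$; the stronger fact $T^{n}=O(\hbar^{n})$ is a consequence of the recurrences, not an input, and $T^{n}$ is in any case not exactly proportional to $\hbar^{n}$ (compare \eqref{T2_2e_sf}, which contains $\hbar^{2}/(1-\hbar)$), so the premise can only mean $O(\hbar^{n})$. (ii)~Even granting it, matching at order $\hbar^{1}$ pins down only the $O(\hbar)$ part of $T^{1}$ and does not exclude corrections of order $\hbar^{2}$ and higher — whereas the proposition, and its later use as initial data for \eqref{G24_n_relation_sum} and in the $n=1,2$ checks, requires the exact equality $T^{1}_{\vec{e}_{i},\vec{e}_{j}^{\ast}}=\hbar g_{i\overline{j}}$. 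Both issues disappear if you drop the $\hbar$-grading and instead evaluate the operator identity on the constant function: using $L_{f}1=f$ (condition~(2) of Theorem~\ref{thm_Kara_star_opr}),
\begin{align*}
0
=
\bigl[L_{f},R_{\partial_{\overline{l}}\Phi}\bigr]1
=
f\ast\partial_{\overline{l}}\Phi
-
f\,\partial_{\overline{l}}\Phi
-
\hbar\,\partial_{\overline{l}}f,
\end{align*}
while expanding $f\ast\partial_{\overline{l}}\Phi$ by \eqref{star_expand} gives exactly $f\,\partial_{\overline{l}}\Phi+\sum_{i}T^{1}_{\vec{e}_{i},\vec{e}_{l}^{\ast}}D^{i}f$, because $D^{\overline{j}}\partial_{\overline{l}}\Phi=\delta_{jl}$ is constant and hence $D^{\vec{\beta}_{n}^{\ast}}\partial_{\overline{l}}\Phi=0$ for every $n\geq2$. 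This yields your displayed equation $\sum_{i}T^{1}_{\vec{e}_{i},\vec{e}_{l}^{\ast}}D^{i}f=\hbar\,\partial_{\overline{l}}f$ as an exact identity in $C^{\infty}(U)[\![\hbar]\!]$ with no grading assumptions, and the rest of your inversion argument then gives \eqref{init_cond_01} verbatim.
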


Proposition~\ref{prop:covar_const_0and1} states that the coefficients for $n=0,1$ are completely determined for any $N$-dimensional locally symmetric K\"ahler manifold.
In particular,
the fact ``\smash{$T_{\vec{0},\vec{0}^{\ast}}^{0}=1$}'' plays the role of the initial condition for the recurrence relations which
gives the star product to be shown later.
The following theorem proposed by 
\cite{HS1,HS2} describes the existence of a star product with separation of variables such that it is given by \eqref{star_expand}.

\begin{Theorem}[\cite{HS1,HS2}] \label{theorem:star_prd}
 Let $M$ be an $N$-dimensional locally symmetric K\"ahler manifold. For~${f,g
 \in
 C^{\infty}(M)}
 $,
 there exists locally a star product with separation of variables $\ast$ such that
 \[
 f
 \ast
 g
 =
 \sum_{n=0}^{\infty}
 \sum_{\vec{\alpha}_{n},\vec{\beta}_{n}^{\ast}}
 T_{\vec{\alpha}_{n},\vec{\beta}_{n}^{\ast}}^{n}
 \bigl(
 D^{\vec{\alpha}_{n}}f
 \bigr)
 \bigl(
 D^{\vec{\beta}_{n}^{\ast}}g
 \bigr),
 \]
 where the coefficient
 \smash{$
 T_{
 \vec{\alpha}_{n},
 \vec{\beta}_{n}^{\ast}
 }^{n}
 $}
 satisfies the following recurrence relations:
 \begin{align}
 \sum_{d=1}^{N}
 \hbar
 g_{\overline{i}d}
 T_{\vec{\alpha}_{n}-\vec{e}_{d},\vec{\beta}_{n}^{\ast}-\vec{e}_{i}^{\ast}}^{n-1}
 ={}&
 \beta_{i}^{n}
 T_{\vec{\alpha}_{n},\vec{\beta}_{n}^{\ast}}^{n}
 +
 \sum_{k=1}^{N}
 \sum_{\rho=1}^{N}
 \hbar
 \binom{\beta_{k}^{n}-\delta_{k\rho}-\delta_{ik}+2}{2}
 R_{\overline{\rho}\overline{i}}^{\overline{k}\overline{k}}
 T_{\vec{\alpha}_{n},\vec{\beta}_{n}^{\ast}-\vec{e}_{\rho}^{\ast}+2\vec{e}_{k}^{\ast}-\vec{e}_{i}^{\ast}}^{n}
 \nonumber
 \\
 &
 +
 \sum_{k=1}^{N-1}
 \sum_{l=1}^{N-k}
 \sum_{\rho=1}^{N}
 \hbar
 (
 \beta_{k}^{n}
 -
 \delta_{k\rho}
 -
 \delta_{ik}
 +
 1
 )
 (
 \beta_{k+l}^{n}
 -
 \delta_{k+l,\rho}
 -
 \delta_{i,k+l}
 +
 1
 ) \nonumber\\
 &\phantom{+}{}\times
 R_{\overline{\rho}\overline{i}}^{\overline{k+l}\overline{k}}
 T_{
 \vec{\alpha}_{n},
 \vec{\beta}_{n}^{\ast}
 -
 \vec{e}_{\rho}^{\ast}
 +
 \vec{e}_{k}^{\ast}
 +
 \vec{e}_{k+l}^{\ast}
 -
 \vec{e}_{i}^{\ast}
 }^{n}.
 \label{HS_recrel}
 \end{align}
 Here
 $
 \binom{a}{b}
 $
 is a binomial coefficient.
\end{Theorem}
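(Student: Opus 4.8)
The plan is to deduce existence from Karabegov's characterization (Theorem~\ref{thm_Kara_star_opr}) and to extract the recurrence \eqref{HS_recrel} by imposing the single operator identity $[L_{f},R_{\partial_{\overline{l}}\Phi}]=0$ on the ansatz \eqref{star_expand}. First I would note that, since $D^{k}=g^{k\overline{l}}\partial_{\overline{l}}$ differentiates anti-holomorphically and $D^{\overline{k}}=g^{\overline{k}l}\partial_{l}$ holomorphically, every term of \eqref{star_expand} with $n\geq1$ annihilates a holomorphic function on the left and an anti-holomorphic one on the right; hence the ansatz automatically meets both conditions of Definition~\ref{sep_of_bar_starprd}, and $L_{f}1=f$ is the normalization fixing $T^{0}_{\vec{0},\vec{0}^{\ast}}=1$ of Proposition~\ref{prop:covar_const_0and1}. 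By Theorem~\ref{thm_Kara_star_opr} the remaining content is the commutation condition $[L_{f},R_{\partial_{\overline{l}}\Phi}]=0$ with $R_{\partial_{\overline{l}}\Phi}=\partial_{\overline{l}}\Phi+\hbar\partial_{\overline{l}}$, together with the freedom, granted by local symmetry, to take the coefficients $T^{n}_{\vec{\alpha}_{n},\vec{\beta}_{n}^{\ast}}$ covariantly constant.

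For the main computation I would write $L_{f}=\sum_{n,\vec{\alpha}_{n},\vec{\beta}_{n}^{\ast}}T^{n}_{\vec{\alpha}_{n},\vec{\beta}_{n}^{\ast}}\bigl(D^{\vec{\alpha}_{n}}f\bigr)D^{\vec{\beta}_{n}^{\ast}}$ as multiplication operators followed by the $D^{\vec{\beta}_{n}^{\ast}}$, split $R_{\partial_{\overline{l}}\Phi}$ into its multiplication part $\partial_{\overline{l}}\Phi$ and its derivation part $\hbar\partial_{\overline{l}}$, and use the two structural commutators
\begin{gather*}
 \bigl[D^{\overline{k}},\partial_{\overline{l}}\Phi\bigr]=\delta^{\overline{k}}_{\overline{l}},
 \qquad
 \bigl[D^{\overline{k}},\partial_{\overline{l}}\bigr]=\Gamma^{\overline{k}}_{\overline{l}\,\overline{r}}\,D^{\overline{r}},
\end{gather*}
the first a constant, the second keeping the result inside the span of the $D^{\overline{r}}$ through the anti-holomorphic K\"ahler connection. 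The derivation part also acts on the coefficient functions, and since $g_{\overline{i}d}D^{d}=\partial_{\overline{i}}$, differentiating $D^{\vec{\alpha}_{n-1}}f$ raises the holomorphic-derivative order of $f$ with a factor $g_{\overline{i}d}$. Applying the Leibniz rule for commutators to $D^{\vec{\beta}_{n}^{\ast}}=\bigl(D^{\overline{1}}\bigr)^{\beta_{1}^{n}}\cdots\bigl(D^{\overline{N}}\bigr)^{\beta_{N}^{n}}$, the constant commutator produces the clean lowering $\beta_{i}^{n}D^{\vec{\beta}_{n}^{\ast}-\vec{e}_{i}^{\ast}}$.

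Next I would match the coefficient of each independent bidifferential monomial $\bigl(D^{\vec{\alpha}_{n}}f\bigr)\bigl(D^{\vec{\beta}_{n}^{\ast}-\vec{e}_{i}^{\ast}}g\bigr)$ in $[L_{f},R_{\partial_{\overline{l}}\Phi}]=0$. Three kinds of contributions survive: the order-$(n-1)$ term, lifted to order $n$ in $f$ by $\partial_{\overline{i}}=g_{\overline{i}d}D^{d}$, yields the left-hand side $\sum_{d}\hbar g_{\overline{i}d}T^{n-1}_{\vec{\alpha}_{n}-\vec{e}_{d},\vec{\beta}_{n}^{\ast}-\vec{e}_{i}^{\ast}}$; the order-$n$ term reduced by the constant commutator yields the diagonal $\beta_{i}^{n}T^{n}_{\vec{\alpha}_{n},\vec{\beta}_{n}^{\ast}}$; and the order-$n$ terms in which two of the non-commuting operators are carried through $\hbar\partial_{\overline{i}}$ produce, after the second-order reorganization $\partial\Gamma+\Gamma\Gamma$ of the connection into the Riemann tensor, the remaining curvature terms of \eqref{HS_recrel}. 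Here covariant constancy of $T^{n}$ is essential: every bare-connection and coefficient-derivative piece must cancel, leaving only the covariant contractions $R^{\overline{k}\,\overline{k}}_{\overline{\rho}\,\overline{i}}$ and $R^{\overline{k+l}\,\overline{k}}_{\overline{\rho}\,\overline{i}}$, while the combinatorial factors $\binom{\beta_{k}^{n}-\delta_{k\rho}-\delta_{ik}+2}{2}$ and $(\beta_{k}^{n}-\delta_{k\rho}-\delta_{ik}+1)(\beta_{k+l}^{n}-\delta_{k+l,\rho}-\delta_{i,k+l}+1)$ record the number of ways of selecting the two relevant factors from the shifted monomials $D^{\vec{\beta}_{n}^{\ast}-\vec{e}_{\rho}^{\ast}+2\vec{e}_{k}^{\ast}-\vec{e}_{i}^{\ast}}$ and $D^{\vec{\beta}_{n}^{\ast}-\vec{e}_{\rho}^{\ast}+\vec{e}_{k}^{\ast}+\vec{e}_{k+l}^{\ast}-\vec{e}_{i}^{\ast}}$. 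With the initial datum $T^{0}_{\vec{0},\vec{0}^{\ast}}=1$, the recurrence then determines every $T^{n}$, and Theorem~\ref{thm_Kara_star_opr} guarantees the system is consistent and the product associative.

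The hard part will be the bookkeeping of the non-commuting $D^{\overline{k}}$: one must control the higher commutators that entangle the Christoffel symbols with the surviving $D$-operators, show that all non-covariant pieces cancel once $T^{n}$ is covariantly constant, and verify that the second-order residue assembles precisely into the two curvature contractions of \eqref{HS_recrel} with the stated multiplicities. Establishing that the coefficients may consistently be chosen covariantly constant at every order---which is exactly where $\nabla R=0$ enters---is the crux underpinning the reduction to the closed recurrence.
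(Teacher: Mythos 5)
You should first note that this paper contains no proof of Theorem~\ref{theorem:star_prd} to compare against: the theorem is imported from \cite{HS1,HS2}, and the text only remarks that the recurrence \eqref{HS_recrel} is equivalent to \cite[equations~(6.9)]{SSU1}. Your proposal can therefore only be measured against the derivation in those references, and your strategy is indeed the same one used there: make the ansatz \eqref{star_expand} with coefficients taken covariantly constant (which is where local symmetry enters), observe that the ansatz automatically satisfies Definition~\ref{sep_of_bar_starprd} and that $L_{f}1=f$ forces $T^{0}_{\vec{0},\vec{0}^{\ast}}=1$ as in Proposition~\ref{prop:covar_const_0and1}, and then impose Karabegov's condition $[L_{f},R_{\partial_{\overline{l}}\Phi}]=0$ from Theorem~\ref{thm_Kara_star_opr} to generate a recurrence for the $T^{n}_{\vec{\alpha}_{n},\vec{\beta}_{n}^{\ast}}$. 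Your structural ingredients are correct: $\bigl[D^{\overline{k}},\partial_{\overline{l}}\Phi\bigr]=\delta^{\overline{k}}_{\overline{l}}$, $\bigl[D^{\overline{k}},\partial_{\overline{l}}\bigr]=\Gamma^{\overline{k}}_{\overline{l}\overline{r}}D^{\overline{r}}$, and $g_{\overline{i}d}D^{d}=\partial_{\overline{i}}$, and the three classes of contributions you isolate correspond exactly to the three groups of terms in \eqref{HS_recrel}.

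As a proof, however, your proposal stops precisely where the theorem has its content. The binomial factor $\binom{\beta_{k}^{n}-\delta_{k\rho}-\delta_{ik}+2}{2}$, the product factor in the third term, and the claim that all bare-connection pieces cancel leaving only the two contractions $R_{\overline{\rho}\overline{i}}^{\overline{k}\overline{k}}$ and $R_{\overline{\rho}\overline{i}}^{\overline{k+l}\overline{k}}$ are asserted (``record the number of ways'', ``assembles precisely'') rather than derived. Likewise, the ordering problem among the non-commuting $D^{\overline{k}}$ --- so that, e.g., $\partial_{\overline{i}}D^{\vec{\alpha}_{n-1}}$ really yields $g_{\overline{i}d}D^{\vec{\alpha}_{n-1}+\vec{e}_{d}}$ only up to commutator corrections that must be tracked --- and the consistency of choosing every $T^{n}_{\vec{\alpha}_{n},\vec{\beta}_{n}^{\ast}}$ covariantly constant order by order are flagged as ``the hard part'' but not carried out. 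Those computations constitute the bulk of the argument in \cite{HS1,HS2,SSU1}; without them the recurrence \eqref{HS_recrel}, with its precise multiplicities and index shifts, is not established. In short: the plan is the right one and matches the cited works, but the decisive calculation is missing, so the proposal is an outline rather than a proof.
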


Note that these recurrence relations in Theorem~\ref{theorem:star_prd} are equivalent to the \cite[equations~(6.9)]{SSU1}. To obtain an explicit star product with separation of variables on any $N$-dimensional locally symmetric K\"ahler manifold, we have to solve the recurrence relations in Theorem~\ref{theorem:star_prd} with the initial condition
\smash{$
T_{\vec{0},\vec{0}^{\ast}}^{0}
=
1
$}
(in \eqref{init_cond_01} in Proposition~\ref{prop:covar_const_0and1}). However, to determine the general term~\smash{$
 T_{\vec{\alpha}_{n},\vec{\beta}_{n}^{\ast}}^{n}
$}
satisfying this system of recurrence relations is not easy, except for the one-dimensional case.
For 
arbitrary $N$-dimensional locally symmetric K\"ahler manifold, explicit formulae have been obtained by Hara and Sako \cite{HS1,HS2} for $N=1$ and the authors of this paper \cite{OS_Varna,OS_jrnl} for $N=2$.

\section[Deformation quantization with separation of variables for G\_\{2,4\}(C)]{Deformation quantization with separation of variables \\ for $\boldsymbol{G_{2,4}(\mathbb{C})}$}
\label{G24_DQ_sep_var}

In Section~\ref{Gpq_recrel_general}, we give the recurrence relations for general
$ G_{p,p+q}(\mathbb{C})$ by using the recurrence relations \eqref{HS_recrel} given in~\cite{HS1,HS2}. We also focus on the case of $G_{2,4}(\mathbb{C})$ in Sections~\ref{G24_star_prd} and~\ref{G24_recrel_solve}. In Section~\ref{G24_star_prd}, we give another recurrence relations to solve more easily. Finally, we determine a~star product with separation of variables for $G_{2,4}(\mathbb{C})$ based on the general term of the recurrence relations obtained in Section~\ref{G24_recrel_solve}.

\subsection[Recurrence relations for G\_\{p,p+q\}(C)]{Recurrence relations for $\boldsymbol{G_{p,p+q}(\mathbb{C})}$}
\label{Gpq_recrel_general}

For $ G_{p,p+q}(\mathbb{C})$,
we obtain the recurrence relations that give a star product from Theorem~\ref{theorem:star_prd}.

\begin{Theorem}\label{theorem_rec_rel_Grassmann_general}
The recurrence relations which give a star product with separation of variables on
$ G_{p,p+q}(\mathbb{C})$
are given by
\begin{align}
 \hbar
 \sum_{D=1}^{qp}
 g_{\overline{I}D}
 T^{n-1}_{\vec{\alpha}_{n}-\vec{e}_{D},\vec{\beta}_{n}^{\ast}-\vec{e}_{I}^{\ast}}
 ={}&
 \beta_{I}^{n}
 \bigg\{
 1
 +
 \hbar
 -
 \hbar
 \bigg(
 \beta_{I}^{n}
 +
 \sum_{j'\neq i'}
 \beta_{ij'}^{n}
 +
 \sum_{j\neq i}
 \beta_{ji'}^{n}
 \bigg)
 \bigg\}
 T^{n}_{\vec{\alpha}_{n},\vec{\beta}_{n}^{\ast}}\nonumber \\
 &
 -
 \hbar
 \sum_{
 \substack{
 j\neq i
 \\
 j'\neq i'}
 }
 (
 \beta_{ij'}^{n}
 +
 1
 )
 (
 \beta_{ji'}^{n}
 +
 1
 )
 T^{n}_{\vec{\alpha}_{n},\vec{\beta}_{n}^{\ast}-\vec{e}_{J}^{\ast}+\vec{e}_{ij'}^{\ast}+\vec{e}_{ji'}^{\ast}-\vec{e}_{I}^{\ast}}
 ,
\label{rec_rel_Grassmann_general}
\end{align}
where $I=ii'$, $J=jj'$ and $D=dd'$ are capital letter indices.
\end{Theorem}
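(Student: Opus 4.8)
The plan is to derive \eqref{rec_rel_Grassmann_general} by specializing the general recurrence \eqref{HS_recrel} of Theorem~\ref{theorem:star_prd} to $M=G_{p,p+q}(\mathbb{C})$ with $N=qp$, relabelling the ordinary indices $1,\dots,qp$ by capital-letter indices $I=ii'$. Since the abstract recurrence is already established, the only model-dependent ingredient is the Riemann curvature tensor, so the central task is to insert the explicit covariantly constant components $R_{\overline{\rho}\overline{i}}^{\overline{k}\overline{l}}$ of $G_{p,p+q}(\mathbb{C})$ expressed in these coordinates. Because $G_{p,p+q}(\mathbb{C})$ is a Hermitian locally symmetric space, these components are determined by the K\"ahler metric $g_{I\overline{J}}=\partial_{I}\partial_{\overline{J}}\log\det B$ through a biquadratic formula; the first step is to record this formula, e.g.\ by evaluating $R(X,Y)Z=-[[X,Y],Z]$ on the tangent space $\mathrm{Hom}(\mathbb{C}^{p},\mathbb{C}^{q})$ and reading off components, and to observe that in capital-letter notation the result factorizes over the row index $i\in\{1,\dots,q\}$ and the column index $i'\in\{1',\dots,p'\}$. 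Concretely, the nonzero off-diagonal components are the ``crossed'' ones $R_{\overline{J}\overline{I}}^{\overline{ij'}\,\overline{ji'}}$ built from $I=ii'$ and $J=jj'$, which require distinct rows $j\neq i$ and distinct columns $j'\neq i'$, together with their diagonal degenerations.

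Second, I would substitute this curvature into the right-hand side of \eqref{HS_recrel} and organize the result by the index shift applied to $T^{n}$. The bare term $\beta_{i}^{n}T_{\vec{\alpha}_{n},\vec{\beta}_{n}^{\ast}}^{n}$ carries over unchanged (with $i\to I$) and furnishes the leading ``$1$'' inside the bracket of \eqref{rec_rel_Grassmann_general}. The first curvature sum in \eqref{HS_recrel}, weighted by $\binom{\beta_{k}^{n}-\delta_{k\rho}-\delta_{ik}+2}{2}$ and carrying the repeated upper index $\overline{k}\overline{k}$, receives contributions only from the diagonal curvature components; restricting $\rho$ and $k$ to the values permitted by the index support, these terms return $T_{\vec{\alpha}_{n},\vec{\beta}_{n}^{\ast}}^{n}$ itself and assemble into the remaining bracket factor $\hbar-\hbar\bigl(\beta_{I}^{n}+\sum_{j'\neq i'}\beta_{ij'}^{n}+\sum_{j\neq i}\beta_{ji'}^{n}\bigr)$. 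The second curvature sum, weighted by the product $(\beta_{k}^{n}-\cdots+1)(\beta_{k+l}^{n}-\cdots+1)$ and carrying the genuinely distinct upper indices $\overline{k+l}\,\overline{k}$, is exactly where the crossed components $R_{\overline{J}\overline{I}}^{\overline{ij'}\,\overline{ji'}}$ act; reindexing the ordinary sums over $k$, $l$, $\rho$ in terms of the capital-index data $\rho=J$, $\{k,k+l\}=\{ij',ji'\}$, $i=I$ converts it into $-\hbar\sum_{j\neq i,\,j'\neq i'}(\beta_{ij'}^{n}+1)(\beta_{ji'}^{n}+1)\,T_{\vec{\alpha}_{n},\vec{\beta}_{n}^{\ast}-\vec{e}_{J}^{\ast}+\vec{e}_{ij'}^{\ast}+\vec{e}_{ji'}^{\ast}-\vec{e}_{I}^{\ast}}^{n}$, which is the last line of \eqref{rec_rel_Grassmann_general}.

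The hard part will not be the substitution but the two layers of combinatorial bookkeeping: (i) pinning down the precise nonzero curvature components and their numerical coefficients in these coordinates, and (ii) translating the ordinary summation indices $k,l,\rho\in\{1,\dots,qp\}$ of \eqref{HS_recrel} into row/column capital-index data so that the binomial and product weights reproduce \emph{exactly} the coefficients displayed in \eqref{rec_rel_Grassmann_general}. In particular I must verify that the diagonal curvature contributions collect cleanly into the single bracket factor rather than leaving residual shifted terms, and that the constraints $j\neq i$ and $j'\neq i'$ in the cross term emerge intrinsically from the support of the crossed curvature (a genuine crossing needs both a distinct row and a distinct column). A useful consistency check is the limit $p=1$, where $G_{1,1+q}(\mathbb{C})=\mathbb{C}P^{q}$: then no column index $j'\neq i'$ exists, the cross term vanishes identically, and \eqref{rec_rel_Grassmann_general} should collapse to the known recurrence for $\mathbb{C}P^{q}$, providing a check on both the curvature formula and the index translation.
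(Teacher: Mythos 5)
Your overall strategy is the same as the paper's: specialize \eqref{HS_recrel} to $N=qp$, insert the explicit curvature of $G_{p,p+q}(\mathbb{C})$ (Proposition~\ref{curv_grassmann}), and reorganize the result by the index shift applied to $T^{n}$. The Lie-theoretic route to the curvature and the $p=1$ consistency check are reasonable additions. However, your bookkeeping of which curvature sum produces which terms contains a genuine error, and it sits exactly at the step you yourself flagged as needing verification. The diagonal components ${{R_{\overline{M}}}^{\overline{K}\overline{K}}}_{\overline{I}}=-2\delta_{\overline{mi'},\overline{K}}\delta_{\overline{im'},\overline{K}}$ vanish unless $K=M=I$, so the first curvature sum of \eqref{HS_recrel} contributes only $-\hbar\beta_{I}^{n}(\beta_{I}^{n}-1)T^{n}_{\vec{\alpha}_{n},\vec{\beta}_{n}^{\ast}}$, i.e., only the portion $\hbar-\hbar\beta_{I}^{n}$ of the bracket; it cannot produce the full factor $\hbar-\hbar\bigl(\beta_{I}^{n}+\sum_{j'\neq i'}\beta_{ij'}^{n}+\sum_{j\neq i}\beta_{ji'}^{n}\bigr)$ that you assign to it.

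The missing pieces $-\hbar\beta_{I}^{n}\sum_{j'\neq i'}\beta_{ij'}^{n}$ and $-\hbar\beta_{I}^{n}\sum_{j\neq i}\beta_{ji'}^{n}$ actually come from the \emph{second} curvature sum. The crossed curvature components do not require both a distinct row and a distinct column to be nonzero: they survive in the degenerate cases $M=ij'$ (with $j'\neq i'$) and $M=ji'$ (with $j\neq i$), and there the shift $-\vec{e}_{M}^{\ast}+\vec{e}_{K}^{\ast}+\vec{e}_{K+L}^{\ast}-\vec{e}_{I}^{\ast}$ cancels, producing \emph{unshifted} multiples of $T^{n}_{\vec{\alpha}_{n},\vec{\beta}_{n}^{\ast}}$ that must be absorbed into the bracket. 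This is precisely the case analysis over $M$ (cases (I)--(IV)) in the paper's proof: case (I) gives zero, cases (II) and (III) give $-\hbar\beta_{I}^{n}\beta_{ij'}^{n}T^{n}$ and $-\hbar\beta_{I}^{n}\beta_{ji'}^{n}T^{n}$, and only case (IV) gives the shifted cross term. Consequently, the restriction $j\neq i$, $j'\neq i'$ on the displayed cross term of \eqref{rec_rel_Grassmann_general} does not ``emerge intrinsically from the support of the crossed curvature'' as you claim; it arises because the degenerate crossings yield unshifted contributions that merge into the bracket. As written, your predicted verification that the diagonal contributions collect into the single bracket factor would fail, and the proof cannot close without redistributing the terms in this way.
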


\begin{proof}It is sufficient to calculate the right-hand side of \eqref{HS_recrel}.
By Proposition~\ref{curv_grassmann}, we have
\smash{$
 {{R_{\overline{M}}}^{\overline{K} \overline{K}}}_{\overline{I}}
 =
 -
 2
 \delta_{
 \overline{mi'},
 \overline{K}
 }
 \delta_{
 \overline{im'},
 \overline{K}
 }
$},
where $\delta_{IJ}=\delta_{ii',jj'}:=\delta_{ij}\delta_{i'j'}$.
Therefore, the second term on the right-hand side of \eqref{HS_recrel} is modified to
\begin{gather}
 \hbar
 \sum_{K=1}^{qp}
 \sum_{M=1}^{qp}
 \dbinom{\beta_{I}^{n}-\delta_{KM}-\delta_{IK}+2}{2}
 {{R_{\overline{M}}}^{\overline{K}\overline{K}}}_{\overline{I}}
 T^{n}_{\vec{\alpha}_{n},\vec{\beta}_{n}^{\ast}-\vec{e}_{M}^{\ast}+2\vec{e}_{K}^{\ast}-\vec{e}_{I}^{\ast}}
 \nonumber
 \\
 \qquad=
 -
 2
 \hbar
 \sum_{K=1}^{qp}
 \sum_{M=1}^{qp}
 \dbinom{\beta_{I}^{n}-\delta_{KM}-\delta_{IK}+2}{2}
 \delta_{
 \overline{mi'},
 \overline{K}
 }
 \delta_{
 \overline{im'},
 \overline{K}
 }
 T^{n}_{\vec{\alpha}_{n},\vec{\beta}_{n}^{\ast}-\vec{e}_{M}^{\ast}+2\vec{e}_{K}^{\ast}-\vec{e}_{I}^{\ast}}.
 \label{rec_rel_term2_1}
\end{gather}
Let us introduce $\psi$ as a one-to-one correspondence between two sets, the set of pairs $(i, i^{\prime})$ with $1 \le i \le q$ and $1 \le i^{\prime} \le p$, and the set of positive integers less than or equal to $pq$ by
\begin{gather*}
 \psi
 \colon\
 \{
 I
 =
 ii'
 :=
 (
 i,i'
 )
\mid
 1
 \leq
 i
 \leq
 q,\,
 1'
 \leq
 i'
 \leq
 p'
 \}
 \\
 \qquad\to
 \{
 1,
 \dots,
 p,
 p+1,
 \dots,
 2p,
 \dots,
 (q-1)p+1,
 \dots,
 qp
 \},
 \\
 I
 =
 ii'
 \mapsto
 \psi(I)
 :=
 p(i-1)
 +
 {i'}_{\mathbb{N}},
\end{gather*}
where ${i'}_{\mathbb{N}}$ is a natural number that
is mapped to $i'$ in a natural way. For example, if $i'=3'$, then~${{i'}_{\mathbb{N}}={3'}_{\mathbb{N}}=3}$. See also Appendix~\ref{appendix_properties_Gpq}.
Therefore, we regard $K$ and $M$ as ordinary ordered indices by such an identification, respectively. In addition, note that summation, addition $K+L$, and Kronecker's delta in \eqref{rec_rel_term2_1} are defined by
\begin{align*}
\sum_{K=1}^{qp}
:=
\sum_{\psi(K)=1}^{qp},
\qquad
\psi( K+L )
:=
\psi(K)+\psi(L),
\qquad
\delta_{KM}
:=
\delta_{\psi(K),\psi(M)},
\end{align*}
respectively. Since
$
 \delta_{
 \overline{mi'},
 \overline{K}
 }
 \delta_{
 \overline{im'},
 \overline{K}
 }
 =
 1
$
if and only if $K=M=I$, \eqref{rec_rel_term2_1} is expressed as
\begin{align}
 \eqref{rec_rel_term2_1}
 =
 -
 \hbar
 \beta_{I}^{n}
 (
 \beta_{I}^{n}
 -
 1
 )
 T^{n}_{\vec{\alpha}_{n},\vec{\beta}_{n}^{\ast}}.
 \label{recrel_gene_2nd}
\end{align}
We next calculate for the third term on the right-hand side of \eqref{HS_recrel}. Since the curvature is~\smash{$
 {{R_{\overline{M}}}^{\overline{K} \overline{K+L}}}_{\overline{I}}
 =
 -
 \delta_{
 \overline{im'},
 \overline{K}
 }
 \delta_{
 \overline{mi'},
 \overline{K+L}
 }
 -
 \delta_{
 \overline{im'},
 \overline{K+L}
 }
 \delta_{
 \overline{mi'},
 \overline{K}
 }
$}
from Proposition~\ref{curv_grassmann}, the third term can be rewritten as
\begin{gather}
 \hbar
 \sum_{K=1}^{qp-1}
 \sum_{L=1}^{qp-K}
 \sum_{M=1}^{qp}
 (
 \beta_{K}^{n}
 -
 \delta_{KM}
 -
 \delta_{IK}
 +
 1
 )
 (
 \beta_{K+L}^{n}
 -
 \delta_{K+L,M}
 -
 \delta_{I,K+L}
 +
 1
 )
 \nonumber
 \\
 \qquad
 \times
 {{R_{\overline{M}}}^{\overline{K} \overline{K+L}}}_{\overline{I}}
 T^{n}_{\vec{\alpha}_{n},\vec{\beta}_{n}^{\ast}-\vec{e}_{M}^{\ast}+\vec{e}_{K}^{\ast}+\vec{e}_{K+L}^{\ast}-\vec{e}_{I}^{\ast}}
 \nonumber
 \\
 \phantom{\qquad
 \times}{}=
 -
 \hbar
 \sum_{K=1}^{qp-1}
 \sum_{L=1}^{qp-K}
 \sum_{M=1}^{qp}
 (
 \beta_{K}^{n}
 -
 \delta_{KM}
 -
 \delta_{IK}
 +
 1
 )
 (
 \beta_{K+L}^{n}
 -
 \delta_{K+L,M}
 -
 \delta_{I,K+L}
 +
 1
 )
 \nonumber
 \\
 \phantom{\qquad
 \times=}{}
 \times
 (
 \delta_{
 \overline{im'},
 \overline{K}
 }
 \delta_{
 \overline{mi'},
 \overline{K+L}
 }
 +
 \delta_{
 \overline{im'},
 \overline{K+L}
 }
 \delta_{
 \overline{mi'},
 \overline{K}
 }
 )
 T^{n}_{\vec{\alpha}_{n},\vec{\beta}_{n}^{\ast}-\vec{e}_{M}^{\ast}+\vec{e}_{K}^{\ast}+\vec{e}_{K+L}^{\ast}-\vec{e}_{I}^{\ast}},
 \label{rhs_rec_rel_3rd_Grassmann}
\end{gather}
where $\sum_{L=1}^{qp-K}$ is defined by
\smash{$
\sum_{L=1}^{qp-K}
:=
\sum_{\psi(L)=1}^{qp-\psi(K)}$}.
To transform \eqref{rhs_rec_rel_3rd_Grassmann} in more detail, we divide the summation $\sum_{M=1}^{pq}$ into the following four cases:
\begin{gather*}
 (\textrm{I}) \
 M=I,
 \qquad
 (\textrm{II}) \
 M=ij',
\qquad
 j'
 \neq
 i',
 \\
 (\textrm{III}) \
 M=ji'
 j
 \neq
 i,
 \qquad
 (\textrm{IV}) \
 M=J=jj',
\qquad
 j
 \neq
 i, \
 j'
 \neq
 i'
 .
\end{gather*}
Based on the above four cases, \eqref{rhs_rec_rel_3rd_Grassmann} can be rewritten as
\begin{align*}
\eqref{rhs_rec_rel_3rd_Grassmann}
=
\gamma_{I}
+
\sum_{
j'
\neq
i'
}
\gamma_{ij'}
+
\sum_{
j
\neq
i
}
\gamma_{ji'}
+
\sum_{
\substack{
j
\neq
i
\\
j'
\neq
i'
}
}
\gamma_{J},
\end{align*}
where
\begin{gather*}
\gamma_{I}
=
 -
 2
 \hbar
 \sum_{K=1}^{qp-1}
 \sum_{L=1}^{qp-K}
 (
 \beta_{K}^{n}
 -
 2
 \delta_{IK}
 +
 1
 )
 (
 \beta_{K+L}^{n}
 -
 2
 \delta_{I,K+L}
 +
 1
 )
 \delta_{
 \overline{ii'},
 \overline{K}
 }
 \delta_{
 \overline{ii'},
 \overline{K+L}
 }\\
 \phantom{\gamma_{I}
=}{}
 \times T^{n}_{\vec{\alpha}_{n},\vec{\beta}_{n}^{\ast}-2\vec{e}_{I}^{\ast}+\vec{e}_{K}^{\ast}+\vec{e}_{K+L}^{\ast}},
\\
\gamma_{ij'}
=
 -
 \hbar
 \sum_{K=1}^{qp-1}
 \sum_{L=1}^{qp-K}
 (
 \beta_{K}^{n}
 -
 \delta_{K,ij'}
 -
 \delta_{IK}
 +
 1
 )
 (
 \beta_{K+L}^{n}
 -
 \delta_{K+L,ij'}
 -
 \delta_{I,K+L}
 +
 1
 )
 \\
\phantom{\gamma_{ij'}
=}{}
 \times
 (
 \delta_{
 \overline{ij'},
 \overline{K}
 }
 \delta_{
 \overline{ii'},
 \overline{K+L}
 }
 +
 \delta_{
 \overline{ij'},
 \overline{K+L}
 }
 \delta_{
 \overline{ii'},
 \overline{K}
 }
 )
 T^{n}_{\vec{\alpha}_{n},\vec{\beta}_{n}^{\ast}-\vec{e}_{ij'}^{\ast}
 +\vec{e}_{K}^{\ast}+\vec{e}_{K+L}^{\ast}-\vec{e}_{I}^{\ast}},
\\
\gamma_{ji'}
=
-
 \hbar
 \sum_{K=1}^{qp-1}
 \sum_{L=1}^{qp-K}
 (
 \beta_{K}^{n}
 -
 \delta_{K,ji'}
 -
 \delta_{IK}
 +
 1
 )
 (
 \beta_{K+L}^{n}
 -
 \delta_{K+L,ji'}
 -
 \delta_{I,K+L}
 +
 1
 )
 \\
\phantom{\gamma_{ji'}
=}{}
 \times
 (
 \delta_{
 \overline{ii'},
 \overline{K}
 }
 \delta_{
 \overline{ji'},
 \overline{K+L}
 }
 +
 \delta_{
 \overline{ii'},
 \overline{K+L}
 }
 \delta_{
 \overline{ji'},
 \overline{K}
 }
 )
 T^{n}_{\vec{\alpha}_{n},\vec{\beta}_{n}^{\ast}-\vec{e}_{ji'}^{\ast}+\vec{e}_{K}^{\ast}
 +\vec{e}_{K+L}^{\ast}-\vec{e}_{I}^{\ast}},
\\
\gamma_{J}
=
-
 \hbar
 \sum_{K=1}^{qp-1}
 \sum_{L=1}^{qp-K}
 (
 \beta_{K}^{n}
 -
 \delta_{KJ}
 -
 \delta_{IK}
 +
 1
 )
 (
 \beta_{K+L}^{n}
 -
 \delta_{K+L,J}
 -
 \delta_{I,K+L}
 +
 1
 ) \\
\phantom{\gamma_{J}
=}{}
 \times
 (
 \delta_{
 \overline{ij'},
 \overline{K}
 }
 \delta_{
 \overline{ji'},
 \overline{K+L}
 }
 +
 \delta_{
 \overline{ij'},
 \overline{K+L}
 }
 \delta_{
 \overline{ji'},
 \overline{K}
 }
 )
 T^{n}_{\vec{\alpha}_{n},\vec{\beta}_{n}^{\ast}-\vec{e}_{J}^{\ast}
 +\vec{e}_{K}^{\ast}+\vec{e}_{K+L}^{\ast}-\vec{e}_{I}^{\ast}},
\end{gather*}
respectively.

$(\textrm{I})$ $M=I$: In this case,
\smash{$
 \delta_{
 \overline{I},
 \overline{K}
 }
 \delta_{
 \overline{I},
 \overline{K+L}
 }
 =
 \delta_{
 \overline{ii'},
 \overline{K}
 }
 \delta_{
 \overline{ii'},
 \overline{K+L}
 }
$}
is equal to 0 since
$
 K
 \neq
 K+L
$
always holds. Hence, $\gamma_{I}=0$ when $M=I$.

 $(\textrm{II})$
 $M=ij'$
 $(
 j'
 \neq
 i'
 )$:
Since $\gamma_{ij'}$ is expressed as a summation over $K=kk'$,
$\gamma_{ij'}$ is written as~\smash{$
 \gamma_{ij'}
 =
 \sum_{K=1}^{qp-1}
 \gamma'_{ij',K}$},
where
\begin{gather}
 \gamma'_{ij',K}
 =
 -
 \hbar
 \sum_{L=1}^{qp-K}
 (
 \beta_{K}^{n}
 -
 \delta_{K,ij'}
 -
 \delta_{IK}
 +
 1
 )
 (
 \beta_{K+L}^{n}
 -
 \delta_{K+L,ij'}
 -
 \delta_{I,K+L}
 +
 1
 )
 \nonumber
 \\
\phantom{ \gamma'_{ij',K}
 =}{}
 \times
 (
 \delta_{
 \overline{ij'},
 \overline{K}
 }
 \delta_{
 \overline{ii'},
 \overline{K+L}
 }
 +
 \delta_{
 \overline{ij'},
 \overline{K+L}
 }
 \delta_{
 \overline{ii'},
 \overline{K}
 }
 )
 T^{n}_{\vec{\alpha}_{n},\vec{\beta}_{n}^{\ast}-\vec{e}_{ij'}^{\ast}+\vec{e}_{K}^{\ast}+\vec{e}_{K+L}^{\ast}-\vec{e}_{I}^{\ast}}.
\label{gamma_ij}
\end{gather}
If
$
 k
 \neq
 i
$,
then $\gamma'_{ij',K}=0$
from
$
 \delta_{
 \overline{ii'}
 \overline{K}
 }
 =
 \delta_{
 \overline{ij'},
 \overline{K}
 }
 =
 0
$.
Thus we consider the case
$
 k
 =
 i
$
as a non-zero contribution.

For
$
 K
 =
 I
 =
 ii'
$,
\smash{$
 \delta_{
 \overline{ij'},
 \overline{I}
 }
 \delta_{
 \overline{I},
 \overline{I+L}
 }
 +
 \delta_{
 \overline{ij'},
 \overline{I+L}
 }
 \delta_{
 \overline{I},
 \overline{I}
 }
 =
 \delta_{
 \overline{ij'},
 \overline{I+L}
 }
$}
from
$
 \delta_{
 \overline{I},
 \overline{I+L}
 }
 =
 0
$,
and
$
 \delta_{
 \overline{ij'},
 \overline{I+L}
 }
 =
 1
$
when
$
L
=
ij'
-
I
$.
Therefore, if $K=I$ with
$
j'
\neq
i'
$,
then
\begin{align*}
 \gamma'_{ij',I}
={}&
 -
 \hbar
 (
 \beta_{I}^{n}
 -
 \delta_{I,ij'}
 -
 \delta_{II}
 +
 1
 )
 (
 \beta_{I+ij'-I}^{n}
 -
 \delta_{I+ij'-I,ij'}
 -
 \delta_{I,I+ij'-I}
 +
 1
 )
 \nonumber
 \\
 &
 \times
 (
 \delta_{
 \overline{ij'}
 \overline{I}
 }
 \delta_{
 \overline{I},
 \overline{I+ij'-I}
 }
 +
 \delta_{
 \overline{ij'},
 \overline{I+ij'-I}
 }
 \delta_{
 \overline{I}
 \overline{I}
 }
 )
 T^{n}_{\vec{\alpha}_{n},\vec{\beta}_{n}^{\ast}-\vec{e}_{ij'}^{\ast}+\vec{e}_{I+ij'-I}^{\ast}}
 \\
={}&
 -
 \hbar
 \beta_{I}^{n}
 \beta_{ij'}^{n}
 T^{n}_{\vec{\alpha}_{n},\vec{\beta}_{n}^{\ast}}
 .
\end{align*}

For the other cases, i.e.,
$
 K
 =
 ij'
 (
 j'
 \neq
 i'
 )
$,
$
 (
 \delta_{
 \overline{ij'},
 \overline{K}
 }
 \delta_{
 \overline{ii'},
 \overline{K+L}
 }
 +
 \delta_{
 \overline{ij'},
 \overline{K+L}
 }
 \delta_{
 \overline{ii'},
 \overline{K}
 }
 )
$
in \eqref{gamma_ij} becomes
\smash{$
 \delta_{
 \overline{ij'},
 \overline{ij'}
 }
 \delta_{
 \overline{I},
 \overline{ij'+L}
 }
 +
 \delta_{
 \overline{ij'},
 \overline{ij'+L}
 }
 \delta_{
 \overline{I},
 \overline{ij'}
 }
 =
 \delta_{
 \overline{I},
 \overline{ij'+L}
 }
$}
since
$
 \delta_{
 \overline{I},
 \overline{ij'}
 }
 =
 0
$,
and
$
 \delta_{
 \overline{I},
 \overline{ij'+L}
 }
 =
 1
$
when $L=I-ij'$. Therefore, if
$
 K
 =
 ij'$
 ($
 j'
 \neq
 i'
 $),
then we have
\begin{align*}
 \gamma'_{ij',ij'}
 &
 =
 -
 \hbar
 (
 \beta_{ij'}^{n}
 -
 \delta_{ij',ij'}
 -
 \delta_{I,ij'}
 +
 1
 )
 (
 \beta_{ij'+I-ij'}^{n}
 -
 \delta_{ij'+I-ij',ij'}
 -
 \delta_{I,ij'+I-ij'}
 +
 1
 )
 \nonumber
 \\
 &
 \qquad
 \times
 (
 \delta_{
 \overline{ij'}
 \overline{ij'}
 }
 \delta_{
 \overline{I},
 \overline{ij'+I-ij'}
 }
 +
 \delta_{
 \overline{ij'},
 \overline{ij'+I-ij'}
 }
 \delta_{
 \overline{I}
 \overline{ij'}
 }
 )
 T^{n}_{\vec{\alpha}_{n},\vec{\beta}_{n}^{\ast}-\vec{e}_{ij'}^{\ast}+\vec{e}_{ij'}^{\ast}+\vec{e}_{ij'+I-ij'}^{\ast}-\vec{e}_{I}^{\ast}}
 \\
 &=
 - \hbar
 \beta_{ij'}^{n}
 \beta_{I}^{n}
 T^{n}_{\vec{\alpha}_{n},\vec{\beta}_{n}^{\ast}}.
\end{align*}
Note that $\psi(L)$ is positive and
$
j'
$
($
\neq
i'
$)
is fixed, $L$ can take either $I-ij'$ or $ij'-I$ depending on $j$. In fact, if $j'>i'$, $L$ cannot take $I-ij'$ because $\psi(I)-\psi(ij')<0$. Conversely, if $j'<i'$, it can be verified that $L$ cannot take $ij'-I$. Therefore,
\begin{align}
 \gamma_{ij'}
 =
 \sum_{K}
 \gamma'_{ij',K}
 =
 -
 \hbar
 \beta_{I}^{n}
 \beta_{ij'}^{n}
 T^{n}_{\vec{\alpha}_{n},\vec{\beta}_{n}^{\ast}}
 \label{rec_rel_gene_3rdterm_2}
\end{align}
for
any
$
j'
(
\neq
i'
)
$. In a similar way, the cases $(\textrm{III})$ and $(\textrm{IV})$ are also calculated, and it can be verified that
\begin{gather}
 \gamma_{ji'}
 =
 -
 \hbar
 \beta_{I}^{n}
 \beta_{ji'}^{n}
 T^{n}_{\vec{\alpha}_{n},\vec{\beta}_{n}^{\ast}}
 \qquad
 \textrm{for}
\quad
 j
 \neq
 i
 ,
 \label{rec_rel_gene_3rdterm_3}
 \\
 \gamma_{J}
 =
 -
 \hbar
 (
 \beta_{ij'}^{n}
 +
 1
 )
 (
 \beta_{ji'}^{n}
 +
 1
 )
 T^{n}_{\vec{\alpha}_{n},\vec{\beta}_{n}^{\ast}-\vec{e}_{I}^{\ast}-\vec{e}_{jj'}^{\ast}+\vec{e}_{ji'}^{\ast}+\vec{e}_{ij'}^{\ast}}
 \qquad
 \textrm{for}
\quad
 j
 \neq
 i,
\
 j'
 \neq
 i'
 .
 \label{rec_rel_gene_3rdterm_4}
\end{gather}
Substituting $\gamma_{I}=0$ and \eqref{rec_rel_gene_3rdterm_2}--\eqref{rec_rel_gene_3rdterm_4} into \eqref{rhs_rec_rel_3rd_Grassmann}, we obtain
\begin{align}
 \eqref{rhs_rec_rel_3rd_Grassmann}
={}&
 \sum_{
 j'
 \neq
 i'
 }
 \eqref{rec_rel_gene_3rdterm_2}
 +
 \sum_{
 j
 \neq
 i
 }
 \eqref{rec_rel_gene_3rdterm_3}
 +
 \sum_{
 j
 \neq
 i
 }
 \sum_{
 j'
 \neq
 i'
 }
 \eqref{rec_rel_gene_3rdterm_4}
 \nonumber
 \\
={}&
 -
 \hbar
 \beta_{I}^{n}
 \bigg(
 \sum_{
 j'
 \neq
 i'
 }
 \beta_{ij'}^{n}
 +
 \sum_{
 j
 \neq
 i
 }
 \beta_{ji'}^{n}
 \bigg)
 T^{n}_{\vec{\alpha}_{n},\vec{\beta}_{n}^{\ast}}\nonumber\\
 &
 -
 \hbar
 \sum_{
 j
 \neq
 i
 }
 \sum_{
 j'
 \neq
 i'
 }
 (
 \beta_{ij'}^{n}
 +
 1
 )
 (
 \beta_{ji'}^{n}
 +
 1
 )
 T^{n}_{\vec{\alpha}_{n},\vec{\beta}_{n}^{\ast}-\vec{e}_{I}^{\ast}-\vec{e}_{jj'}^{\ast}+\vec{e}_{ji'}^{\ast}+\vec{e}_{ij'}^{\ast}}.
 \label{recrel_gene_3rd}
\end{align}

Hence, the right-hand side of \eqref{HS_recrel} for $G_{p,p+q}(\mathbb{C})$ is obtained as follows:
\begin{gather}
 \beta_{I}^{n}
 T^{n}_{\vec{\alpha}_{n},\vec{\beta}_{n}^{\ast}}
 +
 \eqref{recrel_gene_2nd}
 +
 \eqref{recrel_gene_3rd}\nonumber\\
 \qquad
 =
 \beta_{I}^{n}
 T^{n}_{\vec{\alpha}_{n},\vec{\beta}_{n}^{\ast}}
 -
 \hbar
 \beta_{I}^{n}
 (
 \beta_{I}^{n}
 -
 1
 )
 T^{n}_{\vec{\alpha}_{n},\vec{\beta}_{n}^{\ast}}
 -
 \hbar
 \beta_{I}^{n}
 \bigg(
 \sum_{
 j'
 \neq
 i'
 }
 \beta_{ij'}^{n}
 +
 \sum_{
 j
 \neq
 i
 }
 \beta_{ji'}^{n}
 \bigg)
 T^{n}_{\vec{\alpha}_{n},\vec{\beta}_{n}^{\ast}}
 \nonumber
 \\
\phantom{\qquad
 =}{}
 -
 \hbar
 \sum_{
 j
 \neq
 i
 }
 \sum_{
 j'
 \neq
 i'
 }
 (
 \beta_{ij'}^{n}
 +
 1
 )
 (
 \beta_{ji'}^{n}
 +
 1
 )
 T^{n}_{\vec{\alpha}_{n},\vec{\beta}_{n}^{\ast}-\vec{e}_{I}^{\ast}-\vec{e}_{jj'}^{\ast}+\vec{e}_{ji'}^{\ast}
 +\vec{e}_{ij'}^{\ast}}
 \nonumber
 \\
\qquad
 =
 \beta_{I}^{n}
 \bigg\{
 1
 +
 \hbar
 -
 \hbar
 \bigg(
 \beta_{I}^{n}
 +
 \sum_{
 j'
 \neq
 i'
 }
 \beta_{ij'}^{n}
 +
 \sum_{
 j
 \neq
 i
 }
 \beta_{ji'}^{n}
 \bigg)
 \bigg\}
 T^{n}_{\vec{\alpha}_{n},\vec{\beta}_{n}^{\ast}}
 \nonumber
 \\
 \phantom{\qquad
 =}{}
 -
 \hbar
 \sum_{
 j
 \neq
 i
 }
 \sum_{
 j'
 \neq
 i'
 }
 (
 \beta_{ij'}^{n}
 +
 1
 )
 (
 \beta_{ji'}^{n}
 +
 1
 )
 T^{n}_{\vec{\alpha}_{n},\vec{\beta}_{n}^{\ast}-\vec{e}_{I}^{\ast}-\vec{e}_{jj'}^{\ast}+\vec{e}_{ji'}^{\ast}+\vec{e}_{ij'}^{\ast}}.
 \label{recrel_gene_right_mid}
\end{gather}
This means that we obtain \eqref{rec_rel_Grassmann_general}.
\end{proof}

Recalling that a capital letter index can be identified with an ordinary index, \eqref{rec_rel_Grassmann_general} can be rewritten as
\begin{gather}
 \hbar
 \sum_{d=1}^{q}
 \sum_{d'=1'}^{p'}
 g_{\overline{I}D}
 T^{n-1}_{\vec{\alpha}_{n}-\vec{e}_{D},\vec{\beta}_{n}^{\ast}-\vec{e}_{I}^{\ast}}
\nonumber
 \\
 \qquad=
 \beta_{I}^{n}
 \Biggl\{
 1
 +
 \hbar
 -
 \hbar
 \biggl(
 n
 -
 \sum_{
 \substack{
 j
 \neq
 i
 \\
 j'
 \neq
 i'}
 }
 \beta_{jj'}^{n}
 \biggr)
 \Biggr\}
 T^{n}_{\vec{\alpha}_{n},\vec{\beta}_{n}^{\ast}}\nonumber\\
 \phantom{ \qquad=}{}
 -
 \hbar
 \sum_{
 \substack{
 j
 \neq
 i
 \\
 j'
 \neq
 i'}
 }
 (
 \beta_{ij'}^{n}
 +
 1
 )
 (
 \beta_{ji'}^{n}
 +
 1
 )
 T^{n}_{\vec{\alpha}_{n},\vec{\beta}_{n}^{\ast}-\vec{e}_{J}^{\ast}
 +\vec{e}_{ij'}^{\ast}+\vec{e}_{ji'}^{\ast}-\vec{e}_{I}^{\ast}}.
 \label{rec_rel_Grassmann_general_capital_only}
\end{gather}
Here, we use the fact that
$
 \beta_{I}^{n}
 +
 \sum_{
 j'
 \neq
 i'
 }
 \beta_{ij'}^{n}
 +
 \sum_{
 j
 \neq
 i
 }
 \beta_{ji'}^{n}
 +
 \sum_{
 j
 \neq
 i
 }
 \sum_{
 j'
 \neq
 i'
 }
 \beta_{jj'}^{n}
 =
 n$,
since
$
\big|
\vec{\beta}_{n}^{\ast}
\big|
=
\sum_{K}
\beta_{K}^{n}
=
\sum_{k}
\sum_{k'}
\beta_{kk'}^{n}
=
n
$
in the first term on the right-hand side of \eqref{recrel_gene_right_mid}.

\subsection[Star product with separation of variables on G\_\{2,4\}(C)]{Star product with separation of variables on $\boldsymbol{G_{2,4}(\mathbb{C})}$}
\label{G24_star_prd}

If the solution of the recurrence relations for \eqref{rec_rel_Grassmann_general} (or equivalently \eqref{rec_rel_Grassmann_general_capital_only}) is determined, we can explicitly give a star product with separation of variables on
$ G_{p,p+q}(\mathbb{C})$.
However, it is not simple to solve~\eqref{rec_rel_Grassmann_general} (or equivalently~\eqref{rec_rel_Grassmann_general_capital_only}) in general. We need to solve the system of linear equations for solutions of order $n$ to determine \smash{$T^{n}_{\vec{\alpha}_{n},\vec{\beta}_{n}^{\ast}}$}, even if solutions of order $(n-1)$ are given. In addition, the number of variables in this system of linear equations increases combinatorially with increasing $n$. In this subsection, we focus on $G_{2,4}(\mathbb{C})$ and give the general term of the recurrence relations \eqref{rec_rel_Grassmann_general} for $G_{2,4}(\mathbb{C})$ by eliminating these problems. By using the general terms, we give its explicit star product with separation of variables.

For $G_{2,4}(\mathbb{C})$, its recurrence relations can be obtained immediately by considering the case $p=q=2$.
\begin{Corollary}\label{theorem_rec_rel_Grassmann_2_4}
For any $($fixed$)$ capital letter index $I\in\mathcal{I}$, the recurrence relations for
$
 G_{2,4}(\mathbb{C})
$
are given by
\begin{gather}
 \hbar
 \sum_{
 D
 \in
 \mathcal{I}
 }
 g_{\overline{I}D}
 T^{n-1}_{\vec{\alpha}_{n}-\vec{e}_{D},\vec{\beta}_{n}^{\ast}-\vec{e}_{I}^{\ast}}\nonumber\\
 \qquad
 =
 \hbar
 \beta_{I}^{n}
 (
 \tau_{n}
 +
 \beta_{\cancel{I}}^{n}
 )
 T^{n}_{\vec{\alpha}_{n},\vec{\beta}_{n}^{\ast}}
 -
 \hbar
 (
 \beta_{i\cancel{i}'}^{n}
 +
 1
 )
 (
 \beta_{\cancel{i}i'}^{n}
 +
 1
 )
 T^{n}_{\vec{\alpha}_{n},\vec{\beta}_{n}^{\ast}-\vec{e}_{I}^{\ast}+\vec{e}_{i\cancel{i}'}^{\ast}+\vec{e}_{\cancel{i}i'}^{\ast} -\vec{e}_{\cancel{I}}^{\ast}},
 \label{rec_rel_Grassmann_2_4_slash}
\end{gather}
where
$
 \tau_{n}
 :=
 1
 -
 n
 +
 \frac{1}{\hbar}
$,
$
\mathcal{I}
:=
\{
I,
\cancel{I},
i\cancel{i}',
\cancel{i}i'
\}
=
\{
11',
12',
21',
22'
\}
$.
\end{Corollary}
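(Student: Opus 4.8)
The plan is to read off Corollary~\ref{theorem_rec_rel_Grassmann_2_4} as the direct specialization of Theorem~\ref{theorem_rec_rel_Grassmann_general} to $p=q=2$, so that no new computation beyond careful index bookkeeping is required. The key structural fact I would exploit is that on $G_{2,4}(\mathbb{C})$ every complementary index is uniquely determined: once $i$ is fixed the only $j\neq i$ is $\cancel{i}$, and once $i'$ is fixed the only $j'\neq i'$ is $\cancel{i}'$. Consequently each sum appearing on the right-hand side of~\eqref{rec_rel_Grassmann_general} collapses to a single term. Explicitly, $\sum_{j'\neq i'}\beta_{ij'}^{n}$ reduces to $\beta_{i\cancel{i}'}^{n}$, the sum $\sum_{j\neq i}\beta_{ji'}^{n}$ reduces to $\beta_{\cancel{i}i'}^{n}$, and the double sum $\sum_{j\neq i,\,j'\neq i'}$ contains only the term with $j=\cancel{i}$ and $j'=\cancel{i}'$, for which $J=jj'=\cancel{i}\cancel{i}'=\cancel{I}$, $ij'=i\cancel{i}'$, and $ji'=\cancel{i}i'$.

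Substituting these identifications, I would handle the three pieces of~\eqref{rec_rel_Grassmann_general} separately. The left-hand side is unchanged apart from rewriting $\sum_{D=1}^{qp}$ as $\sum_{D\in\mathcal{I}}$. The last line becomes $-\hbar(\beta_{i\cancel{i}'}^{n}+1)(\beta_{\cancel{i}i'}^{n}+1)$ multiplying the coefficient $T^{n}_{\vec{\alpha}_{n},\vec{\beta}_{n}^{\ast}-\vec{e}_{\cancel{I}}^{\ast}+\vec{e}_{i\cancel{i}'}^{\ast}+\vec{e}_{\cancel{i}i'}^{\ast}-\vec{e}_{I}^{\ast}}$, which agrees with the second term of~\eqref{rec_rel_Grassmann_2_4_slash} after reordering the shift vectors in the subscript.

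The only genuine (though still elementary) step is recasting the coefficient of $T^{n}_{\vec{\alpha}_{n},\vec{\beta}_{n}^{\ast}}$ into the compact form $\hbar\beta_{I}^{n}(\tau_{n}+\beta_{\cancel{I}}^{n})$. After the collapse of sums the first line reads $\beta_{I}^{n}\{1+\hbar-\hbar(\beta_{I}^{n}+\beta_{i\cancel{i}'}^{n}+\beta_{\cancel{i}i'}^{n})\}$, and here I would invoke the total-degree constraint $|\vec{\beta}_{n}^{\ast}|=\sum_{K\in\mathcal{I}}\beta_{K}^{n}=n$ noted after~\eqref{rec_rel_Grassmann_general_capital_only}. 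Since $\mathcal{I}=\{I,\cancel{I},i\cancel{i}',\cancel{i}i'\}$, this gives $\beta_{I}^{n}+\beta_{i\cancel{i}'}^{n}+\beta_{\cancel{i}i'}^{n}=n-\beta_{\cancel{I}}^{n}$, so the coefficient equals $\beta_{I}^{n}\{1+\hbar-n\hbar+\hbar\beta_{\cancel{I}}^{n}\}$. Finally, using $\hbar\tau_{n}=\hbar\bigl(1-n+\frac{1}{\hbar}\bigr)=1+\hbar-n\hbar$ factors this as $\hbar\beta_{I}^{n}(\tau_{n}+\beta_{\cancel{I}}^{n})$, matching the first term of~\eqref{rec_rel_Grassmann_2_4_slash}. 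I do not expect any real obstacle: the entire content is already present in Theorem~\ref{theorem_rec_rel_Grassmann_general}, and the only care needed is tracking the complementary indices and applying the degree identity $|\vec{\beta}_{n}^{\ast}|=n$.
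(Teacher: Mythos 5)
Your proposal is correct and follows essentially the same route as the paper: the Corollary is obtained by specializing Theorem~\ref{theorem_rec_rel_Grassmann_general} to $p=q=2$, collapsing each sum over complementary indices to its single term, and using the degree identity $\sum_{K\in\mathcal{I}}\beta_{K}^{n}=n$ (already invoked in the paper to pass to~\eqref{rec_rel_Grassmann_general_capital_only}) together with $\hbar\tau_{n}=1+\hbar-n\hbar$ to recast the coefficient of \smash{$T^{n}_{\vec{\alpha}_{n},\vec{\beta}_{n}^{\ast}}$} as $\hbar\beta_{I}^{n}(\tau_{n}+\beta_{\cancel{I}}^{n})$. No gap remains; your index bookkeeping matches the paper's intended ``immediate'' derivation.
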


It is not a waste to note the notation of capital letter indices again. In $G_{2,4}(\mathbb{C})$ case, $\cancel{i}$ (or~$\cancel{i}'$) is the other index which is not $i$ (or not $i'$). $\cancel{i}$ (or $\cancel{i}'$) is uniquely determined when $i$ (or $i'$) is fixed. For example, if $I=11'$, then $i\cancel{i}'=12'$, $\cancel{i}i'=21'$, and $\cancel{I}=22'$. $I=ii'$ may take $12'$, $21'$ and $22'$ as well as $11'$.

To determine the explicit star product on $G_{2,4}(\mathbb{C})$, it is necessary to find the general term \smash{$T\raisebox{2pt}{${}^{n}_{\vec{\alpha}_{n},\vec{\beta}_{n}^{\ast}}$}$} as the solution of the recurrence relations \eqref{rec_rel_Grassmann_2_4_slash}. It is not easy to find the solution \smash{$T\raisebox{2pt}{${}^{n}_{\vec{\alpha}_{n},\vec{\beta}_{n}^{\ast}}$}$} of~\eqref{rec_rel_Grassmann_2_4_slash} by direct calculation, because \eqref{rec_rel_Grassmann_2_4_slash} contains another coefficient
\[
T^{n}_{\vec{\alpha}_{n},\vec{\beta}_{n}^{\ast}-\vec{e}_{I}^{\ast}
+\vec{e}_{i\cancel{i}'}^{\ast}+\vec{e}_{\cancel{i}i'}^{\ast}-\vec{e}_{\cancel{I}}^{\ast}}
\]
 on the right-hand side. On the other hand, by introducing another recurrence relation for \smash{$T\raisebox{2pt}{${}^{n}_{\vec{\alpha}_{n},\vec{\beta}_{n}^{\ast}-\vec{e}_{I}^{\ast}+\vec{e}_{i\cancel{i}'}^{\ast}
+\vec{e}_{\cancel{i}i'}^{\ast}-\vec{e}_{\cancel{I}}^{\ast}}$}$}, we can obtain the recurrence relations equivalent to \eqref{rec_rel_Grassmann_2_4_slash}.

\begin{Proposition}\label{Prop_G24_relation}
The recurrence relations \eqref{rec_rel_Grassmann_2_4_slash} are equivalent to the following recurrence relations:
\begin{align}
 \beta_{I}^{n}
 T^{n}_{\vec{\alpha}_{n},\vec{\beta}_{n}^{\ast}}
 &=
 \frac{
 \sum_{
 D
 \in
 \mathcal{I}
 }
 \bigl\{
 (
 \tau_{n}
 +
 \beta_{i\cancel{i}'}^{n}
 +
 1
 )
 g_{\overline{I}D}
 T^{n-1}_{
 \vec{\alpha}_{n}
 -
 \vec{e}_{D},
 \vec{\beta}_{n}^{\ast}
 -
 \vec{e}_{I}^{\ast}
 }
 +
 (
 \beta_{i\cancel{i}'}^{n}
 +
 1
 )
 g_{\overline{\cancel{i}i'},D}
 T^{n-1}_{
 \vec{\alpha}_{n}
 -
 \vec{e}_{D},
 \vec{\beta}_{n}^{\ast}
 -
 \vec{e}_{I}^{\ast}
 -
 \vec{e}_{\cancel{I}}^{\ast}
 +
 \vec{e}_{i\cancel{i}'}^{\ast}
 }
 \bigr\}
 }{
 \tau_{n}
 (
 \tau_{n}
 +
 \beta_{\cancel{I}}^{n}
 +
 \beta_{i\cancel{i}'}^{n}
 +
 1
 )
 }
 \nonumber
 \\
 &=
 \frac{
 \sum_{
 D
 \in
 \mathcal{I}
 }
 \sum_{k=1}^{2}
 (
 \tau_{n}
 \delta_{ik}
 +
 \beta_{i\cancel{i}'}^{n}
 +
 1
 )
 g_{\overline{ki'},D}
 T^{n-1}_{
 \vec{\alpha}_{n}
 -
 \vec{e}_{D},
 \vec{\beta}_{n}^{\ast}
 -
 \vec{e}_{I}^{\ast}
 -
 \delta_{\cancel{i}k}
 \bigl(
 \vec{e}_{\cancel{I}}^{\ast}
 -
 \vec{e}_{i\cancel{i}'}^{\ast}
 \bigr)
 }
 }{
 \tau_{n}
 (
 \tau_{n}
 +
 \beta_{\cancel{I}}^{n}
 +
 \beta_{i\cancel{i}'}^{n}
 +
 1
 )
 }.
\label{G24_n_relation}
\end{align}
\end{Proposition}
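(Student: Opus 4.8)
The plan is to treat the single relation \eqref{rec_rel_Grassmann_2_4_slash} as one linear equation between the two unknowns $A:=T^{n}_{\vec{\alpha}_{n},\vec{\beta}_{n}^{\ast}}$ and $B:=T^{n}_{\vec{\alpha}_{n},\vec{\beta}_{n}^{\ast}-\vec{e}_{I}^{\ast}+\vec{e}_{i\cancel{i}'}^{\ast}+\vec{e}_{\cancel{i}i'}^{\ast}-\vec{e}_{\cancel{I}}^{\ast}}$, to manufacture a second independent equation relating the same pair, and then to eliminate $B$ from the resulting $2\times2$ system. Writing $\vec{s}:=-\vec{e}_{I}+\vec{e}_{i\cancel{i}'}+\vec{e}_{\cancel{i}i'}-\vec{e}_{\cancel{I}}$, the coefficient $B$ sits at the multi-index $\vec{\beta}_{n}^{\ast}+\vec{s}^{\ast}$, so dividing \eqref{rec_rel_Grassmann_2_4_slash} by $\hbar$ records the first equation
\[
\sum_{D\in\mathcal{I}} g_{\overline{I}D}\,T^{n-1}_{\vec{\alpha}_{n}-\vec{e}_{D},\vec{\beta}_{n}^{\ast}-\vec{e}_{I}^{\ast}}=\beta_{I}^{n}(\tau_{n}+\beta_{\cancel{I}}^{n})A-(\beta_{i\cancel{i}'}^{n}+1)(\beta_{\cancel{i}i'}^{n}+1)B .
\]

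The second equation exploits that Corollary~\ref{theorem_rec_rel_Grassmann_2_4} holds for \emph{every} base index in $\mathcal{I}$. I would apply \eqref{rec_rel_Grassmann_2_4_slash} with the base index $I$ replaced by $\cancel{i}i'$ and simultaneously perform the shift $\vec{\beta}_{n}^{\ast}\mapsto\vec{\beta}_{n}^{\ast}+\vec{s}^{\ast}$. The decisive point is an involutive symmetry of the index roles: under $I\to\cancel{i}i'$ one has $\cancel{i}i'\leftrightarrow i\cancel{i}'$ and $I\leftrightarrow\cancel{I}$ in the exponents of the coupling term, so the coupling shift attached to the base $\cancel{i}i'$ is exactly $-\vec{s}$. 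Hence after the shift the ``main'' coefficient of this instance is $B$, while its coupling term lands back on $T^{n}_{\vec{\alpha}_{n},\vec{\beta}_{n}^{\ast}}=A$. Propagating the shift through the $\beta$-prefactors (each $\beta$ incremented or decremented by the matching component of $\vec{s}$, with $\tau_{n}$ inert since it depends only on $n$) and through the $g$-weighted $T^{n-1}$ term on the left gives the second equation
\[
\sum_{D\in\mathcal{I}} g_{\overline{\cancel{i}i'}D}\,T^{n-1}_{\vec{\alpha}_{n}-\vec{e}_{D},\vec{\beta}_{n}^{\ast}-\vec{e}_{I}^{\ast}-\vec{e}_{\cancel{I}}^{\ast}+\vec{e}_{i\cancel{i}'}^{\ast}}=(\beta_{\cancel{i}i'}^{n}+1)(\tau_{n}+\beta_{i\cancel{i}'}^{n}+1)B-\beta_{I}^{n}\beta_{\cancel{I}}^{n}A .
\]

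With both relations available I would eliminate $B$ directly: multiply the first by $(\tau_{n}+\beta_{i\cancel{i}'}^{n}+1)$, the second by $(\beta_{i\cancel{i}'}^{n}+1)$, and add. The $B$-terms cancel identically, and the coefficient of $A$ collapses by the algebraic identity
\[
(\tau_{n}+\beta_{\cancel{I}}^{n})(\tau_{n}+\beta_{i\cancel{i}'}^{n}+1)-\beta_{\cancel{I}}^{n}(\beta_{i\cancel{i}'}^{n}+1)=\tau_{n}(\tau_{n}+\beta_{\cancel{I}}^{n}+\beta_{i\cancel{i}'}^{n}+1),
\]
leaving $\beta_{I}^{n}\tau_{n}(\tau_{n}+\beta_{\cancel{I}}^{n}+\beta_{i\cancel{i}'}^{n}+1)A$ on the left. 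Dividing by $\tau_{n}(\tau_{n}+\beta_{\cancel{I}}^{n}+\beta_{i\cancel{i}'}^{n}+1)$ reproduces precisely the first line of \eqref{G24_n_relation}. The second line is then only a notational repackaging: the two $g$-weighted summands are exactly the $k=i$ and $k=\cancel{i}$ contributions of $\sum_{k=1}^{2}(\tau_{n}\delta_{ik}+\beta_{i\cancel{i}'}^{n}+1)g_{\overline{ki'}D}T^{n-1}_{\vec{\alpha}_{n}-\vec{e}_{D},\,\vec{\beta}_{n}^{\ast}-\vec{e}_{I}^{\ast}-\delta_{\cancel{i}k}(\vec{e}_{\cancel{I}}^{\ast}-\vec{e}_{i\cancel{i}'}^{\ast})}$, as one verifies by inserting the values of $\delta_{ik}$ and $\delta_{\cancel{i}k}$ and using $\overline{ii'}=\overline{I}$.

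I expect the genuine difficulty to lie in the index bookkeeping of the second step: one must check, with the conventions $\cancel{I}=\cancel{i}\cancel{i}'$ and $\mathcal{I}=\{I,\cancel{I},i\cancel{i}',\cancel{i}i'\}$, that the base-$\cancel{i}i'$ instance really carries coupling shift $-\vec{s}$ and that each $\beta$-prefactor transforms correctly under $\vec{\beta}_{n}^{\ast}\mapsto\vec{\beta}_{n}^{\ast}+\vec{s}^{\ast}$; everything else is routine. Finally, for the asserted \emph{equivalence} I would observe that \eqref{G24_n_relation} is by construction a linear combination of two instances of \eqref{rec_rel_Grassmann_2_4_slash}, hence an immediate consequence; conversely, since $(\beta_{\cancel{i}i'}^{n}+1)\geq1$ and $\tau_{n}$, $\tau_{n}+\beta_{\cancel{I}}^{n}+\beta_{i\cancel{i}'}^{n}+1$ are nonzero as formal expressions in $1/\hbar$, the same pair of relations determines $A$ and $B$ uniquely in terms of the order-$(n-1)$ data, so running the elimination in reverse recovers \eqref{rec_rel_Grassmann_2_4_slash}. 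Thus the two systems admit identical solutions.
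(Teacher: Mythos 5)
Your proposal is correct and takes essentially the same route as the paper's proof: both pair the base-$I$ instance of \eqref{rec_rel_Grassmann_2_4_slash} with the base-$\cancel{i}i'$ instance evaluated at $\vec{\beta}_{n}^{\ast}-\vec{e}_{I}^{\ast}+\vec{e}_{i\cancel{i}'}^{\ast}+\vec{e}_{\cancel{i}i'}^{\ast}-\vec{e}_{\cancel{I}}^{\ast}$, and both rest on the invertibility of the resulting $2\times2$ system, whose determinant $\tau_{n}(\tau_{n}+\beta_{\cancel{I}}^{n}+\beta_{i\cancel{i}'}^{n}+1)$ you reproduce as your cancellation identity. The remaining differences are presentational: you eliminate the coupled coefficient by a scaled linear combination where the paper writes out the inverse matrix explicitly, and you obtain the converse by reversing the invertible elimination where the paper substitutes \eqref{G24_n_relation} and its shifted base-$\cancel{i}i'$ instance back into \eqref{rec_rel_Grassmann_2_4_slash} and checks the identity directly.
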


\begin{proof}
To derive \eqref{G24_n_relation} from \eqref{rec_rel_Grassmann_2_4_slash}, we introduce another recurrence relation for a fixed index~$\cancel{i}i'$ as $I$ in \eqref{rec_rel_Grassmann_2_4_slash}
\begin{gather}
 \hbar
 \sum_{
 D
 \in
 \mathcal{I}
 }
 g_{\overline{\cancel{i}i'},D}
 T^{n-1}_{\vec{\alpha}_{n}-\vec{e}_{D},\vec{\beta}_{n}^{\ast}-\vec{e}_{\cancel{i}i'}^{\ast}}\nonumber\\
 \qquad
 =
 \hbar
 \beta_{\cancel{i}i'}^{n}
 (
 \tau_{n}
 +
 \beta_{i\cancel{i}'}^{n}
 )
 T^{n}_{\vec{\alpha}_{n},\vec{\beta}_{n}^{\ast}}
 -
 \hbar
 (
 \beta_{\cancel{I}}^{n}
 +
 1
 )
 (
 \beta_{I}^{n}
 +
 1
 )
 T^{n}_{\vec{\alpha}_{n},\vec{\beta}_{n}^{\ast}-\vec{e}_{\cancel{i}i'}^{\ast}+\vec{e}_{\cancel{I}}^{\ast}+\vec{e}_{I}^{\ast}-\vec{e}_{i\cancel{i}'}^{\ast}}.
 \label{rec_rel_Grassmann_2_4_slash_islashi}
\end{gather}
By considering the case of
\smash{$
 T^{n}_{\vec{\alpha}_{n},\vec{\beta}_{n}^{\ast}-\vec{e}_{I}^{\ast}+\vec{e}_{i\cancel{i}'}^{\ast}+\vec{e}_{\cancel{i}i'}^{\ast}-\vec{e}_{\cancel{I}}^{\ast}}
$}
as the coefficient
\smash{$T^{n}_{\vec{\alpha}_{n},\vec{\beta}_{n}^{\ast}}$} in
\eqref{rec_rel_Grassmann_2_4_slash_islashi}, we have
\begin{gather}
 \hbar
 \sum_{
 D
 \in
 \mathcal{I}
 }
 g_{\overline{\cancel{i}i'},D}
 T^{n-1}_{\vec{\alpha}_{n}-\vec{e}_{D},\vec{\beta}_{n}^{\ast}-\vec{e}_{I}^{\ast}-\vec{e}_{\cancel{I}}^{\ast}+\vec{e}_{i\cancel{i}'}^{\ast}}
\nonumber\\
\qquad =
 \hbar
 (
 \beta_{\cancel{i}i'}^{n}
 +
 1
 )
 (
 \tau_{n}
 +
 \beta_{i\cancel{i}'}^{n}
 +
 1
 )
 T^{n}_{\vec{\alpha}_{n},\vec{\beta}_{n}^{\ast}-\vec{e}_{I}^{\ast}+\vec{e}_{i\cancel{i}'}^{\ast}+\vec{e}_{\cancel{i}i'}^{\ast}-\vec{e}_{\cancel{I}}^{\ast}}
 -
 \hbar
 \beta_{\cancel{I}}^{n}
 \beta_{I}^{n}
 T^{n}_{\vec{\alpha}_{n},\vec{\beta}_{n}^{\ast}}.
 \label{rec_rel_Grassmann_2_4_slash_islashi_another}
\end{gather}
We now consider the system of linear equations for \eqref{rec_rel_Grassmann_2_4_slash} and \eqref{rec_rel_Grassmann_2_4_slash_islashi_another}
\begin{gather}
 \hbar
 \left(
 \begin{matrix}
 \sum_{
 D
 \in
 \mathcal{I}
 }
 g_{\overline{I}D}
 T^{n-1}_{\vec{\alpha}_{n}-\vec{e}_{D},\vec{\beta}_{n}^{\ast}-\vec{e}_{I}^{\ast}}
 \\
 \sum_{
 D
 \in
 \mathcal{I}
 }
 g_{\overline{\cancel{i}i'},D}
 T^{n-1}_{\vec{\alpha}_{n}-\vec{e}_{D},\vec{\beta}_{n}^{\ast}-\vec{e}_{I}^{\ast}-\vec{e}_{\cancel{I}}^{\ast}+\vec{e}_{i\cancel{i}'}^{\ast}}
 \end{matrix}
 \right)
 \nonumber
 \\
 \qquad=
 \hbar
 \left(
 \begin{matrix}
 \tau_{n}
 +
 \beta_{\cancel{I}}^{n}
 &
 -
 (
 \beta_{i\cancel{i}'}^{n}
 +
 1
 )
 \\
 -\beta_{\cancel{I}}^{n}
 &
 \tau_{n}
 +
 \beta_{i\cancel{i}'}^{n}
 +
 1
 \end{matrix}
 \right)
 \left(
 \begin{matrix}
 \beta_{I}^{n}
 T^{n}_{\vec{\alpha}_{n},\vec{\beta}_{n}^{\ast}}
 \\
 (
 \beta_{\cancel{i}i'}^{n}
 +
 1
 )
 T^{n}_{\vec{\alpha}_{n},\vec{\beta}_{n}^{\ast}-\vec{e}_{I}^{\ast}+\vec{e}_{i\cancel{i}'}^{\ast}+\vec{e}_{\cancel{i}i'}^{\ast}-\vec{e}_{\cancel{I}}^{\ast}}
 \end{matrix}
 \right).
 \label{recrel_mtx_24}
\end{gather}
Calculating the determinant of the coefficient matrix, we have
\begin{align*}
 \det
 \left(
 \begin{matrix}
 \tau_{n}
 +
 \beta_{\cancel{I}}^{n}
 &
 -
 (
 \beta_{i\cancel{i}'}^{n}
 +
 1
 )
 \\
 -\beta_{\cancel{I}}^{n}
 &
 \tau_{n}
 +
 \beta_{i\cancel{i}'}^{n}
 +
 1
 \end{matrix}
 \right)
 =
 \tau_{n}
 (
 \tau_{n}
 +
 \beta_{\cancel{I}}^{n}
 +
 \beta_{i\cancel{i}'}^{n}
 +
 1
 ).
\end{align*}
Since
\begin{align*}
\tau_{n}^{-1}
 (
\tau_{n}
+
\beta_{\cancel{I}}^{n}
+
\beta_{i\cancel{i}'}^{n}
+
1
 )^{-1}
={}&
\frac{\hbar}{
1
-
\hbar (
n
-
1
 )
}
\cdot
\frac{\hbar}{
1
-
\hbar (
n
-
\beta_{\cancel{I}}^{n}
-
\beta_{i\cancel{i}'}^{n}
-
1
 )
}
\\
={}&
\hbar^{2}
\sum_{k,m=0}^{\infty}
 (
n
-
1
 )^{k}
 (
n
-
\beta_{\cancel{I}}^{n}
-
\beta_{i\cancel{i}'}^{n}
-
1
 )^{m}
\hbar^{k+m}
,
\end{align*}
then
$
\tau_{n}^{-1}
 (
\tau_{n}
+
\beta_{\cancel{I}}^{n}
+
\beta_{i\cancel{i}'}^{n}
+
1
 )^{-1}
$
exists.
Thus, there is the inverse matrix
\begin{gather}
 \left(
 \begin{matrix}
 \tau_{n}
 +
 \beta_{\cancel{I}}^{n}
 &
 -
 (
 \beta_{i\cancel{i}'}^{n}
 +
 1
 )
 \\
 -\beta_{\cancel{I}}^{n}
 &
 \tau_{n}
 +
 \beta_{i\cancel{i}'}^{n}
 +
 1
 \end{matrix}
 \right)^{-1}
 =
 \tau_{n}^{-1}
 (
 \tau_{n}
 +
 \beta_{\cancel{I}}^{n}
 +
 \beta_{i\cancel{i}'}^{n}
 +
 1
 )^{-1}
 \left(
 \begin{matrix}
 \tau_{n}
 +
 \beta_{i\cancel{i}'}^{n}
 +
 1
 &
 \beta_{i\cancel{i}'}^{n}
 +
 1
 \\
 \beta_{\cancel{I}}^{n}
 &
 \tau_{n}
 +
 \beta_{\cancel{I}}^{n}
 \end{matrix}
 \right).\!\!
 \label{recrel_mtx_24_coeff_inv}
\end{gather}
Multiplying the both sides of \eqref{recrel_mtx_24} by \eqref{recrel_mtx_24_coeff_inv} from the left, we get the solution of \eqref{recrel_mtx_24}
\begin{gather}
 \left(
 \begin{matrix}
 \beta_{I}^{n}
 T^{n}_{\vec{\alpha}_{n},\vec{\beta}_{n}^{\ast}}
 \\
 (
 \beta_{\cancel{i}i'}^{n}
 +
 1
 )
 T^{n}_{\vec{\alpha}_{n},\vec{\beta}_{n}^{\ast}-\vec{e}_{I}^{\ast}+\vec{e}_{i\cancel{i}'}^{\ast}+\vec{e}_{\cancel{i}i'}^{\ast}-\vec{e}_{\cancel{I}}^{\ast}}
 \end{matrix}
 \right)
=
 \tau_{n}^{-1}
 (
 \tau_{n}
 +
 \beta_{\cancel{I}}^{n}
 +
 \beta_{i\cancel{i}'}^{n}
 +
 1
 )^{-1}
 \nonumber
 \\
\phantom{\qquad=}{}
 \times
 \left(
 \begin{matrix}
 \sum_{
 D
 \in
 \mathcal{I}
 }
 \bigl\{
 (
 \tau_{n}
 +
 \beta_{i\cancel{i}'}^{n}
 +
 1
 )
 g_{\overline{I}D}
 T^{n-1}_{\vec{\alpha}_{n}-\vec{e}_{D},\vec{\beta}_{n}^{\ast}-\vec{e}_{I}^{\ast}}\\
 \qquad
 +
 (
 \beta_{i\cancel{i}'}^{n}
 +
 1
 )
 g_{\overline{\cancel{i}i'},D}
 T^{n-1}_{\vec{\alpha}_{n}-\vec{e}_{D},\vec{\beta}_{n}^{\ast}-\vec{e}_{I}^{\ast}-\vec{e}_{\cancel{I}}^{\ast}+\vec{e}_{i\cancel{i}'}^{\ast}}
 \bigr\}
 \vspace{1mm}\\
 \sum_{
 D
 \in
 \mathcal{I}
 }
 \bigl\{
 \beta_{\cancel{I}}^{n}
 g_{\overline{I}D}
 T^{n-1}_{\vec{\alpha}_{n}-\vec{e}_{D},\vec{\beta}_{n}^{\ast}-\vec{e}_{I}^{\ast}}
 +
 (
 \tau_{n}
 +
 \beta_{\cancel{I}}^{n}
 )
 g_{\overline{\cancel{i}i'},D}
 T^{n-1}_{\vec{\alpha}_{n}-\vec{e}_{D},\vec{\beta}_{n}^{\ast}-\vec{e}_{I}^{\ast}-\vec{e}_{\cancel{I}}^{\ast}+\vec{e}_{i\cancel{i}'}^{\ast}}
 \bigr\}
 \end{matrix}
 \right).
 \label{recrel_mtx_24_sol}
\end{gather}
By focusing on the first component of both sides of \eqref{recrel_mtx_24_sol}, we obtain \eqref{G24_n_relation}.

Conversely, we show that \eqref{rec_rel_Grassmann_2_4_slash} is derived from \eqref{G24_n_relation}. We introduce another recurrence relation for $\cancel{i}i'$ as $I$ in \eqref{G24_n_relation}
\begin{gather}
 \beta_{\cancel{i}i'}^{n}
 T^{n}_{\vec{\alpha}_{n},\vec{\beta}_{n}^{\ast}}\nonumber
 \\
 \qquad=
 \frac{
 \sum_{
 D
 \in
 \mathcal{I}
 }\!
 \bigl\{
 (
 \tau_{n}\!
 +\!
 \beta_{\cancel{I}}^{n}\!
 +\!
 1
 )
 g_{\overline{\cancel{i}i'},D}
 T^{n-1}_{
 \vec{\alpha}_{n}
 -
 \vec{e}_{D},
 \vec{\beta}_{n}^{\ast}
 -
 \vec{e}_{\cancel{i}i'}^{\ast}
 }\!
 +\!
 (
 \beta_{\cancel{I}}^{n}\!
 +\!
 1
 )
 g_{\overline{I}D}
 T^{n-1}_{
 \vec{\alpha}_{n}
 -
 \vec{e}_{D},
 \vec{\beta}_{n}^{\ast}
 -
 \vec{e}_{\cancel{i}i'}^{\ast}
 -
 \vec{e}_{i\cancel{i}'}^{\ast}
 +
 \vec{e}_{\cancel{I}}^{\ast}
 }
 \bigr\}
 }{
 \tau_{n}
 (
 \tau_{n}
 +
 \beta_{\cancel{I}}^{n}
 +
 \beta_{i\cancel{i}'}^{n}
 +
 1
 )
 }
 .
\label{G24_n_relation_another}
\end{gather}
Considering the case where
\smash{$
\vec{\beta}_{n}^{\ast}
$}
is
\smash{$
\vec{\beta}_{n}^{\ast}
-
\vec{e}_{I}^{\ast}
+
\vec{e}_{i\cancel{i}'}^{\ast}
+
\vec{e}_{\cancel{i}i'}^{\ast}
-
\vec{e}_{\cancel{I}}^{\ast}
$}
as
\smash{$
 T^{n}_{
 \vec{\alpha}_{n},
 \vec{\beta}_{n}^{\ast}
 }
$}
in \eqref{G24_n_relation_another}, it satisfies the recurrence relation
\begin{gather}
 (
 \beta_{\cancel{i}i'}^{n}
 +
 1
 )
 T^{n}_{
 \vec{\alpha}_{n},
 \vec{\beta}_{n}^{\ast}
 -
 \vec{e}_{I}^{\ast}
 +
 \vec{e}_{i\cancel{i}'}^{\ast}
 +
 \vec{e}_{\cancel{i}i'}^{\ast}
 -
 \vec{e}_{\cancel{I}}^{\ast}
 }\nonumber\\
 \qquad
 =
 \frac{
 \sum_{
 D
 \in
 \mathcal{I}
 }
 \bigl\{
 (
 \tau_{n}
 +
 \beta_{\cancel{I}}^{n}
 )
 g_{\overline{\cancel{i}i'},D}
 T^{n-1}_{
 \vec{\alpha}_{n}
 -
 \vec{e}_{D},
 \vec{\beta}_{n}^{\ast}
 -
 \vec{e}_{I}^{\ast}
 +
 \vec{e}_{i\cancel{i}'}^{\ast}
 -
 \vec{e}_{\cancel{I}}^{\ast}
 }
 +
 \beta_{\cancel{I}}^{n}
 g_{\overline{I}D}
 T^{n-1}_{
 \vec{\alpha}_{n}
 -
 \vec{e}_{D},
 \vec{\beta}_{n}^{\ast}
 -
 \vec{e}_{I}^{\ast}
 }
 \bigr\}
 }{
 \tau_{n}
 (
 \tau_{n}
 +
 \beta_{\cancel{I}}^{n}
 +
 \beta_{i\cancel{i}'}^{n}
 +
 1
 )
 }
 .
\label{G24_n_relation_another_2}
\end{gather}
Substituting \eqref{G24_n_relation} and \eqref{G24_n_relation_another_2} into the right hand side of \eqref{rec_rel_Grassmann_2_4_slash}, we have
\begin{gather*}
 \hbar
 \beta_{I}^{n}
 (
 \tau_{n}
 +
 \beta_{\cancel{I}}^{n}
 )
 T^{n}_{\vec{\alpha}_{n},\vec{\beta}_{n}^{\ast}}
 -
 \hbar
 (
 \beta_{i\cancel{i}'}^{n}
 +
 1
 )
 (
 \beta_{\cancel{i}i'}^{n}
 +
 1
 )
 T^{n}_{\vec{\alpha}_{n},\vec{\beta}_{n}^{\ast}-\vec{e}_{I}^{\ast}
 +\vec{e}_{i\cancel{i}'}^{\ast}+\vec{e}_{\cancel{i}i'}^{\ast}-\vec{e}_{\cancel{I}}^{\ast}}
 \\
 \qquad=
 \hbar
 \tau_{n}^{-1}
 (
 \tau_{n}
 +
 \beta_{\cancel{I}}^{n}
 +
 \beta_{i\cancel{i}'}^{n}
 +
 1
 )^{-1} \sum_{
 D
 \in
 \mathcal{I}
 }
 \big[
 (
 \tau_{n}
 +
 \beta_{\cancel{I}}^{n}
 )
 \bigl\{
 (
 \tau_{n}
 +
 \beta_{i\cancel{i}'}^{n}
 +
 1
 )
 g_{\overline{I}D}
 T^{n-1}_{
 \vec{\alpha}_{n}
 -
 \vec{e}_{D},
 \vec{\beta}_{n}^{\ast}
 -
 \vec{e}_{I}^{\ast}
 }
 \\
\phantom{\qquad=}{}
 +
 (
 \beta_{i\cancel{i}'}^{n}
 +
 1
 )
 g_{\overline{\cancel{i}i'},D}
 T^{n-1}_{
 \vec{\alpha}_{n}
 -
 \vec{e}_{D},
 \vec{\beta}_{n}^{\ast}
 -
 \vec{e}_{I}^{\ast}
 -
 \vec{e}_{\cancel{I}}^{\ast}
 +
 \vec{e}_{i\cancel{i}'}^{\ast}
 }
 \bigr\}
 \\
\phantom{\qquad=}{}
 -
 (
 \beta_{i\cancel{i}'}^{n}
 +
 1
 )
 \bigl\{
 (
 \tau_{n}
 +
 \beta_{\cancel{I}}^{n}
 )
 g_{\overline{\cancel{i}i'}D}
 T^{n-1}_{
 \vec{\alpha}_{n}
 -
 \vec{e}_{D},
 \vec{\beta}_{n}^{\ast}
 -
 \vec{e}_{I}^{\ast}
 +
 \vec{e}_{i\cancel{i}'}^{\ast}
 -
 \vec{e}_{\cancel{I}}^{\ast}
 }
 +
 \beta_{\cancel{I}}^{n}
 g_{\overline{I}D}
 T^{n-1}_{
 \vec{\alpha}_{n}
 -
 \vec{e}_{D},
 \vec{\beta}_{n}^{\ast}
 -
 \vec{e}_{I}^{\ast}
 }
 \bigr\}
 \big]
 \\
\phantom{\qquad}{}=
 \hbar
 \sum_{
 D
 \in
 \mathcal{I}
 }
 g_{\overline{I}D}
 T^{n-1}_{
 \vec{\alpha}_{n}
 -
 \vec{e}_{D},
 \vec{\beta}_{n}^{\ast}
 -
 \vec{e}_{I}^{\ast}
 }.
\end{gather*}
This is the left-hand side of \eqref{rec_rel_Grassmann_2_4_slash}.
\end{proof}

By using Proposition~\ref{Prop_G24_relation}, we obtain the following.

\begin{Proposition}\label{Cor_G24_relation_sum}
For $n\geq1$, the coefficient \smash{$T^{n}_{\vec{\alpha}_{n},\vec{\beta}_{n}^{\ast}}$} of \eqref{star_expand}
for $G_{2,4}(\mathbb{C})$ is given by using the coefficients of order $(n-1)$ as follows:
\begin{align}
 T^{n}_{\vec{\alpha}_{n},\vec{\beta}_{n}^{\ast}}
 ={}&
 \frac{
 \sum_{
 J,D
 \in
 \mathcal{I}
 }\!
 \bigl\{
 (
 \tau_{n}\!
 +\!
 \beta_{j\cancel{j}'}^{n}\!
 +\!
 1
 )
 g_{\overline{J}D}
 T^{n-1}_{
 \vec{\alpha}_{n}
 -
 \vec{e}_{D},
 \vec{\beta}_{n}^{\ast}
 -
 \vec{e}_{J}^{\ast}
 }\!
 +\!
 (
 \beta_{j\cancel{j}'}^{n}\!
 +\!
 1
 )
 g_{\overline{\cancel{j}j'},D}
 T^{n-1}_{
 \vec{\alpha}_{n}
 -
 \vec{e}_{D},
 \vec{\beta}_{n}^{\ast}
 -
 \vec{e}_{J}^{\ast}
 -
 \vec{e}_{\cancel{J}}^{\ast}
 +
 \vec{e}_{j\cancel{j}'}^{\ast}
 }
 \bigr\}
 }{
 \tau_{n}
 \bigl\{
 n
 (
 \tau_{n}+1
 )
 +
 2
 (
 \beta_{I}^{n}
 +
 \beta_{\cancel{i}i'}^{n}
 )
 (
 \beta_{\cancel{I}}^{n}
 +
 \beta_{i\cancel{i}'}^{n}
 )
 \bigr\}
 },
 \nonumber
 \\
 ={}&
 \frac{
 \sum_{
 J,D
 \in
 \mathcal{I}
 }
 \sum_{k=1}^{2}
 \bigl\{
 (
 \tau_{n}
 \delta_{jk}
 +
 \beta_{j\cancel{j}'}^{n}
 +
 1
 )
 g_{\overline{kj'},D}
 T^{n-1}_{
 \vec{\alpha}_{n}
 -
 \vec{e}_{D},
 \vec{\beta}_{n}^{\ast}
 -
 \vec{e}_{J}^{\ast}
 -
 \delta_{\cancel{j}k}
 (
 \vec{e}_{\cancel{J}}^{\ast}
 -
 \vec{e}_{j\cancel{j}'}^{\ast}
 )
 }
 \bigr\}
 }{
 \tau_{n}
 \bigl\{
 n
 (
 \tau_{n}+1
 )
 +
 2
 (
 \beta_{I}^{n}
 +
 \beta_{\cancel{i}i'}^{n}
 )
 (
 \beta_{\cancel{I}}^{n}
 +
 \beta_{i\cancel{i}'}^{n}
 )
 \bigr\}
 }.
\label{G24_n_relation_sum}
\end{align}
That is, \smash{$T^{n}_{\vec{\alpha}_{n},\vec{\beta}_{n}^{\ast}}$} given by \eqref{G24_n_relation_sum} gives a star product with separation of variables on $G_{2,4}(\mathbb{C})$.
\end{Proposition}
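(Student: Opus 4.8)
The plan is to derive \eqref{G24_n_relation_sum} by combining the four instances of \eqref{G24_n_relation} obtained when each capital index in $\mathcal{I}=\{I,\cancel{I},i\cancel{i}',\cancel{i}i'\}$ is taken as the distinguished one. Since Corollary~\ref{theorem_rec_rel_Grassmann_2_4} holds for \emph{any} fixed capital index, Proposition~\ref{Prop_G24_relation} supplies \eqref{G24_n_relation} not only for $I$ but for every $J\in\mathcal{I}$; replacing $I$ by $J$ there (so that $\cancel{I},i\cancel{i}',\cancel{i}i'$ become $\cancel{J},j\cancel{j}',\cancel{j}j'$) gives
\[
 \beta_{J}^{n}\,T^{n}_{\vec{\alpha}_{n},\vec{\beta}_{n}^{\ast}}
 =
 \frac{\mathrm{Num}_{J}}{\tau_{n}\bigl(\tau_{n}+\beta_{\cancel{J}}^{n}+\beta_{j\cancel{j}'}^{n}+1\bigr)},
\]
where $\mathrm{Num}_{J}$ denotes the $D$-sum inside the braces in the numerator of \eqref{G24_n_relation_sum}. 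Summing $\mathrm{Num}_{J}$ over $J\in\mathcal{I}$ reproduces the numerator of \eqref{G24_n_relation_sum} verbatim, so the whole task reduces to assembling the four scalar denominators into the single factor $\tau_{n}\{n(\tau_{n}+1)+2(\beta_{I}^{n}+\beta_{\cancel{i}i'}^{n})(\beta_{\cancel{I}}^{n}+\beta_{i\cancel{i}'}^{n})\}$.

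The key computation is to evaluate $\beta_{\cancel{J}}^{n}+\beta_{j\cancel{j}'}^{n}$ in each of the four cases. Writing the two ``column sums'' $A:=\beta_{I}^{n}+\beta_{\cancel{i}i'}^{n}$ and $B:=\beta_{\cancel{I}}^{n}+\beta_{i\cancel{i}'}^{n}$, so that $A+B=|\vec{\beta}_{n}^{\ast}|=n$, a direct check of the four choices shows that $\beta_{\cancel{J}}^{n}+\beta_{j\cancel{j}'}^{n}=B$ for $J\in\{I,\cancel{i}i'\}$ (the indices whose second entry is $i'$) and $\beta_{\cancel{J}}^{n}+\beta_{j\cancel{j}'}^{n}=A$ for $J\in\{\cancel{I},i\cancel{i}'\}$ (second entry $\cancel{i}'$). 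Hence the four instances collapse into two groups sharing the denominators $\tau_{n}(\tau_{n}+B+1)$ and $\tau_{n}(\tau_{n}+A+1)$, and because $\beta_{I}^{n}+\beta_{\cancel{i}i'}^{n}=A$ and $\beta_{\cancel{I}}^{n}+\beta_{i\cancel{i}'}^{n}=B$, the left-hand sides of the two groups add up to $A\,T^{n}$ and $B\,T^{n}$, respectively.

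From here I would clear denominators within each group to obtain $A\,\tau_{n}(\tau_{n}+B+1)\,T^{n}=\mathrm{Num}_{I}+\mathrm{Num}_{\cancel{i}i'}$ and $B\,\tau_{n}(\tau_{n}+A+1)\,T^{n}=\mathrm{Num}_{\cancel{I}}+\mathrm{Num}_{i\cancel{i}'}$, add the two, and apply the elementary identity $A(\tau_{n}+B+1)+B(\tau_{n}+A+1)=(A+B)(\tau_{n}+1)+2AB=n(\tau_{n}+1)+2AB$. Dividing by $\tau_{n}\{n(\tau_{n}+1)+2AB\}$, which is invertible as a formal power series in $\hbar$ for the same geometric-series reason that made the determinant factor invertible in the proof of Proposition~\ref{Prop_G24_relation} (its leading $1/\hbar$ behaviour is governed by $n\geq1$), yields the first equality in \eqref{G24_n_relation_sum}. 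The second equality follows by the same repackaging of the two $J$-terms into a single sum over $k=1,2$ through $\delta_{jk}$ and $\delta_{\cancel{j}k}$ already used between the two lines of \eqref{G24_n_relation}; and the closing assertion that this $T^{n}$ defines a star product with separation of variables needs no new argument, since \eqref{G24_n_relation_sum} is equivalent to the recurrence \eqref{rec_rel_Grassmann_2_4_slash} guaranteed by Theorem~\ref{theorem:star_prd}.

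The main obstacle is precisely the bookkeeping of the second paragraph: one must track, for each $J$, the complementary index $\cancel{J}$ and the mixed index $j\cancel{j}'$, and verify that $\beta_{\cancel{J}}^{n}+\beta_{j\cancel{j}'}^{n}$ genuinely takes only the two values $A$ and $B$, grouped so that the collapsed left-hand coefficients $\beta_{J}^{n}$ reassemble into the \emph{complementary} column sums. This symmetric pairing is what forces the denominators to combine into $n(\tau_{n}+1)+2AB$ rather than into four unrelated factors, and it is also what makes the resulting expression manifestly independent of which representative $I\in\mathcal{I}$ was chosen.
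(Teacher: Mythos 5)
Your derivation of \eqref{G24_n_relation_sum} is correct and is essentially the paper's own argument: the paper likewise instantiates \eqref{G24_n_relation} at every distinguished index, clears the denominator (multiplying by $\tau_{n}+\beta_{\cancel{I}}^{n}+\beta_{i\cancel{i}'}^{n}+1$), sums over $I\in\mathcal{I}$, and invokes the identity \eqref{sum_identity_beta}, namely $\sum_{J\in\mathcal{I}}\beta_{J}^{n}(\tau_{n}+\beta_{\cancel{J}}^{n}+\beta_{j\cancel{j}'}^{n}+1)=n(\tau_{n}+1)+2(\beta_{I}^{n}+\beta_{\cancel{i}i'}^{n})(\beta_{\cancel{I}}^{n}+\beta_{i\cancel{i}'}^{n})$, which is exactly your $A$/$B$ bookkeeping; your grouping of the four instances into two pairs sharing a common denominator is only a cosmetic reorganization of that sum, and your invertibility remark for $\tau_{n}\{n(\tau_{n}+1)+2AB\}$ matches the paper's.

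The one place you are looser than the paper is the closing claim that \eqref{G24_n_relation_sum} ``is equivalent to'' the recurrence \eqref{rec_rel_Grassmann_2_4_slash}. What your derivation actually establishes is one implication: any solution of \eqref{rec_rel_Grassmann_2_4_slash} satisfies \eqref{G24_n_relation_sum}; summing over $I$ could a priori lose information, and the paper explicitly flags this point (``the equivalence of \eqref{G24_n_relation_sum} with \eqref{G24_n_relation} \dots is not clear for the moment''). The paper closes this not by proving equivalence but by an existence-plus-uniqueness argument: \eqref{G24_n_relation_sum} determines $T^{n}_{\vec{\alpha}_{n},\vec{\beta}_{n}^{\ast}}$ explicitly from the order-$(n-1)$ coefficients, so starting from $T^{0}_{\vec{0},\vec{0}^{\ast}}=1$ in Proposition~\ref{prop:covar_const_0and1} it has a unique sequential solution, which must therefore coincide with the star-product coefficients that exist by Theorem~\ref{theorem:star_prd} and were just shown to satisfy \eqref{G24_n_relation_sum}. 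Replace your equivalence assertion by this short remark and your proof is complete.
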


\begin{proof}
First, we derive \eqref{G24_n_relation_sum} from \eqref{G24_n_relation} in Proposition~\ref{Prop_G24_relation}. Multiplying both sides of \eqref{G24_n_relation} by
\smash{$
 \tau_{n}
 +
 \beta_{\cancel{I}}^{n}
 +
 \beta_{i\cancel{i}'}^{n}
 +
 1
$},
we have
\begin{gather}
 \beta_{I}^{n}
 (
 \tau_{n}
 +
 \beta_{\cancel{I}}^{n}
 +
 \beta_{i\cancel{i}'}^{n}
 +
 1
 )
 T^{n}_{\vec{\alpha}_{n},\vec{\beta}_{n}^{\ast}}
 \nonumber\\
 \qquad=
 \tau_{n}^{-1}
 \sum_{
 D
 \in
 \mathcal{I}
 }
 \sum_{k=1}^{2}
 (
 \tau_{n}
 \delta_{ik}
 +
 \beta_{i\cancel{i}'}^{n}
 +
 1
 )
 g_{\overline{ki'},D}
 T^{n-1}_{
 \vec{\alpha}_{n}
 -
 \vec{e}_{D},
 \vec{\beta}_{n}^{\ast}
 -
 \vec{e}_{I}^{\ast}
 -
 \delta_{\cancel{i}k}
 (
 \vec{e}_{\cancel{I}}^{\ast}
 -
 \vec{e}_{i\cancel{i}'}^{\ast}
 )
 }
 .
 \label{G24_n_relation_modified}
\end{gather}
Summing for $I$ on both sides of \eqref{G24_n_relation_modified}, we obtain
\begin{gather}
 \sum_{
 I
 \in
 \mathcal{I}
 }
 \beta_{I}^{n}
 (
 \tau_{n}
 +
 \beta_{\cancel{I}}^{n}
 +
 \beta_{i\cancel{i}'}^{n}
 +
 1
 )
 T^{n}_{\vec{\alpha}_{n},\vec{\beta}_{n}^{\ast}}\nonumber\\
 \qquad=
 \tau_{n}^{-1}
 \sum_{
 I,
 D
 \in
 \mathcal{I}
 }
 \sum_{k=1}^{2}
 (
 \tau_{n}
 \delta_{ik}
 +
 \beta_{i\cancel{i}'}^{n}
 +
 1
 )
 g_{\overline{ki'},D}
 T^{n-1}_{
 \vec{\alpha}_{n}
 -
 \vec{e}_{D},
 \vec{\beta}_{n}^{\ast}
 -
 \vec{e}_{I}^{\ast}
 -
 \delta_{\cancel{i}k}
 (
 \vec{e}_{\cancel{I}}^{\ast}
 -
 \vec{e}_{i\cancel{i}'}^{\ast} ) } . \label{G24_n_relation_modified_sum}
\end{gather}
Here, we use the following identities:
\begin{align}
 \sum_{
 J
 \in
 \mathcal{I}
 }
 \beta_{J}^{n}
 (
 \tau_{n}
 +
 \beta_{\cancel{J}}^{n}
 +
 \beta_{j\cancel{j}'}^{n}
 +
 1
 )
& =
 n
 (
 \tau_{n}+1
 )
 +
 2
 (
 \beta_{11'}^{n}
 +
 \beta_{21'}^{n}
 )
 (
 \beta_{22'}^{n}
 +
 \beta_{12'}^{n}
 )
 \nonumber
 \\
 &=
 n
 (
 \tau_{n}+1
 )
 +
 2
 (
 \beta_{I}^{n}
 +
 \beta_{\cancel{i}i'}^{n}
 )
 (
 \beta_{\cancel{I}}^{n}
 +
 \beta_{i\cancel{i}'}^{n}
 ).
 \label{sum_identity_beta}
\end{align}
Note that
\[
 (
 \beta_{I}^{n}
 +
 \beta_{\cancel{i}i'}^{n}
 )
 (
 \beta_{\cancel{I}}^{n}
 +
 \beta_{i\cancel{i}'}^{n}
 )
 =
 (
 \beta_{11'}^{n}
 +
 \beta_{21'}^{n}
 )
 (
 \beta_{22'}^{n}
 +
 \beta_{12'}^{n}
 )
\]
does not depend on how we choose fixed~$I$
from
$
\mathcal{I}
=
\{
11',
12',
21',
22'
\}
$.
See Appendix~\ref{appendix_symmetry_function} for more details. Substituting \eqref{sum_identity_beta} into~\eqref{G24_n_relation_modified_sum}, \eqref{G24_n_relation_modified_sum} can be rewritten as \eqref{G24_n_relation_sum}.

According to the construction method proposed by \cite{HS1,HS2} (or equivalently \cite{SSU1,SSU2}), there is the solution
\smash{$T\raisebox{2pt}{${}^{n}_{\vec{\alpha}_{n},\vec{\beta}_{n}^{\ast}}$}$}
of \eqref{rec_rel_Grassmann_2_4_slash}.
Furthermore, the solution \smash{$T\raisebox{2pt}{${}^{n}_{\vec{\alpha}_{n},\vec{\beta}_{n}^{\ast}}$}$} of \eqref{rec_rel_Grassmann_2_4_slash} is determined by~\eqref{G24_n_relation}, which are equivalent to \eqref{rec_rel_Grassmann_2_4_slash} in Proposition~\ref{Prop_G24_relation}.
The equivalence of \eqref{G24_n_relation_sum} with \eqref{G24_n_relation} in Proposition~\ref{Prop_G24_relation} is not clear for the moment. In other words, it is not clear whether the solution of \eqref{G24_n_relation_sum} is \smash{$T\raisebox{2pt}{${}^{n}_{\vec{\alpha}_{n},\vec{\beta}_{n}^{\ast}}$}$} of the star product with separation of variables.
However, \eqref{G24_n_relation_sum} are derived from \eqref{rec_rel_Grassmann_2_4_slash} in Proposition~\ref{Prop_G24_relation}, so the general term \smash{$T\raisebox{2pt}{${}^{n}_{\vec{\alpha}_{n},\vec{\beta}_{n}^{\ast}}$}$} satisfying~\eqref{rec_rel_Grassmann_2_4_slash} satisfies~\eqref{G24_n_relation_sum}.
On the other hand, \eqref{G24_n_relation_sum} states that if the
coefficients of order $(n-1)$ are given, then the coefficient~\smash{$T\raisebox{2pt}{${}^{n}_{\vec{\alpha}_{n},\vec{\beta}_{n}^{\ast}}$}$} can be completely determined.
In fact, since \smash{$T\raisebox{2pt}{${}_{\vec{0},\vec{0}^{\ast}}^{0}$}$} is given by \eqref{init_cond_01} in Proposi\-tion~\ref{prop:covar_const_0and1}, \smash{$T\raisebox{2pt}{${}^{n}_{\vec{\alpha}_{n},\vec{\beta}_{n}^{\ast}}$}$}~can be determined by solving \eqref{G24_n_relation_sum} sequentially from the $n=1$ case.
By applying \eqref{G24_n_relation_sum} sequentially, the solution of \eqref{G24_n_relation_sum}, i.e., \smash{$T\raisebox{2pt}{${}^{n}_{\vec{\alpha}_{n},\vec{\beta}_{n}^{\ast}}$}$}, is explicitly and uniquely determined.
Hence, \smash{$T\raisebox{2pt}{${}^{n}_{\vec{\alpha}_{n},\vec{\beta}_{n}^{\ast}}$}$} obtained as the solution of \eqref{G24_n_relation_sum} is also the solution of \eqref{rec_rel_Grassmann_2_4_slash}, i.e., it gives a star product with separation of variables.
\end{proof}

\subsection[Solution of the recurrence relations for G\_\{2,4\}(C)]{Solution of the recurrence relations for $\boldsymbol{G_{2,4}(\mathbb{C})}$}
\label{G24_recrel_solve}

We give the explicit expression of \smash{$T\raisebox{2pt}{${}^{n}_{\vec{\alpha}_{n},\vec{\beta}_{n}^{\ast}}$}$} as the solution of \eqref{rec_rel_Grassmann_2_4_slash} in this subsection.
\smash{$T\raisebox{2pt}{${}^{n}_{\vec{\alpha}_{n},\vec{\beta}_{n}^{\ast}}$}$}
corresponds
well to a Fock space representation,
since
\smash{$
T\raisebox{2pt}{${}^{n}_{\vec{\alpha}_{n},\vec{\beta}_{n}^{\ast}}$}
=
0
$}
when \smash{$\vec{\alpha}_{n}$} and \smash{$\vec{\beta}_{n}^{\ast}$} contain negative components, and the possible \smash{$\vec{\alpha}_{n}$} and \smash{$\vec{\beta}_{n}^{\ast}$} increase infinitely with increasing order $n$. To carry out the calculations more simply, let us introduce a linear operator $T_{n}$
realizing \smash{$T\raisebox{2pt}{${}^{n}_{\vec{\alpha}_{n},\vec{\beta}_{n}^{\ast}}$}$} as the Fock space representation.

First, we also denote $\vec{\alpha}_{n}$ by
\begin{align*}
\vec{\alpha}_{n}
:=
\alpha_{I}^{n}
\vec{e}_{I}
+
\alpha_{\cancel{I}}^{n}
\vec{e}_{\cancel{I}}
+
\alpha_{i\cancel{i}'}^{n}
\vec{e}_{i\cancel{i}'}
+
\alpha_{\cancel{i}i'}^{n}
\vec{e}_{\cancel{i}i'}
.
\end{align*}
To define a linear operator $T_{n}$, let us construct the Fock space. We introduce the vector space~$V$ over $\mathbb{C}$ by using the basis
$
\big|
\vec{m}
\big\rangle
:=
\bigl|
\sum_{
J
\in
\mathcal{I}
}
m_{J}
\vec{e}_{J}
\big\rangle
$
\begin{align*}
V
:=
\bigg\{
\sum_{
\vec{m}
}
c_{
\vec{m}
}
|
\vec{m}
\rangle
\,\bigg|\,
c_{
\vec{m}
}
\in
\mathbb{C}
\bigg\},
\end{align*}
where
$
m_{J}
\in
\mathbb{Z}_{\geq0}
$
and
$
\mathcal{I}
:=
\{
I,
\cancel{I},
i\cancel{i}',
\cancel{i}i'
\}
$.
$
|
\vec{m}
\rangle
$
is called ``ket'' (of $V$). In particular, the ket whose each $m_{J}$ is 0 is called ``vacuum'' and denoted by
\smash{$
|
 \vec{0}
\rangle
$}.
Here, we introduce the creation and annihilation operators. For any capital letter index $I$, the creation and annihilation operators~$a_{I}$ and \smash{$a_{I}^{\dagger}$} are defined as
linear operators on $V$ satisfying the following relations:
\begin{align*}
 a_{I}
 |
 \vec{m}
 \rangle
 =
 \sqrt{m_{I}}
|
 \vec{m}
 -
 \vec{e}_{I}
\rangle,
 \qquad
 a_{I}
 |\vec{0}\rangle
 =
 0,
 \qquad
 a_{I}^{\dagger}
|
 \vec{m}
\rangle
 =
 \sqrt{m_{I}+1}
|
 \vec{m}
 +
 \vec{e}_{I}
\rangle
 .
\end{align*}
By using the above $a_{I}$ and \smash{$a_{I}^{\dagger}$}, a number operator is defined by
\smash{$
N_{I}
:=
a_{I}^{\dagger}
a_{I}
$}.
$N_{I}$ plays the role of multiplying
by the $I$-th component
of
$
|
 \vec{m}
\rangle
$
,
i.e.,
$
 N_{I}
|
 \vec{m}
\rangle
 =
 m_{I}
|
 \vec{m}
\rangle$.
By definition of $a_{I}$, \smash{$a_{I}^{\dagger}$} and~$N_{I}$, the following relations:
\begin{gather*}
\begin{split}
& [
 a_{I},
 a_{J}
 ]
 =
 0,
\qquad
 \big[
 a_{I}^{\dagger},
 a_{J}^{\dagger}
 \big]
 =
 0,
\qquad
 \big[
 a_{I},
 a_{J}^{\dagger}
 \big]
 =
 \delta_{IJ},
 \\
& [
 a_{I},
 N_{J}
 ]
 =
 \delta_{IJ}a_{J}
 =
 \delta_{IJ}a_{I}
 ,
\qquad
 \big[
 a_{I}^{\dagger},
 N_{J}
 \big]
 =
 -\delta_{IJ}a_{J}^{\dagger}
 =
 -\delta_{IJ}a_{I}^{\dagger}
 ,
\qquad
 [
 N_{I},
 N_{J}
 ]
 =
 0,
\end{split}
\end{gather*}
holds between them for any ket
$
|
\vec{m}
\rangle
$
and any capital letter indices $I$, $J$. For a ket
$
|
\vec{m}
\rangle
$
such that~${m_{I}>0}$, the operator
\smash{$
\frac{1}{
 \sqrt{N_{I}}
}
$}
can be defined by
\[
 \frac{1}{
 \sqrt{N_{I}}
 }
|
 \vec{m}
\rangle
 :=
 \frac{1}{
 \sqrt{m_{I}}
 }
|
 \vec{m}
\rangle
.
\]
We next consider the dual vector space $V^{\ast}$ of $V$. We denote the dual basis of $V^{\ast}$ by
$
\langle
\vec{n}
|
$.
$
\langle
\vec{n}
|
$
satisfies
$
\langle
\vec{n}
|
\vec{m}
\rangle
:=
\delta_{
\vec{n},
\vec{m}
}
$.
$
\langle
\vec{n}
|
$
is called ``bra''.
By taking the Hermitian conjugate of
$
\langle
\vec{n}
|
$,
a~bra~$
\langle
\vec{n}
|
$
becomes a~ket~$
|
\vec{n}
\rangle
$,
i.e.,
\smash{$
|
\vec{n}
\rangle
=
(
\langle
\vec{n}
|
)^{\dagger}
$}.
Let us now consider an operator
\smash{$
a_{I}
\frac{1}{
\sqrt{N_{I}}
}
$}.
Note that since
\begin{align*}
\langle
\vec{m}
|
a_{I}
\frac{1}{
\sqrt{N_{I}}
}
=
 \left(
\frac{1}{
\sqrt{N_{I}}
}
a_{I}^{\dagger}
|
\vec{m}
\rangle
 \right)^{\dagger}
\end{align*}
for any bra
$
\langle
\vec{m}
|
$,
\smash{$
a_{I}
\frac{1}{
\sqrt{N_{I}}
}
$}
can always be defined for any ket
$
|
\vec{n}
\rangle
$
if there exists a bra
$
\langle
\vec{m}
|
$
on the left. For such an operator
$
a_{I}
\frac{1}{
\sqrt{N_{I}}
}
$,
the following commutation relations hold
\begin{gather*}
 \left[
 a_{I}
 \frac{1}{
 \sqrt{N_{I}}
 },
 a_{J}
 \frac{1}{
 \sqrt{N_{J}}
 }
 \right]
|
 \vec{m}
\rangle
 =
 0,
 \qquad
 m_{I},
 m_{J}
 >
 \delta_{IJ}
 ,\nonumber
 \\
 \left[
 a_{I}
 \frac{1}{
 \sqrt{N_{I}}
 },
 a_{J}^{\dagger}
 \frac{1}{
 \sqrt{N_{J}+1}
 }
 \right]
|
 \vec{m}
\rangle
 =
 0,
 \qquad
 m_{I}>0,
 \qquad
 \left[
 a_{I}^{\dagger}
 \frac{1}{
 \sqrt{N_{I}+1}
 },
 a_{J}^{\dagger}
 \frac{1}{
 \sqrt{N_{J}+1}
 }
 \right]
|
 \vec{m}
\rangle
 =
 0.
\end{gather*}
Moreover, let \smash{$f\bigl(\vec{N}\bigr)$} be an arbitrary operator that depends on \smash{$\vec{N}$}, where \smash{$\vec{N}$} is defined by
\begin{align*}
\vec{N}
:=
N_{I}
\vec{e}_{I}
+
N_{\cancel{I}}
\vec{e}_{\cancel{I}}
+
N_{i\cancel{i}'}
\vec{e}_{i\cancel{i}'}
+
N_{\cancel{i}i'}
\vec{e}_{\cancel{i}i'}
.
\end{align*}
Then, \smash{$f\bigl(\vec{N}\bigr)$} holds the following relations:
\begin{gather}
f\bigl(\vec{N}\bigr)
a_{I}
\frac{1}{
\sqrt{N_{I}}
}
|
\vec{m}
\rangle
=
a_{I}
\frac{1}{
\sqrt{N_{I}}
}
f\bigl(
\vec{N}
-
\vec{e}_{I}
\bigr)
|
\vec{m}
\rangle,
\qquad
m_{I}>0,
\label{relation_num_opr_fct1}
\\
f\bigl(\vec{N}\bigr)
a_{I}^{\dagger}
\frac{1}{
\sqrt{N_{I}+1}
}
|
\vec{m}
\rangle
=
a_{I}^{\dagger}
\frac{1}{
\sqrt{N_{I}+1}
}
f
\bigl(
\vec{N}
+
\vec{e}_{I}
\bigr)
|
\vec{m}
\rangle.\label{relation_num_opr_fct2}
\end{gather}
Note that for any
$
\vec{m}
=
m_{I}
\vec{e}_{I}
+
m_{\cancel{I}}
\vec{e}_{\cancel{I}}
+
m_{i\cancel{i}'}
\vec{e}_{i\cancel{i}'}
+
m_{\cancel{i}i'}
\vec{e}_{\cancel{i}i'}
$,
$
a_{J}
|
\vec{m}
\rangle
$
can also be expressed as
\begin{align}
a_{J}
|
\vec{m}
\rangle
=
\sqrt{m_{J}}
|
\vec{m}
-
\vec{e}_{J}
\rangle
=
 \bigg(
\prod_{
L
\in
\mathcal{I}
}
\theta
(
m_{L}
-
\delta_{LJ}
)
 \bigg)
\sqrt{m_{J}}
|
\vec{m}
-
\vec{e}_{J}
\rangle
,\qquad
J\in\mathcal{I},\label{braket_ann_step_fct}
\end{align}
where
$
\theta
\colon
\mathbb{R}
\to
\{
0,1
\}
$
is the step function such that
\begin{align*}
\theta(x)
:=
\begin{cases}
1,
&
x\geq0,
\\
0,
&
x<0.
\end{cases}
\end{align*}
The above expression using the step function $\theta$ is used in calculations in Appendix~\ref{appendix_coeff_T}.

We are now ready to define a linear operator $T_{n}$. We define a linear operator $T_{n}$ on $V$ by
\begin{align}
T_{n}
:=
\sum_{\vec{\alpha}_{n},\vec{\beta}_{n}^{\ast}}
T^{n}_{\vec{\alpha}_{n},\vec{\beta}_{n}^{\ast}}
|
\vec{\alpha}_{n}^{\ast}
\rangle
\langle
\vec{\beta}_{n}|.\label{braket_coeff}
\end{align}
\smash{$T\raisebox{2pt}{${}^{n}_{\vec{\alpha}_{n},\vec{\beta}_{n}^{\ast}}$}$} is a matrix representation of $T_{n}$ using a Fock space $V$.
$
T_{0}
=
|
\vec{0}
\rangle
\langle
\vec{0}
|
$
is given from \eqref{init_cond_01} in Proposition~\ref{prop:covar_const_0and1}.
By Proposition~\ref{Cor_G24_relation_sum}, $T_{n}$ can be expressed by using a linear operator $T_{n-1}$ of order $(n-1)$ as follows:
\begin{align}
 T_{n}
 ={}&
 \sum_{
 J,D
 \in
 \mathcal{I}
 }
 a_{D}^{\dagger}
 \frac{1}{
 \sqrt{
 N_{D}
 +
 1
 }
 }
 T_{n-1}\nonumber\\
 &\times
 \left\{
 a_{J}
 \frac{1}{\sqrt{N_{J}}}
 (
 \tau_{n}
 +
 N_{j\cancel{j}'}
 +
 1
 )
 g_{\overline{J}D}
 +
 a_{J}
 a_{\cancel{J}}
 a_{j\cancel{j}'}^{\dagger}
 \frac{N_{j\cancel{j}'}+1}{
 \sqrt{
 N_{J}
 N_{\cancel{J}}
 (
 N_{j\cancel{j}'}+1
 )
 }
 }
 g_{\overline{\cancel{j}j'},D}
 \right\}
 \nonumber
 \\
 &
 \times
 \tau_{n}^{-1}
 \left\{
 n
 (
 \tau_{n}
 +
 1
 )
 +
 2
 (
 N_{I}
 +
 N_{\cancel{i}i'}
 )
 (
 N_{\cancel{I}}
 +
 N_{i\cancel{i}'}
 )
 \right\}^{-1}. \label{G24_coeff_opr_full}
\end{align}
By applying \eqref{G24_coeff_opr_full} $(n-1)$ times,
the following theorem is obtained.

\begin{Theorem}\label{thm_left_coeff_opr}
A linear operator $T_{n}$ is explicitly given by
\[
 T_{n}
 =
 \sum_{
 \substack{
 J_{i}
 \in
 \{J_{i}\}_{n}\\
 D_{i}
 \in
 \{D_{i}\}_{n}
 }
 }
 \sum_{
 \substack{
 k_{i}=1
 \\
 k_{i}
 \in
 \{
 k_{i}
 \}_{n}
 }
 }^{2}
 a_{D_{n}}^{\dagger}
 \frac{1}{
 \sqrt{
 N_{D_{n}}
 +
 1
 }
 }
 \cdots
 a_{D_{1}}^{\dagger}
 \frac{1}{
 \sqrt{
 N_{D_{1}}
 +
 1
 }
 }
 T_{0}
 \mathcal{B}_{
 1,
 J_{1},
 k_{1}
 }
 \cdots
 \mathcal{B}_{
 n,
 J_{n},
 k_{n}
 },
\]
where
$
T_{0}
=
|
\vec{0}
\rangle
\langle
\vec{0}
|
$,
\begin{align*}
 \mathcal{B}_{
 l,
 J_{l},
 k_{l}
 }
 :={}&
 a_{J_{l}}
 \frac{1}{\sqrt{N_{J_{l}}}}
 \left(
 a_{\cancel{J_{l}}}
 \frac{1}{
 \sqrt{
 N_{\cancel{J_{l}}}
 }
 }
 a_{j_{l}\cancel{j_{l}}'}^{\dagger}
 \frac{1}{
 \sqrt{
 N_{j_{l}\cancel{j_{l}}'}+1
 }
 }
 \right)^{\delta_{\cancel{j_{l}}k_{l}}}
 (
 \tau_{l}
 \delta_{j_{l}k_{l}}
 +
 N_{j_{l}\cancel{j_{l}}'}
 +
 1
 )
 \frac{
 g_{\overline{k_{l}{j_{l}}'},D_{l}}
 }{
 \tau_{l}
 }
 \nonumber
 \\
 &
 \times
 \left\{
 l
 (
 \tau_{l}
 +
 1
 )
 +
 2
 (
 N_{I}
 +
 N_{\cancel{i}i'}
 )
 (
 N_{\cancel{I}}
 +
 N_{i\cancel{i}'}
 )
 \right\}^{-1}
\end{align*}
for $l=1,\dots,n$, and summations
\smash{$
 \sum_{
 \substack{
 J_{i}
 \in
 \{J_{i}\}_{n}\\
 D_{i}
 \in
 \{D_{i}\}_{n}
 }
 }
$}
and
\smash{$
 \sum_{
 \substack{
 k_{i}=1
 \\
 k_{i}
 \in
 \{
 k_{i}
 \}_{n}
 }
 }^{2}
$}
are defined as
\begin{align*}
 \sum_{
 \substack{
 J_{i}
 \in
 \{J_{i}\}_{n}\\
 D_{i}
 \in
 \{D_{i}\}_{n}
 }
 }
 :=
 \sum_{
 J_{1}
 \in
 \mathcal{I}
 }
 \cdots
 \sum_{
 J_{n}
 \in
 \mathcal{I}
 }
 \sum_{
 D_{1}
 \in
 \mathcal{I}
 }
 \cdots
 \sum_{
 D_{n}
 \in
 \mathcal{I}
 },
 \qquad
 \sum_{
 \substack{
 k_{i}=1
 \\
 k_{i}
 \in
 \{
 k_{i}
 \}_{n}
 }
 }^{2}
 :=
 \sum_{
 k_{1}
 =1
 }^{2}
 \cdots
 \sum_{
 k_{n}
 =
 1
 }^{2},
\end{align*}
respectively.
\end{Theorem}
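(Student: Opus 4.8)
The plan is to prove the closed form by induction on $n$, driven by the one-step recursion \eqref{G24_coeff_opr_full}. The key preliminary step is to repackage \eqref{G24_coeff_opr_full} as
\[
 T_n=\sum_{J_n,D_n\in\mathcal I}\sum_{k_n=1}^{2}a_{D_n}^{\dagger}\frac{1}{\sqrt{N_{D_n}+1}}\,T_{n-1}\,\mathcal B_{n,J_n,k_n},
\]
so that the new contribution at order $n$ consists of a single creation factor on the far left and a single block $\mathcal B_{n,J_n,k_n}$ on the far right.

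To establish this repackaging I would observe that the two summands inside the curly braces of \eqref{G24_coeff_opr_full} are exactly the values $k_n=j_n$ and $k_n=\cancel{j_n}$ of the factor $\bigl(a_{\cancel{J_n}}\tfrac{1}{\sqrt{N_{\cancel{J_n}}}}a_{j_n\cancel{j_n}'}^{\dagger}\tfrac{1}{\sqrt{N_{j_n\cancel{j_n}'}+1}}\bigr)^{\delta_{\cancel{j_n}k_n}}$ appearing in $\mathcal B_{n,J_n,k_n}$. For $k_n=j_n$ the exponent $\delta_{\cancel{j_n}k_n}$ vanishes, $\delta_{j_nk_n}=1$, and $\overline{k_nj_n'}=\overline{J_n}$, so that block reproduces the first term verbatim. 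For $k_n=\cancel{j_n}$ the exponent is $1$, $\delta_{j_nk_n}=0$, and $\overline{k_nj_n'}=\overline{\cancel{j_n}j_n'}$; here one must check that the normalization factors agree. After cancelling the common scalar $g_{\overline{\cancel{j_n}j_n'},D_n}/\tau_n$ and the common inverse bracket $\{n(\tau_n+1)+2(N_I+N_{\cancel i i'})(N_{\cancel I}+N_{i\cancel i'})\}^{-1}$, the verification reduces to the operator identity
\[
 a_{J_n}\frac{1}{\sqrt{N_{J_n}}}a_{\cancel{J_n}}\frac{1}{\sqrt{N_{\cancel{J_n}}}}a_{j_n\cancel{j_n}'}^{\dagger}\frac{1}{\sqrt{N_{j_n\cancel{j_n}'}+1}}\bigl(N_{j_n\cancel{j_n}'}+1\bigr)=a_{J_n}a_{\cancel{J_n}}a_{j_n\cancel{j_n}'}^{\dagger}\frac{\sqrt{N_{j_n\cancel{j_n}'}+1}}{\sqrt{N_{J_n}N_{\cancel{J_n}}}},
\]
which follows by simplifying $a_{j_n\cancel{j_n}'}^{\dagger}\tfrac{1}{\sqrt{N_{j_n\cancel{j_n}'}+1}}(N_{j_n\cancel{j_n}'}+1)=a_{j_n\cancel{j_n}'}^{\dagger}\sqrt{N_{j_n\cancel{j_n}'}+1}$ and then commuting the factors $\tfrac{1}{\sqrt{N_{J_n}}}$ and $\tfrac{1}{\sqrt{N_{\cancel{J_n}}}}$ to the right past the remaining operators — legitimate because $J_n$, $\cancel{J_n}$, and $j_n\cancel{j_n}'$ are three distinct capital indices, so the relevant operators commute by the canonical commutation relations together with \eqref{relation_num_opr_fct1} and \eqref{relation_num_opr_fct2}. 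The overall factor $\tau_n^{-1}$ and the inverse bracket, common to both terms, are absorbed into $\mathcal B_{n,J_n,k_n}$ with $l=n$.

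With the one-step form in hand the induction is purely formal. The base case is $n=0$, where $T_0=|\vec 0\rangle\langle\vec 0|$ from \eqref{init_cond_01} in Proposition~\ref{prop:covar_const_0and1} is the empty product. Assuming the stated formula for $T_{n-1}$, I substitute it into the one-step form; by associativity of operator composition the prepended factor $a_{D_n}^{\dagger}\tfrac{1}{\sqrt{N_{D_n}+1}}$ joins the front of the chain $a_{D_{n-1}}^{\dagger}\tfrac{1}{\sqrt{N_{D_{n-1}}+1}}\cdots a_{D_1}^{\dagger}\tfrac{1}{\sqrt{N_{D_1}+1}}$ and the appended block $\mathcal B_{n,J_n,k_n}$ joins the end of $\mathcal B_{1,J_1,k_1}\cdots\mathcal B_{n-1,J_{n-1},k_{n-1}}$, while the new summation indices $J_n,D_n,k_n$ enlarge the families to $\{J_i\}_n,\{D_i\}_n,\{k_i\}_n$. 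No reordering across $T_{n-1}$ is required, since the new creation factor lies entirely to its left and the new block entirely to its right, so the concatenation directly yields the claimed expression. The main obstacle is bookkeeping rather than conceptual: one must keep each shared index $D_l$ consistent between the creation operator $a_{D_l}^{\dagger}$ on the far left and the metric coefficient $g_{\overline{k_lj_l'},D_l}$ buried inside $\mathcal B_{l,J_l,k_l}$ on the right, and track the order-dependent scalars $\tau_l$ and $l(\tau_l+1)+\cdots$ correctly as $l$ ranges over $1,\dots,n$; the only genuine computation is the operator identity displayed above.
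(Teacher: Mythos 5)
Your proposal is correct and follows essentially the same route as the paper: the paper's proof is precisely the observation that Theorem~\ref{thm_left_coeff_opr} follows by iterating the one-step recursion \eqref{G24_coeff_opr_full}, with the identification of the two bracketed summands as the $k_l=j_l$ and $k_l=\cancel{j_l}$ cases of $\mathcal{B}_{l,J_l,k_l}$ resting on the same operator identity you verify (recorded in the paper as the ``Note'' immediately following the theorem, valid because $J_l$, $\cancel{J_l}$, $j_l\cancel{j_l}'$ are distinct indices). Your write-up merely makes explicit what the paper leaves implicit, namely the induction formalizing the iteration and the commutation argument behind the repackaging.
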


Note that
\begin{align*}
a_{J}
a_{\cancel{J}}
a_{j\cancel{j}'}^{\dagger}
\frac{1}{
\sqrt{
N_{J}
N_{\cancel{J}}
 (
N_{j\cancel{j}'}+1
 )
}
}
=
a_{J}
\frac{1}{\sqrt{N_{J}}}
a_{\cancel{J}}
\frac{1}{\sqrt{N_{\cancel{J}}}}
a_{j\cancel{j}'}^{\dagger}
\frac{1}{\sqrt{N_{j\cancel{j}'}+1}}
\end{align*}
for any
$
J
\in
\mathcal{I}
$,
since
$
j\cancel{j}'
\neq
J
(=jj')
$,
$
\cancel{J}
\neq
J
(=jj')
$,
and
$
j\cancel{j}'
\neq
\cancel{J}
$.
By using the relations \eqref{relation_num_opr_fct1} and \eqref{relation_num_opr_fct2}, we obtain
\begin{gather*}
 (
 \tau_{l}
 \delta_{j_{l}k_{l}}
 +
 N_{j_{l}\cancel{j_{l}}'}
 +
 1
 )
 a_{J_{l+1}}
 \frac{1}{\sqrt{N_{J_{l+1}}}}
 \left(
 a_{\cancel{J_{l+1}}}
 \frac{1}{
 \sqrt{
 N_{\cancel{J_{l+1}}}
 }
 }
 a_{j_{l+1}\cancel{j_{l+1}}'}^{\dagger}
 \frac{1}{
 \sqrt{
 N_{j_{l+1}\cancel{j_{l+1}}'}+1
 }
 }
 \right)^{\delta_{\cancel{j_{l+1}}k_{l+1}}}
 \\
 \qquad=
 a_{J_{l+1}}
 \frac{1}{\sqrt{N_{J_{l+1}}}}
 \left(
 a_{\cancel{J_{l+1}}}
 \frac{1}{
 \sqrt{
 N_{\cancel{J_{l+1}}}
 }
 }
 a_{j_{l+1}\cancel{j_{l+1}}'}^{\dagger}
 \frac{1}{
 \sqrt{
 N_{j_{l+1}\cancel{j_{l+1}}'}+1
 }
 }
 \right)^{\delta_{\cancel{j_{l+1}}k_{l+1}}}
 \\
\phantom{ \qquad=}{} \times
 [
 \tau_{l}
 \delta_{j_{l}k_{l}}
 +
 N_{j_{l}\cancel{j_{l}}'}
 -
 \{
 \delta_{
 j_{l}\cancel{j_{l}}',
 J_{l+1}
 }
 +
 \delta_{\cancel{j_{l+1}}k_{l+1}}
 (
 \delta_{
 j_{l}\cancel{j_{l}}',
 \cancel{J_{l+1}}
 }
 -
 \delta_{
 j_{l}\cancel{j_{l}},
 j_{l+1}\cancel{j_{l+1}}'
 }
 )
 \}
 +
 1
 ],
\end{gather*}
and
\begin{gather*}
 \{
 l
 (
 \tau_{l}
 +
 1
 )
 +
 2
 (
 N_{I}
 +
 N_{\cancel{i}i'}
 )
 (
 N_{\cancel{I}}
 +
 N_{i\cancel{i}'}
 )
 \}^{-1}
 a_{J_{l+1}}\\
 \qquad\times
 \frac{1}{\sqrt{N_{J_{l+1}}}}
 \left(
 a_{\cancel{J_{l+1}}}
 \frac{1}{
 \sqrt{
 N_{\cancel{J_{l+1}}}
 }
 }
 a_{j_{l+1}\cancel{j_{l+1}}'}^{\dagger}
 \frac{1}{
 \sqrt{
 N_{j_{l+1}\cancel{j_{l+1}}'}+1
 }
 }
 \right)^{\delta_{\cancel{j_{l+1}}k_{l+1}}}
 \\
 \phantom{ \qquad\times}{}
 =
 a_{J_{l+1}}
 \frac{1}{\sqrt{N_{J_{l+1}}}}
 \left(
 a_{\cancel{J_{l+1}}}
 \frac{1}{
 \sqrt{
 N_{\cancel{J_{l+1}}}
 }
 }
 a_{j_{l+1}\cancel{j_{l+1}}'}^{\dagger}
 \frac{1}{
 \sqrt{
 N_{j_{l+1}\cancel{j_{l+1}}'}+1
 }
 }
 \right)^{\delta_{\cancel{j_{l+1}}k_{l+1}}}
 \\
 \phantom{ \qquad\times=}{}
 \times
 [
 l
 (
 \tau_{l}
 +
 1
 )
 +
 2
 \{
 N_{I}
 +
 N_{\cancel{i}i'}
 -
 (
 \delta_{IJ_{l+1}}
 +
 \delta_{\cancel{i}i',J_{l+1}}
 )
 \}\\
 \phantom{ \qquad=\times
 \times}{}\times
 \{
 N_{\cancel{I}}
 +
 N_{i\cancel{i}'}
 -
 (
 \delta_{\cancel{I}J_{l+1}}
 +
 \delta_{i\cancel{i}',J_{l+1}}
 )
 \}
 ]^{-1}
 .
\end{gather*}
Here we use
\begin{gather*}
 \delta_{J_{l+1}I}
 +
 \delta_{\cancel{j_{l+1}}k_{l+1}}
 (
 \delta_{\cancel{J_{l+1}},I}
 -
 \delta_{j_{l+1}\cancel{j_{l+1}}',I}
 )
 +
 \delta_{J_{l+1},\cancel{i}i'}
 +
 \delta_{\cancel{j_{l+1}}k_{l+1}}
 (
 \delta_{\cancel{J_{l+1}},\cancel{i}i'}
 -
 \delta_{j_{l+1}\cancel{j_{l+1}}',\cancel{i}i'}
 )\\
 \qquad
 =
 \delta_{J_{l+1}I}
 +
 \delta_{J_{l+1},\cancel{i}i'},
 \\
 \delta_{J_{l+1}\cancel{I}}
 +
 \delta_{\cancel{j_{l+1}}k_{l+1}}
 (
 \delta_{\cancel{J_{l+1}},\cancel{I}}
 -
 \delta_{j_{l+1}\cancel{j_{l+1}}',\cancel{I}}
 )
 +
 \delta_{J_{l+1},i\cancel{i}'}
 +
 \delta_{\cancel{j_{l+1}}k_{l+1}}
 (
 \delta_{\cancel{J_{l+1}},i\cancel{i}'}
 -
 \delta_{j_{l+1}\cancel{j_{l+1}}',i\cancel{i}'}
 )\\
 \qquad
 =
 \delta_{J_{l+1}\cancel{I}}
 +
 \delta_{J_{l+1},i\cancel{i}'}
\end{gather*}
implied from the following identities
\begin{align*}
 \delta_{\cancel{J_{l+1}},I}
 +
 \delta_{\cancel{J_{l+1}},\cancel{i}i'}
 -
 \delta_{j_{l+1}\cancel{j_{l+1}}',I}
 -
 \delta_{j_{l+1}\cancel{j_{l+1}}',\cancel{i}i'}
 &=
 0,
 \\
 \delta_{\cancel{J_{l+1}},\cancel{I}}
 +
 \delta_{\cancel{J_{l+1}},i\cancel{i}'}
 -
 \delta_{j_{l+1}\cancel{j_{l+1}}',\cancel{I}}
 -
 \delta_{j_{l+1}\cancel{j_{l+1}}',i\cancel{i}'}
 &=
 0.
\end{align*}
\noindent
Thus,
$
\mathcal{B}_{
1,
J_{1},
k_{1}
}
\cdots
\mathcal{B}_{
n,
J_{n},
k_{n}
}
$
can be rewritten as
\begin{align*}
\mathcal{B}_{
1,
J_{1},
k_{1}
}
\cdots
\mathcal{B}_{
n,
J_{n},
k_{n}
}
={}&
\frac{
 g_{\overline{k_{1}{j_{1}}'},D_{1}}
 \cdots
 g_{\overline{k_{n}{j_{n}}'},D_{n}}
}{
 \tau_{1}
 \cdots
 \tau_{n}
}
\mathcal{A}_{
J_{1},
k_{1}
}
\cdots
\mathcal{A}_{
J_{n},
k_{n}
}
\mathcal{C}_{
1,
\{
 J_{i}
\}_{n},
\{
 k_{i}
\}_{n}
}\\
&
\cdots
\mathcal{C}_{
n,
\{
 J_{i}
\}_{n},
\{
 k_{i}
\}_{n}
}
\mathcal{F}_{
1,
\{
 J_{i}
\}_{n},
\{
 k_{i}
\}_{n}
}
\cdots
\mathcal{F}_{
n,
\{
 J_{i}
\}_{n},
\{
 k_{i}
\}_{n}
}
.
\end{align*}
Here
$
\mathcal{A}_{
J_{l},
k_{l}
}
$,
\smash{$
\mathcal{C}_{
l,
\{
 J_{i}
\}_{n},
\{
 k_{i}
\}_{n}
}
$}
and
\smash{$
\mathcal{F}_{
l,
\{
 J_{i}
\}_{n},
\{
 k_{i}
\}_{n}
}
$}
are defined as
\begin{gather*}
\mathcal{A}_{
J_{l},
k_{l}
}
:=
a_{J_{l}}
\frac{1}{\sqrt{N_{J_{l}}}}
 \left(
 a_{\cancel{J_{l}}}
 \frac{1}{
 \sqrt{
 N_{\cancel{J_{l}}}
 }
 }
 a_{j_{l}\cancel{j_{l}}'}^{\dagger}
 \frac{1}{
 \sqrt{
 N_{j_{l}\cancel{j_{l}}'}+1
 }
 }
 \right)^{\delta_{\cancel{j_{l}}k_{l}}}
,
\\
\mathcal{C}_{
l,
\{
 J_{i}
\}_{n},
\{
 k_{i}
\}_{n}
}
:=
 (
 \tau_{l}
 \delta_{j_{l}k_{l}}
 +
 N_{j_{l}\cancel{j_{l}}'}
 -
 \Lambda_{
 l,
 j_{l}\cancel{j_{l}}',
 \{
 J_{i}
 \}_{n},
 \{
 k_{i}
 \}_{n}
 }
 +
 1
 ),
\\
\mathcal{F}_{
l,
\{
 J_{i}
\}_{n},
\{
 k_{i}
\}_{n}
}
:=
\{
 l
 (
 \tau_{l}
 +
 1
 )
 +
 2
 (
 N_{I}
 +
 N_{\cancel{i}i'}
 -
 \Delta_{
 I,
 \cancel{i}i',
 l,
 \{
 J_{i}
 \}_{n}
 }
 )
 (
 N_{\cancel{I}}
 +
 N_{i\cancel{i}'}
 -
 \Delta_{
 \cancel{I},
 i\cancel{i}',
 l,
 \{
 J_{i}
 \}_{n}
 }
 )
\}^{-1},
\end{gather*}
where
\begin{gather}
\Lambda_{
 l,
 S,
 \{
 J_{i}
 \}_{n},
 \{
 k_{i}
 \}_{n}
}
:=
\sum_{m=1}^{n}
d_{
 S,
 J_{m},
 k_{m}
}
-
\sum_{m=1}^{l}
 d_{
 S,
 J_{m},
 k_{m}
}
=
\sum_{m=l+1}^{n}
d_{
 S,
 J_{m},
 k_{m}
}
,
\label{notation_Lambda}
\\
d_{
 S,
 J_{m},
 k_{m}
}
:=
\delta_{
 S,
 J_{m}
}
+
\delta_{\cancel{j_{m}}k_{m}}
 (
 \delta_{
 S,
 \cancel{J_{m}}
 }
 -
 \delta_{
 S,
 j_{m}\cancel{j_{m}}'
 }
 )
,
\label{notation_d}
\\
\Delta_{
I,
\cancel{i}i',
l,
\{
 J_{i}
\}_{n}
}
:=
\sum_{m=1}^{n}
 (
 \delta_{IJ_{m}}
 +
 \delta_{\cancel{i}i',J_{m}}
 )
-
\sum_{m=1}^{l}
 (
 \delta_{IJ_{m}}
 +
 \delta_{\cancel{i}i',J_{m}}
 )
=
\sum_{m=l+1}^{n}
 (
 \delta_{IJ_{m}}
 +
 \delta_{\cancel{i}i',J_{m}}
 )
,
\label{notation_delta_Iisi}
\\
\Delta_{
\cancel{I},
i\cancel{i}',
l,
\{
 J_{i}
\}_{n}
}
:=
\sum_{m=1}^{n}
 (
 \delta_{\cancel{I}J_{m}}
 +
 \delta_{i\cancel{i}',J_{m}}
 )
-
\sum_{m=1}^{l}
 (
 \delta_{\cancel{I}J_{m}}
 +
 \delta_{i\cancel{i}',J_{m}}
 )
=
\sum_{m=l+1}^{n}
 (
 \delta_{\cancel{I}J_{m}}
 +
 \delta_{i\cancel{i}',J_{m}}
 )
\label{notation_delta_Isiis}
\end{gather}
for $l=1,\dots,n$,
$
S
\in
\mathcal{I}
$,
$
\{
 J_{i}
\}_{n}
:=
\{
J_{1},
\dots,
J_{n}
\}
$
and
$
\{
 k_{i}
\}_{n}
:=
\{
k_{1},
\dots,
k_{n}
\}
$.
Note that
$
\mathcal{F}_{
l,
\{
 J_{i}
\}_{n},
\{
 k_{i}
\}_{n}
}
$
does not depend on the fixed index $I$ because of
Proposition~\ref{cor_app_sym_opr}.
Hence, we obtain the explicit expression for $T_{n}$ as the solution of
Proposition~\ref{Cor_G24_relation_sum}.

\begin{Theorem}\label{thm_G24_opr}
A linear operator $T_{n}$ is explicitly given by
\begin{align}
T_{n}
={}&
\sum_{
 \substack{
 J_{i}
 \in
 \{
 J_{i}
 \}_{n}
 \\
 D_{i}
 \in
 \{
 D_{i}
 \}_{n}
 }
}
\sum_{
 \substack{
 k_{i}=1
 \\
 k_{i}
 \in
 \{
 k_{i}
 \}_{n}
 }
}^{2}
a_{D_{n}}^{\dagger}
\frac{1}{
 \sqrt{
 N_{D_{n}}
 +
 1
 }
}
\cdots
a_{D_{1}}^{\dagger}
\frac{1}{
 \sqrt{
 N_{D_{1}}
 +
 1
 }
}
T_{0}
\nonumber
\\
&
\times
\mathcal{A}_{
J_{1},
k_{1}
}
\cdots
\mathcal{A}_{
J_{n},
k_{n}
}
\prod_{l=1}^{n}
 \left(
\frac{
 g_{\overline{k_{l}{j_{l}}'},D_{l}}
}{
 \tau_{l}
}
\mathcal{C}_{
l,
\{
 J_{i}
\}_{n},
\{
 k_{i}
\}_{n}
}
\mathcal{F}_{
l,
\{
 J_{i}
\}_{n},
\{
 k_{i}
\}_{n}
}
 \right),
\label{star_coeff_G2_4_Opr}
\end{align}
where
\smash{$
T_{0}
=
|
\vec{0}
\rangle
\langle
\vec{0}
|
$}.
\end{Theorem}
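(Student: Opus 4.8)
The plan is to obtain \eqref{star_coeff_G2_4_Opr} directly from the ordered product form of $T_n$ furnished by Theorem~\ref{thm_left_coeff_opr}, by normal-ordering---that is, by sliding every annihilation--creation block to the left of all the number-operator-valued scalar factors. First I would split each block $\mathcal{B}_{l,J_l,k_l}$ into its operator part $\mathcal{A}_{J_l,k_l}$, the c-number $g_{\overline{k_lj_l'},D_l}/\tau_l$, and the two scalar factors $(\tau_l\delta_{j_lk_l}+N_{j_l\cancel{j_l}'}+1)$ and $\{l(\tau_l+1)+2(N_I+N_{\cancel{i}i'})(N_{\cancel{I}}+N_{i\cancel{i}'})\}^{-1}$. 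In the product $\mathcal{B}_{1,J_1,k_1}\cdots\mathcal{B}_{n,J_n,k_n}$ these pieces alternate, and the goal is to carry each $\mathcal{A}_{J_m,k_m}$ with $m\ge2$ past every scalar factor of lower index, collecting the resulting argument shifts.

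The mechanism is the pair of commutation rules \eqref{relation_num_opr_fct1} and \eqref{relation_num_opr_fct2}: passing $a_J\tfrac{1}{\sqrt{N_J}}$ (resp.\ $a_J^\dagger\tfrac{1}{\sqrt{N_J+1}}$) to the left of any $f(\vec N)$ sends $\vec N\mapsto\vec N-\vec e_J$ (resp.\ $\vec N+\vec e_J$). Since $\mathcal{A}_{J_m,k_m}$ is assembled precisely from these normalized operators---always $a_{J_m}\tfrac{1}{\sqrt{N_{J_m}}}$, and in addition $a_{\cancel{J_m}}\tfrac{1}{\sqrt{N_{\cancel{J_m}}}}\,a_{j_m\cancel{j_m}'}^\dagger\tfrac{1}{\sqrt{N_{j_m\cancel{j_m}'}+1}}$ when $\delta_{\cancel{j_m}k_m}=1$---moving it left shifts the $S$-slot of any $f(\vec N)$ by exactly $-d_{S,J_m,k_m}$, where $d$ is the quantity \eqref{notation_d}: the $\delta_{S,J_m}$ term records the $J_m$-annihilation and the $\delta_{\cancel{j_m}k_m}(\delta_{S,\cancel{J_m}}-\delta_{S,j_m\cancel{j_m}'})$ term the extra annihilation--creation pair. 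Applying this single-step rule repeatedly, the scalar factors originally in $\mathcal{B}_{l,J_l,k_l}$ are crossed by $\mathcal{A}_{J_{l+1},k_{l+1}},\dots,\mathcal{A}_{J_n,k_n}$, so their cumulative shift is $\sum_{m=l+1}^n d_{S,J_m,k_m}=\Lambda_{l,S,\{J_i\}_n,\{k_i\}_n}$ by \eqref{notation_Lambda}; this converts $(\tau_l\delta_{j_lk_l}+N_{j_l\cancel{j_l}'}+1)$ into $\mathcal{C}_{l,\{J_i\}_n,\{k_i\}_n}$ and the second factor into $\mathcal{F}_{l,\{J_i\}_n,\{k_i\}_n}$ with the shifts $\Delta_{I,\cancel{i}i',l,\{J_i\}_n}$, $\Delta_{\cancel{I},i\cancel{i}',l,\{J_i\}_n}$ of \eqref{notation_delta_Iisi}--\eqref{notation_delta_Isiis}.

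The step I expect to be the main obstacle is showing that the shift of the $\mathcal{F}$-factor collapses to the clean combinations $\delta_{IJ_m}+\delta_{\cancel{i}i',J_m}$ and $\delta_{\cancel{I}J_m}+\delta_{i\cancel{i}',J_m}$, that is, that the $\delta_{\cancel{j_m}k_m}$-dependent contributions to $d_{I,J_m,k_m}+d_{\cancel{i}i',J_m,k_m}$ cancel. This reduces to the two Kronecker-delta identities on the four-element set $\mathcal{I}=\{I,\cancel{I},i\cancel{i}',\cancel{i}i'\}$, which hold because on $G_{2,4}(\mathbb{C})$ the partners $\cancel{J_m}$ and $j_m\cancel{j_m}'$ are forced by $J_m$ to occupy the same two slots of $\{I,\cancel{i}i'\}$ and of $\{\cancel{I},i\cancel{i}'\}$. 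With these identities in hand, I would collect the c-numbers $g_{\overline{k_lj_l'},D_l}/\tau_l$ into the product $\prod_{l=1}^n$, reassemble the normal-ordered expression, and invoke Proposition~\ref{cor_app_sym_opr} for the $I$-independence of each $\mathcal{F}_{l,\{J_i\}_n,\{k_i\}_n}$; substituting the reordered $\mathcal{B}_{1,J_1,k_1}\cdots\mathcal{B}_{n,J_n,k_n}$ back into Theorem~\ref{thm_left_coeff_opr} then yields \eqref{star_coeff_G2_4_Opr}.
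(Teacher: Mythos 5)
Your proposal is correct and takes essentially the same route as the paper: the paper also normal-orders the product $\mathcal{B}_{1,J_1,k_1}\cdots\mathcal{B}_{n,J_n,k_n}$ from Theorem~\ref{thm_left_coeff_opr} by pushing the number-operator-valued factors past the $\mathcal{A}_{J_m,k_m}$ blocks via \eqref{relation_num_opr_fct1}--\eqref{relation_num_opr_fct2}, accumulating the shifts into $\Lambda_{l,S,\{J_i\}_n,\{k_i\}_n}$ and $\Delta_{I,\cancel{i}i',l,\{J_i\}_n}$, $\Delta_{\cancel{I},i\cancel{i}',l,\{J_i\}_n}$. The cancellation you single out as the main obstacle is handled in the paper by precisely the two Kronecker-delta identities you describe (that $\cancel{J_m}$ and $j_m\cancel{j_m}'$ always lie in the same pair among $\{I,\cancel{i}i'\}$ and $\{\cancel{I},i\cancel{i}'\}$), followed by the same appeal to Proposition~\ref{cor_app_sym_opr} for $I$-independence.
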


The coefficient
\smash{$T^{n}_{\vec{\alpha}_{n},\vec{\beta}_{n}^{\ast}}$} is recovered from \eqref{braket_coeff} as
\begin{align}
 T^{n}_{\vec{\alpha}_{n},\vec{\beta}_{n}^{\ast}}
 =
 \langle
 \vec{\alpha}_{n}
 |
 T_{n}
 |
 \vec{\beta}_{n}^{\ast}
 \rangle
 .
 \label{opr_T_norm}
\end{align}
By Theorem~\ref{thm_G24_opr}, we can explicitly give \smash{$T^{n}_{\vec{\alpha}_{n},\vec{\beta}_{n}^{\ast}}$}.

\begin{Theorem}\label{thm_star_coeff_G2_4}
The coefficient \smash{$T^{n}_{\vec{\alpha}_{n},\vec{\beta}_{n}^{\ast}}$} which determines a star product with separation of variables on $G_{2,4}(\mathbb{C})$ is given by
\begin{align}
 T^{n}_{\vec{\alpha}_{n},\vec{\beta}_{n}^{\ast}}
 ={}&
 \sum_{
 \substack{
 J_{i}
 \in
 \{
 J_{i}
 \}_{n}
 \\
 D_{i}
 \in
 \{
 D_{i}
 \}_{n}
 }
 }
 \sum_{
 \substack{
 k_{i}=1
 \\
 k_{i}
 \in
 \{
 k_{i}
 \}_{n}
 }
 }^{2}
 \delta_{
 \vec{\alpha}_{n},
 \sum_{m=1}^{n}
 \vec{e}_{D_{m}}
 }
 \delta_{
 \vec{\beta}_{n}^{\ast},
 \sum_{
 X
 \in
 \mathcal{I}
 }
 \sum_{m=1}^{n}
 d_{
 X,
 J_{m},
 k_{m}
 }
 \vec{e}_{X}^{\ast}
 }
 \nonumber
 \\
 &
 \times
 \left(
 \prod_{
 S
 \in
 \mathcal{I}
 }
 \prod_{r=1}^{n}
 \theta
 \left(
 \beta_{S}^{n}
 -
 \sum_{m=r}^{n}
 d_{S,J_{m},k_{m}}
 \right)
 \right)
 \left(
 \prod_{l=1}^{n}
 \frac{g_{\overline{k_{l}j_{l}'},D_{l}}}{\tau_{l}}
 \right)
 \nonumber
 \\
 &
 \times
 \left\{
 \prod_{l=1}^{n}
 \frac{
 \tau_{l}
 \delta_{j_{l}k_{l}}
 +
 \beta_{j_{l}\cancel{j_{l}}'}^{n}
 +
 1
 -
 \Lambda_{
 l,
 j_{l}\cancel{j_{l}}',
 \{
 J_{i}
 \}_{n},
 \{
 k_{i}
 \}_{n}
 }
 }{
 l
 (
 \tau_{l}+1
 )
 +
 2
 (
 \beta_{I}^{n}
 +
 \beta_{\cancel{i}i'}^{n}
 -
 \Delta_{
 I,
 \cancel{i}i',
 l,
 \{
 J_{i}
 \}_{n}
 }
 )
 (
 \beta_{\cancel{I}}^{n}
 +
 \beta_{i\cancel{i}'}^{n}
 -
 \Delta_{
 \cancel{I},
 i\cancel{i}',
 l,
 \{
 J_{i}
 \}_{n}
 }
 )
 }
 \right\}
 .
 \label{star_coeff_G2_4}
\end{align}
\end{Theorem}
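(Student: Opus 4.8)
The plan is to evaluate the matrix element \eqref{opr_T_norm} directly, substituting the closed operator form of $T_n$ supplied by Theorem~\ref{thm_G24_opr}. Since $T_0=|\vec{0}\rangle\langle\vec{0}|$ sits between the string of creation operators and the remaining operators in \eqref{star_coeff_G2_4_Opr}, the matrix element $\langle\vec{\alpha}_n|T_n|\vec{\beta}_n^\ast\rangle$ factorizes, for each fixed choice of the summation indices, into a bra factor $\langle\vec{\alpha}_n|a_{D_n}^\dagger\frac{1}{\sqrt{N_{D_n}+1}}\cdots a_{D_1}^\dagger\frac{1}{\sqrt{N_{D_1}+1}}|\vec{0}\rangle$ and a ket factor $\langle\vec{0}|\mathcal{A}_{J_1,k_1}\cdots\mathcal{A}_{J_n,k_n}\prod_{l=1}^n\mathcal{C}_{l,\{J_i\}_n,\{k_i\}_n}\mathcal{F}_{l,\{J_i\}_n,\{k_i\}_n}|\vec{\beta}_n^\ast\rangle$, multiplied by the scalar prefactor $\prod_{l=1}^n g_{\overline{k_lj_l'},D_l}/\tau_l$ carried along from \eqref{star_coeff_G2_4_Opr}.

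First I would treat the bra factor. The key observation is that each normalized creation operator $a_D^\dagger\frac{1}{\sqrt{N_D+1}}$ adds $\vec{e}_D$ to a ket without producing any square-root factor, because the $\frac{1}{\sqrt{N_D+1}}$ cancels the $\sqrt{m_D+1}$ coming from $a_D^\dagger$. Hence the product applied to the vacuum gives $\bigl|\sum_{m=1}^n\vec{e}_{D_m}\bigr\rangle$, and pairing with $\langle\vec{\alpha}_n|$ yields the Kronecker delta $\delta_{\vec{\alpha}_n,\sum_{m=1}^n\vec{e}_{D_m}}$, the first delta in \eqref{star_coeff_G2_4}.

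Next I would evaluate the ket factor in two stages. The functions $\mathcal{C}_{l,\ldots}$ and $\mathcal{F}_{l,\ldots}$ depend only on $\vec{N}$, so on the eigenstate $|\vec{\beta}_n^\ast\rangle$ they are replaced by their eigenvalues under $N_S\mapsto\beta_S^n$; this reproduces the numerator $\tau_l\delta_{j_lk_l}+\beta_{j_l\cancel{j_l}'}^n+1-\Lambda_{l,j_l\cancel{j_l}',\{J_i\}_n,\{k_i\}_n}$ and the denominator $l(\tau_l+1)+2(\beta_I^n+\beta_{\cancel{i}i'}^n-\Delta_{I,\cancel{i}i',l,\{J_i\}_n})(\beta_{\cancel{I}}^n+\beta_{i\cancel{i}'}^n-\Delta_{\cancel{I},i\cancel{i}',l,\{J_i\}_n})$ in the last line of \eqref{star_coeff_G2_4}. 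After factoring out these eigenvalues, the remaining string $\mathcal{A}_{J_1,k_1}\cdots\mathcal{A}_{J_n,k_n}$ acts by the same normalization cancellation: each $\mathcal{A}_{J_l,k_l}$ shifts the ket by $-\sum_{X\in\mathcal{I}}d_{X,J_l,k_l}\vec{e}_X$ with $d$ given by \eqref{notation_d}, so that the total shift is $\sum_{X\in\mathcal{I}}\sum_{m=1}^n d_{X,J_m,k_m}\vec{e}_X$, and pairing with $\langle\vec{0}|$ produces the second Kronecker delta.

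The step-function product is where the genuine bookkeeping lies, and I expect this to be the main obstacle. Using \eqref{braket_ann_step_fct}, each normalized annihilation operator in $\mathcal{A}_{J_l,k_l}$ contributes a step function enforcing nonnegativity of the component it lowers at that stage. Applying the operators from right to left, after $\mathcal{A}_{J_n,k_n}\cdots\mathcal{A}_{J_r,k_r}$ have acted the $S$-component of the intermediate ket equals $\beta_S^n-\sum_{m=r}^n d_{S,J_m,k_m}$, and the validity condition imposed at the $r$-th stage is exactly $\theta\bigl(\beta_S^n-\sum_{m=r}^n d_{S,J_m,k_m}\bigr)$ for each $S\in\mathcal{I}$. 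Collecting these over $r=1,\dots,n$ and $S\in\mathcal{I}$ gives the displayed product of step functions. The delicate part is keeping track of the intermediate occupation numbers through a mixed string that also raises the $j_l\cancel{j_l}'$ component when $\delta_{\cancel{j_l}k_l}=1$, and verifying that the accumulated constraints collapse to this clean closed form; this is precisely the computation carried out in Appendix~\ref{appendix_coeff_T}, and invoking it completes the derivation of \eqref{star_coeff_G2_4}.
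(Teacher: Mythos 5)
Your proposal follows essentially the same route as the paper's own proof (Appendix~\ref{appendix_coeff_T}): factorize $\langle\vec{\alpha}_{n}|T_{n}|\vec{\beta}_{n}^{\ast}\rangle$ at $T_{0}=|\vec{0}\rangle\langle\vec{0}|$, use the normalization cancellation of $a_{D}^{\dagger}\frac{1}{\sqrt{N_{D}+1}}$ on the vacuum to get $\delta_{\vec{\alpha}_{n},\sum_{m}\vec{e}_{D_{m}}}$, replace the number-operator functions $\mathcal{C}_{l}$, $\mathcal{F}_{l}$ by their eigenvalues on $|\vec{\beta}_{n}^{\ast}\rangle$, and track the shifts and step functions produced by $\mathcal{A}_{J_{1},k_{1}}\cdots\mathcal{A}_{J_{n},k_{n}}$ acting right to left, which is exactly the content of \eqref{act_C}--\eqref{act_a_opr}. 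Your identification of the intermediate occupation numbers $\beta_{S}^{n}-\sum_{m=r}^{n}d_{S,J_{m},k_{m}}$ and the resulting product of step functions is precisely the paper's \eqref{act_A}, so the argument is correct and complete in the same sense as the paper's.
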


See Appendix~\ref{appendix_coeff_T} for detailed calculations to obtain Theorem~\ref{thm_star_coeff_G2_4}. Note that \eqref{star_coeff_G2_4} does not depend on the capital letter index $I$.

Now that we are ready, let us begin the proof of the main theorem, Theorem~\ref{thm_star_prd_G2_4}.
Substituting \eqref{star_coeff_G2_4} into
\eqref{star_expand}, we can obtain the explicit star product with separation of variables on~$G_{2,4}(\mathbb{C})$. Recalling that the right-hand side of \eqref{star_coeff_G2_4} includes Kronecker's delta
\[
 \delta_{
 \vec{\alpha}_{n},
 \sum_{m=1}^{n}
 \vec{e}_{D_{m}}
 }
\qquad \text{and}\qquad
 \delta_{
 \vec{\beta}_{n}^{\ast},
 \sum_{
 P
 \in
 \mathcal{I}
 }
 \sum_{m=1}^{n}
 d_{
 X,
 J_{m},
 k_{m}
 }
 \vec{e}_{X}^{\ast} },
\]
then
$
\vec{\alpha}_{n},
$
and
\smash{$
\vec{\beta}_{n}^{\ast}
$}
that contribute to
\smash{$
\sum_{
\vec{\alpha}_{n},
\vec{\beta}_{n}^{\ast}
}
$}
in 
\eqref{star_expand} as non-zero are as follows:
\begin{align*}
 \vec{\alpha}_{n}
 =
 \sum_{m=1}^{n}
 \vec{e}_{D_{m}},
 \qquad
 \vec{\beta}_{n}^{\ast}
 =
 \sum_{
 X
 \in
 \mathcal{I}
 }
 \sum_{m=1}^{n}
 d_{
 X,
 J_{m},
 k_{m}
 }
 \vec{e}_{X}^{\ast}.
\end{align*}
In more detail, components of \smash{$\vec{\beta}_{n}^{\ast}$} are also rewritten as follows:
\begin{gather}
 \beta_{I}^{n}
 =
 \sum_{m=1}^{n}
 d_{
 I,
 J_{m},
 k_{m}
 },
 \qquad
 \beta_{\cancel{I}}^{n}
 =
 \sum_{m=1}^{n}
 d_{
 \cancel{I},
 J_{m},
 k_{m}
 },
 \qquad
 \beta_{i\cancel{i}'}^{n}
 =
 \sum_{m=1}^{n}
 d_{
 i\cancel{i}',
 J_{m},
 k_{m}
 },\nonumber
 \\
 \beta_{\cancel{i}i'}^{n}
 =
 \sum_{m=1}^{n}
 d_{
 \cancel{i}i',
 J_{m},
 k_{m}
 },
 \label{nontri_I}
\end{gather}
respectively.
From \eqref{notation_Lambda} and \eqref{nontri_I}, then we have
\begin{align*}
 \beta_{j_{l}\cancel{j_{l}}'}^{n}
 -
 \Lambda_{
 l,
 j_{l}\cancel{j_{l}}',
 \{
 J_{i}
 \}_{n},
 \{
 k_{i}
 \}_{n}
 }
 =&
 \sum_{m=1}^{l}
 d_{
 j_{l}\cancel{j_{l}}',
 J_{m},
 k_{m}
 }
 .
\end{align*}
From \eqref{notation_d}, \eqref{notation_delta_Iisi}, \eqref{nontri_I}, and using
$
\delta_{I\cancel{J_{m}}}
-
\delta_{I,j_{m}\cancel{j_{m}}'}
+
\delta_{\cancel{i}i',\cancel{J_{m}}}
-
\delta_{\cancel{i}i',j_{m}\cancel{j_{m}}'}
=
0
$,
\begin{gather*}
 (
 \beta_{I}^{n}
 +
 \beta_{\cancel{i}i'}^{n}
 -
 \Delta_{
 I,
 \cancel{i}i',
 l,
 \{
 J_{i}
 \}_{n}
 }
 )
 (
 \beta_{\cancel{I}}^{n}
 +
 \beta_{i\cancel{i}'}^{n}
 -
 \Delta_{
 \cancel{I},
 i\cancel{i}',
 l,
 \{
 J_{i}
 \}_{n}
 }
 )\\
 \qquad
 =
 \left\{
 \sum_{m=1}^{l}
 (
 \delta_{IJ_{m}}
 +
 \delta_{\cancel{i}i',J_{m}}
 )
 \right\}
 \left\{
 \sum_{m=1}^{l}
 (
 \delta_{\cancel{I}J_{m}}
 +
 \delta_{i\cancel{i}',J_{m}}
 )
 \right\}
\end{gather*}
for $l=1,\dots,n$.
In this case, the step function
$
 \theta
 \bigl(
 \beta_{S}^{n}
 -
 \sum_{m=r}^{n}
 d_{S,J_{m},k_{m}}
 \bigr)
$
in \eqref{star_coeff_G2_4} can be rewritten~as
\begin{align*}
 \theta
 \left(
 \beta_{S}^{n}
 -
 \sum_{m=r}^{n}
 d_{S,J_{m},k_{m}}
 \right)
 =
 \theta
 \left(
 \sum_{m=1}^{r-1}
 d_{
 S,
 J_{m},
 k_{m}
 }
 \right)
\end{align*}
from \eqref{notation_d} and \eqref{nontri_I}.
The derivations are also rewritten as
\begin{align*}
 D^{\vec{\alpha}_{n}}
 =
 D^{
 \sum_{m=1}^{n}
 \vec{e}_{D_{m}}
 },
 \qquad
 D^{\vec{\beta}_{n}^{\ast}}
 =
 D^{
 \sum_{
 X
 \in
 \mathcal{I}
 }
 \sum_{m=1}^{n}
 d_{
 X,
 J_{m},
 k_{m}
 }
 \vec{e}_{X}^{\ast}
 }
\end{align*}
from \eqref{nontri_I}. Hence, we obtain Theorem~\ref{thm_star_prd_G2_4} (our main theorem).

\section[Solution T\_n for n=0,1,2]{Solution $\boldsymbol{T_{n}}$ for $\boldsymbol{n=0,1,2}$}
\label{G24_sol_n012}

In this section, we give the linear operators
$T_{1}$, $T_{2}$
that are $n=1,2$ cases of~\eqref{star_coeff_G2_4_Opr}. By using the obtained linear operators, we calculate
\[
T^{1}_{\vec{\alpha}_{1},\vec{\beta}_{1}^{\ast}}
=
\langle
\vec{\alpha}_{1}
|
T_{1}
|
\vec{\beta}_{1}^{\ast}
\rangle
\qquad \text{and} \qquad
T^{2}_{\vec{\alpha}_{2},\vec{\beta}_{2}^{\ast}}
=
\langle
\vec{\alpha}_{2}
|
T_{2}
|
\vec{\beta}_{2}^{\ast}
\rangle
\]
from Theorem~\ref{thm_G24_opr}. We also calculate \smash{$T\raisebox{2pt}{${}^{n}_{\vec{\alpha}_{n},\vec{\beta}_{n}^{\ast}}$}$} for $n=1,2$ straightforwardly from \eqref{star_coeff_G2_4}. We then compare the obtained
\smash{$
\langle
\vec{\alpha}_{n}
|
T_{n}
|
\vec{\beta}_{n}^{\ast}
\rangle
$}
and
\smash{$T\raisebox{2pt}{${}^{n}_{\vec{\alpha}_{n},\vec{\beta}_{n}^{\ast}}$}$} from~\eqref{star_coeff_G2_4}
for $n=1,2$ and check that they are~equal to each other. Finally, we show that the obtained \smash{$T\raisebox{2pt}{${}^{2}_{\vec{\alpha}_{2},\vec{\beta}_{2}^{\ast}}$}$} actually satisfies the recurrence relations~\eqref{rec_rel_Grassmann_2_4_slash} in
Proposition~\ref{theorem_rec_rel_Grassmann_2_4}.

$T_{0}$ is immediately given as
\begin{align}
 T_{0}
 =
 |
 \vec{0}
 \rangle
\langle
 \vec{0}
|
 \label{opr_T_0_braket}
\end{align}
by Proposition~\ref{prop:covar_const_0and1}. We first check that $T_{1}$ recovers the coefficient for $n=1$. Let
\smash{$
\langle
\vec{e}_{M}
|
$}
and
\smash{$
|
\vec{e}_{L}^{\ast}
\rangle
$}
be some $n=1$ bra and ket, respectively. By Theorem~\ref{thm_G24_opr}, $T_{1}$ is given by
\begin{align}
 T_{1}
 ={}&
 \sum_{
 J,D
 \in
 \mathcal{I}
 }
 a_{D}^{\dagger}
 \frac{1}{
 \sqrt{
 N_{D}
 +
 1
 }
 }
|
 \vec{0}
\rangle
\langle
 \vec{0}
|
 \nonumber
 \\
 &
 \times
 \left\{
 a_{J}
 \frac{1}{\sqrt{N_{J}}}
 (
 \tau_{1}
 +
 N_{j\cancel{j}'}
 +
 1
 )
 g_{\overline{J}D}
 +
 a_{J}
 \frac{1}{
 \sqrt{
 N_{J}
 }
 }
 a_{\cancel{J}}
 \frac{1}{
 \sqrt{
 N_{\cancel{J}}
 }
 }
 a_{j\cancel{j}'}^{\dagger}
 \frac{1}{
 \sqrt{
 N_{j\cancel{j}'}+1
 }
 }
 (
 N_{j\cancel{j}'}+1
 )
 g_{\overline{\cancel{j}j'},D}
 \right\}
 \nonumber
 \\
 &
 \times
 \tau_{1}^{-1}
 \{
 1
 \cdot
 (
 \tau_{1}
 +
 1
 )
 +
 2
 (
 N_{I}
 +
 N_{\cancel{i}i'}
 )
 (
 N_{\cancel{I}}
 +
 N_{i\cancel{i}'}
 )
 \}^{-1}
 ,
 \nonumber
 \\
 ={}&
 \sum_{
 J,D
 \in
 \mathcal{I}
 }
 \big|
 \vec{e}_{D}^{\ast}
 \big\rangle
 \left\{
 \big\langle
 \vec{e}_{J}
 \big|
 (
 \tau_{1}
 +
 N_{j\cancel{j}'}
 +
 1
 )
 g_{\overline{J}D}
 +
 \big\langle
 \vec{e}_{J}
 +
 \vec{e}_{\cancel{J}}
 \big|
 a_{j\cancel{j}'}^{\dagger}
 \frac{1}{
 \sqrt{
 N_{j\cancel{j}'}+1
 }
 }
 (
 N_{j\cancel{j}'}+1
 )
 g_{\overline{\cancel{j}j'},D}
 \right\}
 \nonumber
 \\
 &
 \times
 \hbar
 \{
 (
 \tau_{1}
 +
 1
 )
 +
 2
 (
 N_{I}
 +
 N_{\cancel{i}i'}
 )
 (
 N_{\cancel{I}}
 +
 N_{i\cancel{i}'}
 )
 \}^{-1}
 ,
 \label{opr_another_exp_n1}
\end{align}
where we use \eqref{opr_T_0_braket} and
$
\tau_{1}^{-1}
=
\hbar
$.
Since
$
j\cancel{j}'
\neq
J$
(${=jj'}$),
$
j\cancel{j}'
\neq
\cancel{J}$
(${=
\cancel{j}\cancel{j}'}$), then
\smash{$
\langle
\vec{e}_{J}
+
\vec{e}_{\cancel{J}}
|
a_{j\cancel{j}'}^{\dagger}=0$},
and we obtain
\begin{align}
\eqref{opr_another_exp_n1}
=
\sum_{
J,D
\in
\mathcal{I}
}
\hbar
g_{\overline{J}D}
|
\vec{e}_{D}^{\ast}
\rangle
\langle
\vec{e}_{J}
|.\label{opr_T1_exp}
\end{align}
Here we use
$ ( N_{I} + N_{\cancel{i}i'} )
 ( N_{\cancel{I}} +
 N_{i\cancel{i}'}
 )
|
 \vec{e}_{L}^{\ast}
\rangle
 =
 0
$.
Hence,
$
\langle
\vec{e}_{M}
|
T_{1}
|
\vec{e}_{L}^{\ast}
\rangle
$
can be calculated as
\[
\langle
\vec{e}_{M}
|
T_{1}
|
\vec{e}_{L}^{\ast}
\rangle
=
\hbar
g_{M\overline{L}}
=
T^{1}_{\vec{e}_{M},\vec{e}_{L}^{\ast}}.
\]
This result coincides with \smash{$T^{n}_{\vec{\alpha}_{n},\vec{\beta}_{n}^{\ast}}$} for $n=1$ in~\eqref{init_cond_01} in Proposition~\ref{prop:covar_const_0and1}. Next, we consider the case $n=2$. For $n=2$, a linear operator $T_{2}$ is given by
\begin{align}
 T_{2}
 ={}&
 \sum_{
 D_{1},D_{2}
 \in
 \mathcal{I}
 }
 \sum_{
 J_{1},J_{2}
 \in
 \mathcal{I}
 }
 \hbar
 g_{\overline{J_{1}}D_{1}}
|
 \vec{e}_{D_{1}}^{\ast}
 +
 \vec{e}_{D_{2}}^{\ast}
\rangle
\langle
 \vec{e}_{J_{1}}
|
 \nonumber
 \\
 &
 \times
 \left\{
 a_{J_{2}}
 \frac{1}{\sqrt{N_{J_{2}}}}
 (
 \tau_{2}
 +
 N_{j_{2}\cancel{j_{2}}'}
 +
 1
 )
 g_{\overline{J_{2}}D}\right.\nonumber\\
 &\phantom{\times}{}\left.
 {}+
 a_{J_{2}}
 \frac{1}{
 \sqrt{
 N_{J_{2}}
 }
 }
 a_{\cancel{J_{2}}}
 \frac{1}{
 \sqrt{
 N_{\cancel{J_{2}}}
 }
 }
 a_{j_{2}\cancel{j_{2}}'}^{\dagger}
 \frac{1}{
 \sqrt{
 N_{j_{2}\cancel{j_{2}}'}+1
 }
 }
 (
 N_{j_{2}\cancel{j_{2}}'}+1
 )
 g_{\overline{\cancel{j_{2}}j_{2}'},D}
 \right\}
 \nonumber
 \\
 &
 \times
 \frac{1}{2}
 \tau_{2}^{-1}
 \{
 \tau_{1}
 +
 1
 +
 (
 N_{I}
 +
 N_{\cancel{i}i'}
 )
 (
 N_{\cancel{I}}
 +
 N_{i\cancel{i}'}
 )
 \}^{-1} , \label{opr_T_n2}
\end{align}
from \eqref{G24_coeff_opr_full} and \eqref{opr_T1_exp}.
Here we use
\begin{align*}
 a_{D_{1}}^{\dagger}
 \frac{1}{
 \sqrt{N_{D_{1}}+1}
 }
 a_{D_{2}}^{\dagger}
 \frac{1}{
 \sqrt{N_{D_{2}}+1}
 }
|
 \vec{0}
\rangle
 =
|
 \vec{e}_{D_{1}}^{\ast}
 +
 \vec{e}_{D_{2}}^{\ast}
\rangle .
\end{align*}
We now check that $T_{2}$ recovers \smash{$T\raisebox{2pt}{${}^{2}_{\vec{\alpha}_{2},\vec{\beta}_{2}^{\ast}}$}$}, that is, the solution for $n=2$. There are 10 possib\-le~$\vec{\beta}_{2}^{\ast}$ patterns. They can be classified into the following four patterns for a fixed
$
P
\in
\mathcal{I}
$:
\begin{align*}
 (\textrm{I}) \
 \vec{\beta}_{2}^{\ast}
 =
 2
 \vec{e}_{P}^{\ast},
 \qquad
 (\textrm{II}) \
 \vec{\beta}_{2}^{\ast}
 =
 \vec{e}_{P}^{\ast}
 +
 \vec{e}_{p\cancel{p}'}^{\ast},
 \qquad
 (\textrm{III}) \
 \vec{\beta}_{2}^{\ast}
 =
 \vec{e}_{P}^{\ast}
 +
 \vec{e}_{\cancel{p}p'}^{\ast},
 \qquad
 (\textrm{IV}) \
 \vec{\beta}_{2}^{\ast}
 =
 \vec{e}_{P}^{\ast}
 +
 \vec{e}_{\cancel{P}}^{\ast} .
\end{align*}
Therefore, we only need to check the above four patterns. For the case
\smash{$
\vec{\beta}_{2}^{\ast}
=
2
\vec{e}_{P}^{\ast}
$},
\smash{$
T_{2}
|
2
\vec{e}_{P}^{\ast}
\rangle
$}
is non-zero only for
$
J_{2}
=
P
$
in \eqref{opr_T_n2},
and
$
 (
\delta_{IP}
+
\delta_{\cancel{i}i',P}
 )
 (
\delta_{\cancel{I}P}
+
\delta_{i\cancel{i}',P}
 )
=
0
$
for a fixed $P\in\mathcal{I}$.
\smash{$
T_{2}
|
2
\vec{e}_{P}^{\ast}
\rangle
$}
can be rewritten as
\begin{align}
T_{2}
|
2
\vec{e}_{P}^{\ast}
\rangle
={}&
\frac{\hbar}{2}
\sum_{
J_{1},D_{1},D_{2}
\in
\mathcal{I}
}
g_{\overline{J_{1}}D_{1}}
|
 \vec{e}_{D_{1}}^{\ast}
 +
 \vec{e}_{D_{2}}^{\ast}
\rangle
\big\langle
\vec{e}_{J_{1}}
|
\vec{e}_{P}^{\ast}
\rangle
\cdot
g_{\overline{P}D_{2}}
\tau_{2}^{-1}
 (
 \tau_{2}
 +
 1
 )
 (
 \tau_{2}
 +
 1
 )^{-1}
\nonumber
\\
={}&
\frac{\hbar}{2}
 \left(
-1
+
\frac{1}{\hbar}
 \right)^{-1}
\sum_{
D_{1},D_{2}
\in
\mathcal{I}
}
g_{\overline{P}D_{1}}
g_{\overline{P}D_{2}}
|
 \vec{e}_{D_{1}}^{\ast}
 +
 \vec{e}_{D_{2}}^{\ast}
\rangle.
\label{opr_T_n2_act_2e_midmid}
\end{align}
Here we use
\smash{$
\langle
\vec{e}_{J_{1}}
|
\vec{e}_{P}^{\ast}
\rangle
=
\delta_{J_{1}P}
$}
and
$
\tau_{2}
=
-1
+
\frac{1}{\hbar}
$.
By using \eqref{opr_T_n2_act_2e_midmid}, we obtain
\begin{align}
T^{2}_{\vec{\alpha}_{2},2\vec{e}_{P}^{\ast}}
=
\langle
\vec{\alpha}_{2}
|
T_{2}
|
2
\vec{e}_{P}^{\ast}
\rangle
=
\frac{\hbar}{2}
 \left(
-1
+
\frac{1}{\hbar}
 \right)^{-1}
\sum_{
D_{1},D_{2}
\in
\mathcal{I}
}
g_{\overline{P}D_{1}}
g_{\overline{P}D_{2}}
\delta_{
\vec{\alpha}_{2},
\vec{e}_{D_{1}}
+
\vec{e}_{D_{2}}
}
.
\label{T2_2e_opr}
\end{align}
We also calculate directly \smash{$T^{2}_{\vec{\alpha}_{2},2\vec{e}_{P}^{\ast}}$} by using \eqref{star_coeff_G2_4} in Theorem~\ref{thm_star_coeff_G2_4} for confirmation. Since
$
\vec{\beta}_{2}^{\ast}
=
2\vec{e}_{P}^{\ast}
$
and $n=2$, the step function in \eqref{star_coeff_G2_4} is
\begin{align*}
\prod_{
S
\in
\mathcal{I}
}
\prod_{r=1}^{2}
\theta
 \left(
 \beta_{S}^{2}
 -
 \sum_{m=r}^{2}
 d_{S,J_{m},k_{m}}
 \right)
=&
\prod_{
S
\in
\mathcal{I}
}
\prod_{r=1}^{2}
\theta
 \left(
 2
 \delta_{SP}
 -
 \sum_{m=r}^{2}
 d_{S,J_{m},k_{m}}
 \right)
.
\end{align*}
Recalling that
\begin{align*}
d_{
 S,
 J_{m},
 k_{m}
}
={}&
\delta_{
 S,
 J_{m}
}
+
\delta_{\cancel{j_{m}}k_{m}}
 (
 \delta_{
 S,
 \cancel{J_{m}}
 }
 -
 \delta_{
 S,
 j_{m}\cancel{j_{m}}'
 }
 )
,
\end{align*}
the step function
$
\prod_{
S
\in
\mathcal{I}
}
\prod_{r=1}^{2}
\theta
\bigl(
 2
 \delta_{SP}
 -
 \sum_{m=r}^{2}
 d_{S,J_{m},k_{m}}
\bigr)
$
is 1 if and only if $J_{1}=J_{2}=P$, $k_{1}=j_{1}$ and $k_{2}=j_{2}$. If $J_{1}=J_{2}=P=pp'$ and $k_{1}=j_{1}=k_{2}=j_{2}=p$, \smash{$T^{2}_{\vec{\alpha}_{2},2\vec{e}_{P}^{\ast}}$} can be expressed as
\begin{align}
 T^{2}_{\vec{\alpha}_{2},2\vec{e}_{P}^{\ast}}
 ={}&
 \sum_{
 D_{1},
 D_{2}
 \in
 \mathcal{I}
 }
 \frac{
 g_{\overline{P}D_{1}}
 g_{\overline{P}D_{2}}
 }{
 \tau_{1}
 \tau_{2}
 }
 \delta_{
 \vec{\alpha}_{2},
 \vec{e}_{D_{1}}
 +
 \vec{e}_{D_{2}}
 }
 \delta_{
 2
 \vec{e}_{P}^{\ast},
 \sum_{
 X
 \in
 \mathcal{I}
 }
 2
 d_{
 X,
 P,
 p
 }
 \vec{e}_{X}^{\ast}
 }
 \nonumber
 \\
 &
 \times
 \frac{
 \tau_{1}
 +
 2
 \delta_{p\cancel{p}',P}
 +
 1
 -
 \Lambda_{
 1,
 j_{1}\cancel{j_{1}}',
 \{
 J_{i}
 \}_{2},
 \{
 j_{i}
 \}_{2}
 }
 \big|_{J_{1}=J_{2}=P}
 }{
 \tau_{1}+1
 +
 2
 (
 2
 \delta_{IP}
 +
 2
 \delta_{\cancel{i}i',P}
 -
 \Delta_{
 I,
 \cancel{i}i',
 1,
 \{
 J_{i}
 \}_{2}
 }
 \big|_{J_{1}=J_{2}=P}
 )
 }
 \nonumber
 \\
 &\times\frac{1}{ (
 2
 \delta_{\cancel{I}P}
 +
 2
 \delta_{i\cancel{i}',P}
 -
 \Delta_{
 \cancel{I},
 i\cancel{i}',
 1,
 \{
 J_{i}
 \}_{2}
 }
 \big|_{J_{1}=J_{2}=P}
 )}\nonumber\\
 &
 \times
 \frac{
 \tau_{2}
 +
 2
 \delta_{p\cancel{p}',P}
 +
 1
 -
 \Lambda_{
 2,
 j_{2}\cancel{j_{2}}',
 \{
 J_{i}
 \}_{2},
 \{
 j_{i}
 \}_{2}
 }
 \big|_{J_{1}=J_{2}=P}
 }{
 2
 (
 \tau_{2}+1
 )
 +
 2
 (
 2
 \delta_{IP}
 +
 2
 \delta_{\cancel{i}i',P}
 -
 \Delta_{
 I,
 \cancel{i}i',
 2,
 \{
 J_{i}
 \}_{2}
 }
 \big|_{J_{1}=J_{2}=P}
 )
 }
 \nonumber\\
 &\times\frac{1}{ (
 2
 \delta_{\cancel{I}P}
 +
 2
 \delta_{i\cancel{i}',P}
 -
 \Delta_{
 \cancel{I},
 i\cancel{i}',
 2,
 \{
 J_{i}
 \}_{2}
 }
 \big|_{J_{1}=J_{2}=P}
 )}\label{star_coeff_G24_n2_2P}
\end{align}
for a fixed $P=pp'$, where
$
\beta_{M}^{2}
=
2\delta_{MP}
$
for
$
M
\in
\mathcal{I}
$.
Note that
\begin{gather}
d_{
 X,
 P,
 p
}
=
\delta_{
 X,
 P
},
\label{d_calc_2}
\\
\Lambda_{
 1,
 j_{1}\cancel{j_{1}}',
 \{
 J_{i}
 \}_{2},
 \{
 j_{i}
 \}_{2}
}
|_{J_{1}=J_{2}=P}
=
\delta_{p\cancel{p}',P}
=
0
,
\qquad
\Lambda_{
 2,
 j_{2}\cancel{j_{2}}',
 \{
 J_{i}
 \}_{2},
 \{
 j_{i}
 \}_{2}
}
|_{J_{1}=J_{2}=P}
=
0,
\label{lambda_calc_1and2}
\\
\Delta_{
I,
\cancel{i}i',
1,
\{
 J_{i}
\}_{2}
}
|_{J_{1}=J_{2}=P}
=
\delta_{IP}
+
\delta_{\cancel{i}i',P},
\qquad
\Delta_{
\cancel{I},
i\cancel{i}',
1,
\{
 J_{i}
\}_{2}
}
|_{J_{1}=J_{2}=P}
=
\delta_{\cancel{I}P}
+
\delta_{i\cancel{i}',P},
\label{Delta_calc_1}
\\
\Delta_{
I,
\cancel{i}i',
2,
\{
 J_{i}
\}_{2}
}
|_{J_{1}=J_{2}=P}
=
0,
\qquad
\Delta_{
\cancel{I},
i\cancel{i}',
2,
\{
 J_{i}
\}_{2}
}
|_{J_{1}=J_{2}=P}
=
0
\label{Delta_calc_2}
\end{gather}
from \eqref{notation_Lambda}--\eqref{notation_delta_Isiis}. Since
\smash{$
\sum_{
X
\in
\mathcal{I}
}
\delta_{
X,
P
}
\vec{e}_{X}^{\ast}
=
\vec{e}_{P}^{\ast}
$},
then Kronecker's delta is calculated as
\begin{align}
 \delta_{
 2
 \vec{e}_{P}^{\ast},
 2
 \sum_{
 X
 \in
 \mathcal{I}
 }
 d_{
 X,
 P,
 p
 }
 \vec{e}_{X}^{\ast}
 }
 =
 \delta_{
 2
 \vec{e}_{P}^{\ast},
 2
 \vec{e}_{P}^{\ast}
 }
 =
 1
 \label{K_delta_calc_2}
\end{align}
by using \eqref{d_calc_2}. The denominators and numerators in \eqref{star_coeff_G24_n2_2P} can be calculated by using \eqref{lambda_calc_1and2}--\eqref{Delta_calc_2}. The numerators in \eqref{star_coeff_G24_n2_2P} are expressed as
\begin{gather*}
\tau_{1}
+
2
\delta_{p\cancel{p}',P}
+
1
-
\Lambda_{
1,
j_{1}\cancel{j_{1}}',
\{
J_{i}
\}_{2},
\{
j_{i}
\}_{2}
}
|_{J_{1}=J_{2}=P}
=
\tau_{1}
+
1,
\\
\tau_{2}
+
2
\delta_{p\cancel{p}',P}
+
1
-
\Lambda_{
2,
j_{2}\cancel{j_{2}}',
\{
J_{i}
\}_{2},
\{
j_{i}
\}_{2}
}
|_{J_{1}=J_{2}=P}
=
\tau_{2}
+
1
\end{gather*}
from \eqref{lambda_calc_1and2}, respectively. Here, we use
$
\delta_{
P,
p\cancel{p}'
}
=
0
$
since
$
P
\neq
p\cancel{p}'
$
for any
$
P
=
pp'
$.
By using \eqref{Delta_calc_1} and \eqref{Delta_calc_2}, for $l=1,2$, the denominators in \eqref{star_coeff_G24_n2_2P} are also calculated as
\begin{gather*}
\tau_{1}
+
1
+
2
\{
 2
 (
 \delta_{IP}
 +
 \delta_{\cancel{i}i',P}
 )
 -
 \Delta_{
 I,
 \cancel{i}i',
 1,
 \{
 J_{i}
\}_{2}
}
\}
\{
 2
 (
 \delta_{\cancel{I}P}
 +
 \delta_{i\cancel{i}',P}
 )
 -
 \Delta_{
 \cancel{I},
 i\cancel{i}',
 1,
 \{
 J_{i}
 \}_{2}
 }
\}
\\
\qquad=
\tau_{1}
+
1
+
2
 (
 \delta_{IP}
 +
 \delta_{\cancel{i}i',P}
 )
 (
 \delta_{\cancel{I}P}
 +
 \delta_{i\cancel{i}',P}
 )
=
\tau_{1}
+
1,
\end{gather*}
and
\begin{gather}
2
 (
\tau_{2}
+
1
 )
+
2
\{
 2
 (
 \delta_{IP}
 +
 \delta_{\cancel{i}i',P}
 )
 -
 \Delta_{
 I,
 \cancel{i}i',
 2,
 \{
 J_{i}
\}_{2}
}
\}
\{
 2
 (
 \delta_{\cancel{I}P}
 +
 \delta_{i\cancel{i}',P}
 )
 -
 \Delta_{
 \cancel{I},
 i\cancel{i}',
 2,
 \{
 J_{i}
 \}_{2}
 }
\}
\nonumber
\\
\qquad=
2
 (
\tau_{2}
+
1
 )
+
8
 (
 \delta_{IP}
 +
 \delta_{\cancel{i}i',P}
 )
 (
 \delta_{\cancel{I}P}
 +
 \delta_{i\cancel{i}',P}
 )
=
2
 (
\tau_{2}
+
1
 ).
\label{denominator_2}
\end{gather}
Here, we use
$
 (
 \delta_{IP}
 +
 \delta_{\cancel{i}i',P}
 )
 (
 \delta_{\cancel{I}P}
 +
 \delta_{i\cancel{i}',P}
 )
=
0
$
for any indices
$
P,
I
\in
\mathcal{I}
$.
Substituting \eqref{K_delta_calc_2}--\eqref{denominator_2} into \eqref{star_coeff_G24_n2_2P}, we eventually obtain
\begin{align}
T^{2}_{\vec{\alpha}_{2},2\vec{e}_{P}^{\ast}}
={}&
\sum_{
D_{1},D_{2}
\in
\mathcal{I}
}
\frac{1}{2}
g_{\overline{P}D_{1}}
g_{\overline{P}D_{2}}
\tau_{1}^{-1}
\tau_{2}^{-1}
 (
\tau_{1}
+
1
 )
 (
\tau_{2}
+
1
 )
 (
\tau_{1}
+
1
 )^{-1}
 (
\tau_{2}
+
1
 )^{-1}
\delta_{
\vec{\alpha}_{2},
\vec{e}_{D_{1}}
+
\vec{e}_{D_{2}}
}
\nonumber
\\
={}&
\frac{\hbar}{2}
 \left(
-1
+
\frac{1}{\hbar}
 \right)^{-1}
\sum_{
D_{1},D_{2}
\in
\mathcal{I}
}
g_{\overline{P}D_{1}}
g_{\overline{P}D_{2}}
\delta_{
\vec{\alpha}_{2},
\vec{e}_{D_{1}}
+
\vec{e}_{D_{2}}
}
.
\label{T2_2e_sf}
\end{align}
\eqref{T2_2e_sf} coincides with \eqref{T2_2e_opr}.

Similarly, we can calculate for the cases
\smash{$
\vec{\beta}_{2}^{\ast}
=
\vec{e}_{P}^{\ast}
+
\vec{e}_{p\cancel{p}'}^{\ast}
$},
\smash{$
\vec{\beta}_{2}^{\ast}
=
\vec{e}_{P}^{\ast}
+
\vec{e}_{\cancel{p}p'}^{\ast}
$}
and
\smash{$
\vec{\beta}_{2}^{\ast}
=
\vec{e}_{P}^{\ast}
+
\vec{e}_{\cancel{P}}^{\ast}
$}
as follows:
\begin{gather}
T^{2}_{\vec{\alpha}_{2},
\vec{e}_{P}^{\ast}
+
\vec{e}_{p\cancel{p}'}^{\ast}
}
=
\langle
\vec{\alpha}_{2}
|
T_{2}
|
\vec{e}_{P}^{\ast}
+
\vec{e}_{p\cancel{p}'}^{\ast}
\rangle
=
\hbar
 \left(
-1
+
\frac{1}{\hbar}
 \right)^{-1}
\sum_{
D_{1},D_{2}
\in
\mathcal{I}
}
g_{\overline{P}D_{1}}
g_{\overline{p\cancel{p}'},D_{2}}
\delta_{
\vec{\alpha}_{2},
\vec{e}_{D_{1}}
+
\vec{e}_{D_{2}}
}.
\label{T2_case2_opr}
\\
T^{2}_{\vec{\alpha}_{2},
\vec{e}_{P}^{\ast}
+
\vec{e}_{\cancel{p}p'}^{\ast}
}
=
\langle
\vec{\alpha}_{2}
|
T_{2}
|
\vec{e}_{P}^{\ast}
+
\vec{e}_{\cancel{p}p'}^{\ast}
\rangle
=
\hbar
 \left(
-1
+
\frac{1}{\hbar}
 \right)^{-1}
\sum_{
D_{1},D_{2}
\in
\mathcal{I}
}
g_{\overline{P}D_{1}}
g_{\overline{\cancel{p}p'},D_{2}}
\delta_{
\vec{\alpha}_{2},
\vec{e}_{D_{1}}
+
\vec{e}_{D_{2}}
},
\label{T2_case3_opr}
\\
T^{2}_{\vec{\alpha}_{2},
\vec{e}_{P}^{\ast}
+
\vec{e}_{\cancel{P}}^{\ast}
}
=
\langle
\vec{\alpha}_{2}
|
T_{2}
|
\vec{e}_{P}^{\ast}
+
\vec{e}_{\cancel{P}}^{\ast}
\rangle=
 \left(
-1
+
\frac{1}{\hbar}
 \right)^{-1}
 \left(
1
+
\frac{1}{\hbar}
 \right)^{-1}
\nonumber
\\
\phantom{T^{2}_{\vec{\alpha}_{2},
\vec{e}_{P}^{\ast}
+
\vec{e}_{\cancel{P}}^{\ast}
}
=}{}
\times
\sum_{
D_{1},D_{2}
\in
\mathcal{I}
}
 (
g_{\overline{P}D_{1}}
g_{\overline{\cancel{P}}D_{2}}
+
\hbar
g_{\overline{p\cancel{p}'},D_{1}}
g_{\overline{\cancel{p}p'},D_{2}}
 )
\delta_{
\vec{\alpha}_{2},
\vec{e}_{D_{1}}
+
\vec{e}_{D_{2}}
}
.
\label{T2_case4_opr}
\end{gather}
By long but straightforward calculations, we can also check that \eqref{T2_case2_opr}, \eqref{T2_case3_opr} and \eqref{T2_case4_opr} coincide with
\smash{$
T^{2}_{\vec{\alpha}_{2},
\vec{e}_{P}^{\ast}
+
\vec{e}_{p\cancel{p}'}^{\ast}
}
$},
$
T^{2}_{\vec{\alpha}_{2},
\vec{e}_{P}^{\ast}
+
\vec{e}_{\cancel{p}p'}^{\ast}
}
$
and
$
T^{2}_{\vec{\alpha}_{2},
\vec{e}_{P}^{\ast}
+
\vec{e}_{\cancel{P}}^{\ast}
}
$,
calculated from \eqref{star_coeff_G2_4}, respectively.

Lastly, we check that
\smash{$T^{2}_{\vec{\alpha}_{2},\vec{\beta}_{2}^{\ast}}$} satisfies the recurrence relations \eqref{rec_rel_Grassmann_2_4_slash} for $n=2$ in
Proposition~\ref{theorem_rec_rel_Grassmann_2_4}. In this discussion, we give
detailed calculations
only for the case
\smash{$
\vec{\beta}_{2}^{\ast}
=
2
\vec{e}_{P}^{\ast}
$}
to avoid the repetition of boring discussions. In this case, the right-hand side of \eqref{rec_rel_Grassmann_2_4_slash} is
\begin{align}
2
\hbar
\delta_{PI}
 (
\tau_{2}
+
2
\delta_{P\cancel{I}}
 )
T^{2}_{\vec{\alpha}_{2},
2
\vec{e}_{P}^{\ast}
}
-
\hbar
 (
2
\delta_{P,i\cancel{i}'}
+
1
 )
 (
2
\delta_{P,\cancel{i}i'}
+
1
 )
T^{2}_{\vec{\alpha}_{2},
2
\vec{e}_{P}^{\ast}
-
\vec{e}_{I}^{\ast}
+
\vec{e}_{i\cancel{i}'}^{\ast}
+
\vec{e}_{\cancel{i}i'}^{\ast}
-
\vec{e}_{\cancel{I}}^{\ast}
}
.
\label{rhs_recrel_2e_mid}
\end{align}
Here,
we have
$
\delta_{PI}
\delta_{P\cancel{I}}
=
0
$
since
$
I
\neq
\cancel{I}
$,
and
\[
T^{2}_{\vec{\alpha}_{2},
2
\vec{e}_{P}^{\ast}
-
\vec{e}_{I}^{\ast}
+
\vec{e}_{i\cancel{i}'}^{\ast}
+
\vec{e}_{\cancel{i}i'}^{\ast}
-
\vec{e}_{\cancel{I}}^{\ast}
}
=
0
\]
since
\smash{$
2
\vec{e}_{P}^{\ast}
-
\vec{e}_{I}^{\ast}
+
\vec{e}_{i\cancel{i}'}^{\ast}
+
\vec{e}_{\cancel{i}i'}^{\ast}
-
\vec{e}_{\cancel{I}}^{\ast}
$}
has negative components
in the \smash{$\vec{e}_{I}^{\ast}$}-direction or \smash{$\vec{e}_{\cancel{I}}^{\ast}$}-direction.
Therefore, after substituting \eqref{T2_2e_sf}, the right-hand side is rewritten as
\begin{align}
\eqref{rhs_recrel_2e_mid}
={}&
2
\hbar
\delta_{PI}
 \left(
-1
+
\frac{1}{\hbar}
 \right)
\frac{\hbar}{2}
 \left(
-1
+
\frac{1}{\hbar}
 \right)^{-1}
\sum_{
D_{1},D_{2}
\in
\mathcal{I}
}
g_{\overline{P}D_{1}}
g_{\overline{P}D_{2}}
\delta_{
\vec{\alpha}_{2},
\vec{e}_{D_{1}}
+
\vec{e}_{D_{2}}
}
\nonumber
\\
={}&
\hbar^{2}
\sum_{
D_{1},D_{2}
\in
\mathcal{I}
}
g_{\overline{P}D_{1}}
g_{\overline{P}D_{2}}
\delta_{
\vec{\alpha}_{2},
\vec{e}_{D_{1}}
+
\vec{e}_{D_{2}}
}
\delta_{PI}
.
\label{rhs_recrel_2e}
\end{align}
We next calculate the left-hand side of \eqref{rec_rel_Grassmann_2_4_slash} for $n=2$. Note that the following equation holds:
\begin{align*}
T^{1}_{
\vec{\alpha}_{2}
-
\vec{e}_{D_{1}}
,
\vec{\beta}_{2}^{\ast}
-
\vec{e}_{I}^{\ast}
}
=
\hbar
\sum_{
J,D_{2}
\in
\mathcal{I}
}
g_{\overline{J}D_{2}}
\delta_{
\vec{\alpha}_{2},
\vec{e}_{D_{1}}
+
\vec{e}_{D_{2}}
}
\delta_{
\vec{\beta}_{2}^{\ast},
\vec{e}_{I}^{\ast}
+
\vec{e}_{J}^{\ast}
}
\end{align*}
from \eqref{init_cond_01} for $n=1$ in Proposition~\ref{prop:covar_const_0and1}. Then, the left-hand side of \eqref{rec_rel_Grassmann_2_4_slash} can be calculated as
\begin{align}
\hbar
\sum_{
D_{1}
\in
\mathcal{I}
}
g_{\overline{I}
D_{1}
}
T^{1}_{
\vec{\alpha}_{2}
-
\vec{e}_{D_{1}}
,
2
\vec{e}_{P}^{\ast}
-
\vec{e}_{I}^{\ast}
}
={}&
\hbar
\sum_{
D_{1}
\in
\mathcal{I}
}
g_{\overline{I}D_{1}}
\cdot
\hbar
\sum_{
J,D_{2}
\in
\mathcal{I}
}
g_{\overline{J}D_{2}}
\delta_{
\vec{\alpha}_{2},
\vec{e}_{D_{1}}
+
\vec{e}_{D_{2}}
}
\delta_{
2
\vec{e}_{P}^{\ast},
\vec{e}_{I}^{\ast}
+
\vec{e}_{J}^{\ast}
}
\nonumber
\\
={}&
\hbar^{2}
\sum_{
D_{1},D_{2}
\in
\mathcal{I}
}
g_{\overline{I}D_{1}}
g_{\overline{I}D_{2}}
\delta_{
\vec{\alpha}_{2},
\vec{e}_{D_{1}}
+
\vec{e}_{D_{2}}
}
\delta_{
2
\vec{e}_{P}^{\ast},
2
\vec{e}_{I}^{\ast}
}
.
\label{lhs_recrel_2e_mid}
\end{align}
Since
\smash{$
\delta_{
2
\vec{e}_{P}^{\ast},
2
\vec{e}_{I}^{\ast}
}
=
\delta_{PI}
$}
and then
$
g_{\overline{I}D_{1}}
g_{\overline{I}D_{2}}
\delta_{PI}
=
g_{\overline{P}D_{1}}
g_{\overline{P}D_{2}}
\delta_{PI}
$,
we obtain
\begin{align}
\eqref{lhs_recrel_2e_mid}
=&
\hbar^{2}
\sum_{
D_{1},D_{2}
\in
\mathcal{I}
}
g_{\overline{P}D_{1}}
g_{\overline{P}D_{2}}
\delta_{
\vec{\alpha}_{2},
\vec{e}_{D_{1}}
+
\vec{e}_{D_{2}}
}
\delta_{PI}.
\label{lhs_recrel_2e}
\end{align}
Therefore, the right-hand side \eqref{rhs_recrel_2e} coincides with the left-hand side \eqref{lhs_recrel_2e}. Hence, it is checked that the solution
\smash{$
T^{2}_{\vec{\alpha}_{2},
2
\vec{e}_{P}^{\ast}
}
$}
satisfies the recurrence relations \eqref{rec_rel_Grassmann_2_4_slash} in
Proposition~\ref{theorem_rec_rel_Grassmann_2_4} for~${n=2}$. In the same way, we can also verify by straightforward calculations that the other solutions~\smash{$
T^{2}_{\vec{\alpha}_{2},
\vec{e}_{P}^{\ast}
+
\vec{e}_{\cancel{p}p'}^{\ast}
}
$},
\smash{$
T^{2}_{\vec{\alpha}_{2},
\vec{e}_{P}^{\ast}
+
\vec{e}_{\cancel{p}p'}^{\ast}
}
$}
and
\smash{$
T^{2}_{\vec{\alpha}_{2},
\vec{e}_{P}^{\ast}
+
\vec{e}_{\cancel{P}}^{\ast}
}
$}
satisfy the recurrence relations \eqref{rec_rel_Grassmann_2_4_slash} in
Proposition~\ref{theorem_rec_rel_Grassmann_2_4}.

\section{Summary}\label{G24_summary}

We gave the recurrence relations \eqref{rec_rel_Grassmann_general} in Theorem~\ref{theorem_rec_rel_Grassmann_general} whose solutions are coefficients of a star product with separation of variables on $G_{p,p+q}(\mathbb{C})$, based on the construction method proposed by \cite{HS1,HS2}. In particular, we focused on the case $p=q=2$ to give the explicit star product with separation of variables on $G_{2,4}(\mathbb{C})$. Considering the case $p=q=2$, we can obtain the recurrence relations~\eqref{rec_rel_Grassmann_2_4_slash} for~${G_{2,4}(\mathbb{C})}$ from Theorem~\ref{theorem_rec_rel_Grassmann_general}. To give the solution of~\eqref{rec_rel_Grassmann_2_4_slash} for~$G_{2,4}(\mathbb{C})$, it was necessary to solve a system of linear equations. The process of solving this system of linear equations prevented obtaining the explicit general term expression. To resolve this problem, we derived \eqref{G24_n_relation} in Proposition~\ref{Prop_G24_relation} that is equivalent to \eqref{rec_rel_Grassmann_2_4_slash}, in which the general term~\smash{$T\raisebox{2pt}{${}^{n}_{\vec{\alpha}_{n},\vec{\beta}_{n}^{\ast}}$}$} of order $n$ are expressed by coefficients of order~${(n-1)}$.
By using \eqref{G24_n_relation}, we derived the formula~\eqref{G24_n_relation_sum} in Proposition~\ref{Cor_G24_relation_sum}, which uniquely determines \smash{$T\raisebox{2pt}{${}^{n}_{\vec{\alpha}_{n},\vec{\beta}_{n}^{\ast}}$}$} from coefficients of order~${(n-1)}$ without solving a system of linear equations.
To solve
\eqref{G24_n_relation_sum}, we introduced a linear operator $T_{n}$ on a Fock space $V$ that recovers \smash{$T\raisebox{2pt}{${}^{n}_{\vec{\alpha}_{n},\vec{\beta}_{n}^{\ast}}$}$}
as a matrix representation. Furthermore, by substituting~\smash{$T\raisebox{2pt}{${}^{n}_{\vec{\alpha}_{n},\vec{\beta}_{n}^{\ast}}$}$} into the equation \eqref{star_expand} given by~\cite{HS1,HS2}, we succeeded in~constructing the explicit star product with separation of variables on $G_{2,4}(\mathbb{C})$. For confirmation, we checked that for $n=1,2$,
\smash{$
T\raisebox{2pt}{${}^{1}_{\vec{\alpha}_{1},\vec{\beta}_{1}^{\ast}}$}$},
\smash{$
T\raisebox{2pt}{${}^{2}_{\vec{\alpha}_{2},\vec{\beta}_{2}^{\ast}}$}
$}
recovered from
$
T_{1}$,
$
T_{2}
$
coincides with
\smash{$
T\raisebox{2pt}{${}^{1}_{\vec{\alpha}_{1},\vec{\beta}_{1}^{\ast}}$}$},
\smash{$
T\raisebox{2pt}{${}^{2}_{\vec{\alpha}_{2},\vec{\beta}_{2}^{\ast}}$}
$}
obtained straightforwardly from \eqref{star_coeff_G2_4} in Theorem~\ref{thm_star_coeff_G2_4}, respectively. We also verified that the obtained~\smash{$T\raisebox{2pt}{${}^{2}_{\vec{\alpha}_{2},\vec{\beta}_{2}^{\ast}}$}$} actually satisfies the recurrence relations \eqref{rec_rel_Grassmann_2_4_slash}.

Here we discuss an outlook for our work toward Penrose's twistor theory.
Penrose's twistor theory can generally be formulated via twister correspondences \cite{Pen1,Pen2,Pen3,Ward1,WaWe}.
The twistor (Klein) correspondence \smash{$G_{2,4}(\mathbb{C})\xleftarrow{\pi_{2}}F_{1,2,4}(\mathbb{C})\xrightarrow{\pi_{1}}\mathbb{C}P^{3}$} is defined by two fibrations
$
\pi_{1}
\colon
F_{1,2,4}(\mathbb{C})\allowbreak
\to
\mathbb{C}P^{3}
$
and
$
\pi_{2}
\colon
F_{1,2,4}(\mathbb{C})
\to
G_{2,4}(\mathbb{C})
$
such that $\pi_{1}(V_{1},V_{2}):=V_{1}$, $\pi_{2}(V_{1},V_{2}):=V_{2}$. Here the complex flag manifold $F_{1,2,4}(\mathbb{C})$ is defined by
\begin{align*}
F_{1,2,4}(\mathbb{C})
:=
\bigl\{
(V_{1},V_{2})
\mid
V_{1}
\subset
V_{2}
\subset
\mathbb{C}^{4},\,
\dim\nolimits_{\mathbb{C}}
V_{k}
=
k,\,
k=1,2
\bigr\}.
\end{align*}
As an application of the star product on $G_{2,4}(\mathbb{C})$ obtained in our work, we expect to realize a~noncommutative deformation of the twistor correspondence
\begin{align*}
(C^{\infty}(G_{2,4}(\mathbb{C}))[\![\hbar]\!],\ast_{G_{2,4}(\mathbb{C})})
\xleftarrow{\pi_{2,\ast}}
(C^{\infty}(F_{1,2,4}(\mathbb{C}))[\![\hbar]\!],\ast_{F_{1,2,4}(\mathbb{C})})
\xrightarrow{\pi_{1,\ast}}
\bigl(C^{\infty}\bigl(\mathbb{C}P^{3}\bigr)[\![\hbar]\!],\ast_{\mathbb{C}P^{3}}\bigr)
\end{align*}
defined by two fibrations
\begin{gather*}
\pi_{1,\ast}\colon\
(C^{\infty}(F_{1,2,4}(\mathbb{C}))[\![\hbar]\!],\ast_{F_{1,2,4}(\mathbb{C})})
\to
\bigl(C^{\infty}\bigl(\mathbb{C}P^{3}\bigr)[\![\hbar]\!],\ast_{\mathbb{C}P^{3}}\bigr),
\\
\pi_{2,\ast}\colon\
(C^{\infty}(F_{1,2,4}(\mathbb{C}))[\![\hbar]\!],\ast_{F_{1,2,4}(\mathbb{C})})
\to
(C^{\infty}(G_{2,4}(\mathbb{C}))[\![\hbar]\!],\ast_{G_{2,4}(\mathbb{C})}).
\end{gather*}
Toward this goal, it must be useful to construct an explicit star product with separation of variables on $F_{1,2,4}(\mathbb{C})$.
We expect that the key to define $\pi_{1,\ast}$ and $\pi_{2,\ast}$ is the relations between an explicit star product on $F_{1,2,4}(\mathbb{C})$ and one on $\mathbb{C}P^{3}$ or $G_{2,4}(\mathbb{C})$.
By constructing a noncommutative deformation of twistor correspondence, the development of noncommutative twistor theory and the finding of new relations in noncommutative integrable systems are expected.

\appendix

\section[Properties of G\_\{p,p+q\}(C)]{Properties of $\boldsymbol{G_{p,p+q}(\mathbb{C})}$}
\label{appendix_properties_Gpq}

We denote some useful properties with respect to $G_{p,p+q}(\mathbb{C})$ in this appendix. The set of capital letter indices $\{I=ii'\mid 1\leq i\leq q,\, 1'\leq i'\leq p' \}$ introduced in Section~\ref{G24_introduction} and
$\{1,\dots,qp\}$
are ordered isomorphic. We now define the isomorphism
$ \psi\colon \{ I=ii' \mid {1 \leq i \leq q},\allowbreak 1' \leq i' \leq p'\} \to\{ 1,\dots,qp\} \subset \mathbb{N}$
by
$
 \psi
( I)
 =
 \psi
 (
 ii'
 )
 :=
 p
(
 i-1
)
 +
 {i'}_{\mathbb{N}}$,
where ${i'}_{\mathbb{N}}$ denotes the natural number corresponding to $i'$ in a natural way. For example, if~${i'=1'}$, then ${i'}_{\mathbb{N}}={1'}_{\mathbb{N}}=1$.
Using~this~$\psi$, for any
$
 I
 =
 ii'$,
$
 J=jj'
$,
we define the binary relation $\leq_{C}$ by
\smash{$
 I
 \leq_{C}
 J
 \overset{\mathrm{def}}{\Longleftrightarrow}
 \psi
 (
 I
 )
 \leq
 \psi
 (
 J
 )
$}
on~${
 \{
 I=ii'
 \mid
 1
 \leq
 i
 \leq
 q,\,
 1'
 \leq
 i'
 \leq
 p'
 \}}
$.
Then $\leq_{C}$ is the total order
on $\{I=ii'\mid 1\leq i\leq q,\allowbreak {1'\leq i'\leq p'}\}$. Hence, $\psi$ is an ordered isomorphism, i.e.,
\begin{align*}
 (
 \{
 I=ii'
 |
 1
 \leq
 i
 \leq
 q,
 1'
 \leq
 i'
 \leq
 p'
 \},
 \leq_{C}
 )
 \cong
 (
 \{
 1,\dots,qp
 \},
 \leq
 ).
\end{align*}
Using the above ordered isomorphism $\psi$, we can identify
$
 (
 \{
 I=ii'
\mid
 1
 \leq
 i
 \leq
 q,\,
 1'
 \leq
 i'
 \leq
 p'
 \},\allowbreak
 \leq_{C}
 )
$
and
$
 (
 \{
 1,\dots,qp
 \},
 \leq
 )
$
as ordered sets, and we can regard capital letter indices as ordinary indices~${I=1,\dots,qp}$.

We next discuss the properties of
the K\"ahler potential and the K\"ahler metric.
Since the first derivative of $\Phi$ often appears in this paper, we now introduce the following proposition related to the first derivative of~$\Phi$.

\begin{Proposition}\label{prop_first_deriv_potential}
Let
$
\Phi
:=
\log
\det
B
$
be the K\"ahler potential of $G_{p,p+q}(\mathbb{C})$. Let
$
B$, $
z^{ij'}$,
\smash{$b^{i'\overline{j'}}
$}
be them defined in Section~{\rm\ref{G24_introduction}}. Then, the first derivatives of $\Phi$ are
\begin{align}
 \partial_{J}
 \Phi
 =
 \partial_{jj'}
 (
 \log
 \det
 B
 )
 =
 \overline{z}^{jl'}
 b^{j'\overline{l'}},
 \qquad
 \partial_{\overline{J}}
 \Phi
 =
 \partial_{\overline{jj'}}
 (
 \log
 \det
 B
 )
 =
 z^{jl'}
 b^{l'\overline{j'}},
 \label{partial_1st}
\end{align}
respectively.
\end{Proposition}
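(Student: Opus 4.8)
The plan is to apply Jacobi's formula for the logarithmic derivative of a determinant, $\partial \log \det B = \operatorname{tr}\bigl(B^{-1}\,\partial B\bigr)$, with $\partial$ taken first to be $\partial_{jj'}$ and afterwards $\partial_{\overline{jj'}}$. Writing this identity entrywise in terms of the inverse-matrix entries $\bigl(b^{l'\overline{k'}}\bigr)$ of $B^{-1}$ introduced in Section~\ref{G24_introduction}, the derivative becomes
\begin{align*}
\partial_{jj'} \log \det B = \sum_{k',l'} b^{l'\overline{k'}}\,\partial_{jj'} b_{\overline{k'}l'},
\end{align*}
so the whole computation reduces to differentiating a single matrix entry of $B$.

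First I would differentiate $b_{\overline{k'}l'} = \delta_{k'l'} + z^{\overline{mk'}} z^{ml'}$. Since $z^{\overline{mk'}} = \overline{z}^{mk'}$ is anti-holomorphic it is annihilated by $\partial_{jj'}$, and using $\partial_{jj'} z^{ml'} = \delta_{mj}\delta_{l'j'}$ one gets $\partial_{jj'} b_{\overline{k'}l'} = z^{\overline{jk'}}\delta_{l'j'}$. Substituting into the trace formula, the Kronecker delta collapses the $l'$-sum to $l'=j'$ and leaves $\partial_{jj'}\log\det B = z^{\overline{jk'}}\,b^{j'\overline{k'}}$; relabelling $k'\to l'$ and writing $z^{\overline{jl'}} = \overline{z}^{jl'}$ reproduces the first identity of \eqref{partial_1st}.

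For the anti-holomorphic derivative I would argue symmetrically: now $\partial_{\overline{jj'}}$ annihilates the holomorphic factor $z^{ml'}$ and acts on the other via $\partial_{\overline{jj'}} z^{\overline{mk'}} = \delta_{mj}\delta_{k'j'}$, giving $\partial_{\overline{jj'}} b_{\overline{k'}l'} = z^{jl'}\delta_{k'j'}$. The trace contraction then collapses the $k'$-sum to $k'=j'$, yielding $\partial_{\overline{jj'}}\log\det B = z^{jl'}\,b^{l'\overline{j'}}$, which is the second identity of \eqref{partial_1st}.

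I do not expect any genuine obstacle here: the statement is essentially a one-line application of Jacobi's formula followed by an elementary entrywise derivative. The only point demanding care is the bookkeeping of primed and barred indices---in particular the convention that the row index of $B=(b_{\overline{i'}j'})$ carries a bar while that of $B^{-1}=(b^{i'\overline{j'}})$ does not---so that $\operatorname{tr}(B^{-1}\,\partial B)$ is contracted over the correct pair of indices. Once this convention is pinned down, both formulae in \eqref{partial_1st} follow at once.
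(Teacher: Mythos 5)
Your proof is correct and follows essentially the same route as the paper: the paper's cofactor-expansion identity $\partial_{b_{\overline{k'}l'}}\det B=\det B\cdot b^{l'\overline{k'}}$ combined with the chain rule is precisely Jacobi's formula $\partial\log\det B=\operatorname{tr}\bigl(B^{-1}\,\partial B\bigr)$ that you invoke, and both arguments then reduce to the same entrywise derivatives $\partial_{jj'}b_{\overline{k'}l'}=\overline{z}^{jk'}\delta_{l'j'}$ and $\partial_{\overline{jj'}}b_{\overline{k'}l'}=\delta_{k'j'}z^{jl'}$. The only cosmetic difference is that the paper differentiates $\det B$ and then divides by it, whereas you differentiate $\log\det B$ directly.
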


\begin{proof}
We shall
show
\smash{$
 \partial_{\overline{J}}
 \Phi
 =
 z^{jl'}
 b^{l'\overline{j'}}
$}
in this proof
.
The derivative of
$
\det
B
$
with respect to~$b_{\overline{k'}l'}$ is~obtained
\smash{$
 \frac{\partial}{\partial b_{\overline{k'}l'}}
 \det
 B
 =
 \widetilde{B}_{\overline{k'}l'}
 =
 \det
 B
 \cdot
 b^{l'\overline{k'}}
$}
by cofactor expansion, where $b^{l'k'}$ is the~entry of~$B^{-1}$.
By the chain rule, the derivative of
$
\det
B
$
with respect to $\overline{z}^{jj'}$ is
\begin{align}
 \partial_{\overline{jj'}}
 (
 \det
 B
 )
 &=
 \det
 B
 \cdot
 \partial_{\overline{jj'}}
 b_{\overline{k'}l'}
 \cdot
 b^{l'\overline{k'}}
=
 \det
 B
 \cdot
 z^{jl'}
 b^{l'\overline{j'}}.
 \label{det_deriv_z}
\end{align}
Here we use the fact that
$
\partial_{\overline{jj'}}
b_{\overline{k'}l'}
=
\delta_{j'
k'
}
z^{jl'}
$
from \eqref{b_entry}.
Using \eqref{det_deriv_z}, the first derivative of $\Phi$ is
\begin{align*}
 \partial_{\overline{J}}
 \Phi
 =
 \partial_{\overline{J}}
 \log
 \det
 B
 =
 (
 \det
 B
 )^{-1}
 \partial_{\overline{J}}
 (
 \det
 B
 )
 =
 (
 \det
 B
 )^{-1}
 \det
 B
 \cdot
 {z}^{jl'}
 b^{l'\overline{j'}}
 =
 z^{jl'}
 b^{l'\overline{j'}}.
\end{align*}
In a similar way,
$
 \partial_{J}
 \Phi
 =
 \overline{z}^{jl'}
 b^{j'\overline{l'}}
$
is also shown.
\end{proof}

The K\"ahler metric
on
$G_{p,p+q}(\mathbb{C})$ can be expressed by using the entries of matrices $B^{-1}$ and~$
 A^{-1}
$. Here we introduce the matrix
$
 A
 \in
 {\rm GL}_{q}
 (\mathbb{C})
$
and its inverse matrix $A^{-1}$ as follows:
\begin{align*}
 A
 =
 (
 a_{i\overline{j}}
 )
 :=
 \mathrm{Id}_{q}
 +
 Z
 Z^{\dagger},
 \qquad
 A^{-1}
 =
 \bigl(
 a^{\overline{i}j}
 \bigr).
\end{align*}

\begin{Proposition}\label{thm_metric_mtx}
Let $g_{I\overline{J}}$ be the K\"ahler metric
on $G_{p,p+q}(\mathbb{C})$. Then, K\"ahler metric
\smash{$
 g_{I\overline{J}}
 =
 g_{ii',\overline{jj'}}
$}
can be written as
\smash{$
 g_{I\overline{J}}
 =
 g_{ii',\overline{jj'}}
 =
 a^{\overline{j}i}
 b^{i'\overline{j'}}$},
where \smash{$a^{\overline{i}j}$} and \smash{$b^{i'\overline{j'}}$} are the entries of
$
 A^{-1}
$
and
$
 B^{-1}
$,
respectively.
\end{Proposition}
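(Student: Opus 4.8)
The plan is to compute $g_{I\overline{J}}=\partial_I\partial_{\overline{J}}\Phi$ directly, starting from the first derivative already established in Proposition~\ref{prop_first_deriv_potential}. By \eqref{partial_1st} we have $\partial_{\overline{J}}\Phi=z^{jl'}b^{l'\overline{j'}}$, so it remains to apply $\partial_I=\partial_{ii'}$ and reorganise the outcome. The product rule splits $g_{ii',\overline{jj'}}$ into two contributions: the term in which $\partial_{ii'}$ hits $z^{jl'}$, producing $\delta_{ij}\delta_{i'l'}b^{l'\overline{j'}}=\delta_{ij}b^{i'\overline{j'}}$, and the term in which it hits the inverse-matrix entry $b^{l'\overline{j'}}$.

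First I would evaluate $\partial_{ii'}b^{l'\overline{j'}}$. From \eqref{b_entry} one reads off $\partial_{ii'}b_{\overline{k'}m'}=\overline{z}^{ik'}\delta_{i'm'}$, and combining this with the standard rule $\partial_{ii'}B^{-1}=-B^{-1}(\partial_{ii'}B)B^{-1}$ for differentiating an inverse gives $\partial_{ii'}b^{l'\overline{j'}}=-b^{l'\overline{k'}}\overline{z}^{ik'}b^{i'\overline{j'}}$. Substituting this back, the second contribution becomes $-\big(z^{jl'}b^{l'\overline{k'}}\overline{z}^{ik'}\big)b^{i'\overline{j'}}$, in which the bracketed factor is precisely the $(j,i)$ entry of $ZB^{-1}Z^{\dagger}$. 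Collecting both contributions yields $g_{ii',\overline{jj'}}=\big(\delta_{ij}-(ZB^{-1}Z^{\dagger})_{ji}\big)\,b^{i'\overline{j'}}$, so the whole statement reduces to identifying the scalar factor $\delta_{ij}-(ZB^{-1}Z^{\dagger})_{ji}$ with the entry $a^{\overline{j}i}$ of $A^{-1}$.

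The key step, and the only place where anything beyond bookkeeping occurs, is this last identification, namely the matrix identity $\mathrm{Id}_q-ZB^{-1}Z^{\dagger}=A^{-1}$. Recalling $B=\mathrm{Id}_p+Z^{\dagger}Z$ and $A=\mathrm{Id}_q+ZZ^{\dagger}$, this is exactly the Woodbury (push-through) identity $(\mathrm{Id}_q+ZZ^{\dagger})^{-1}=\mathrm{Id}_q-Z(\mathrm{Id}_p+Z^{\dagger}Z)^{-1}Z^{\dagger}$. I would verify it in one line by multiplying $\mathrm{Id}_q-ZB^{-1}Z^{\dagger}$ on the right by $A=\mathrm{Id}_q+ZZ^{\dagger}$ and checking that the product telescopes to $\mathrm{Id}_q$ after substituting $Z^{\dagger}Z=B-\mathrm{Id}_p$. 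Reading off the $(j,i)$ entry then gives $\delta_{ij}-(ZB^{-1}Z^{\dagger})_{ji}=a^{\overline{j}i}$, whence $g_{ii',\overline{jj'}}=a^{\overline{j}i}b^{i'\overline{j'}}$.

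The main obstacle I anticipate is purely notational: keeping the barred and primed indices consistent throughout, in particular matching the row/column conventions so that the contraction $z^{jl'}b^{l'\overline{k'}}\overline{z}^{ik'}$ genuinely equals the $(j,i)$ entry of $ZB^{-1}Z^{\dagger}$, and so that the transposition appearing in $a^{\overline{j}i}$ (rather than $a^{\overline{i}j}$) comes out in the correct position. Once these conventions are pinned down, the computation is routine and the Woodbury identity closes the argument.
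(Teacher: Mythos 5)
Your proposal is correct and follows essentially the same route as the paper: differentiate $\partial_{\overline{J}}\Phi = z^{jl'}b^{l'\overline{j'}}$ from Proposition~\ref{prop_first_deriv_potential} using $\partial_{ii'}B^{-1} = -B^{-1}(\partial_{ii'}B)B^{-1}$, arrive at $g_{ii',\overline{jj'}} = \bigl(\delta_{ji} - z^{jl'}b^{l'\overline{k'}}\overline{z}^{ik'}\bigr)b^{i'\overline{j'}}$, and identify the bracketed factor with $a^{\overline{j}i}$ via $\mathrm{Id}_{q} - ZB^{-1}Z^{\dagger} = A^{-1}$. The only cosmetic difference is in verifying that last identity: you multiply $\mathrm{Id}_{q} - ZB^{-1}Z^{\dagger}$ directly against $A$ and telescope using $Z^{\dagger}Z = B - \mathrm{Id}_{p}$, whereas the paper starts from $\mathrm{Id}_{q} = \mathrm{Id}_{q} + ZZ^{\dagger} - ZZ^{\dagger}$ and applies the same push-through manipulation — both are the Woodbury identity, so the arguments coincide in substance.
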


\begin{proof}
Since
$
 \partial_{ii'}
 \bigl(
 B^{-1}
 \bigr)
 =
 -
 B^{-1}
 (
 \partial_{ii'}
 B
 )
 B^{-1}
$,
the derivative of \smash{$b^{l'\overline{k'}}$} with respect to $z^{ii'}$ is calculated as
\begin{align}
 \partial_{ii'}
 b^{l'\overline{k'}}
 =
 -
 b^{l'\overline{m'}}
 (
 \partial_{ii'}
 b_{\overline{m'}n'}
 )
 b^{n'\overline{k'}}=
 -
 \overline{z}^{im'}
 b^{l'\overline{m'}}
 b^{i'\overline{k'}}.
 \label{Binv_deriv}
\end{align}
Hence, from \eqref{partial_1st} in Proposition~\ref{prop_first_deriv_potential} and \eqref{Binv_deriv}, we have
\begin{align}
 g_{I\overline{J}}
 =
 \partial_{ii'}
 \partial_{\overline{jj'}}
 (
 \log
 \det
 B
 )
 =
 \partial_{I}
 \bigl(
 z^{jl'}
 b^{l'\overline{j'}}
 \bigr)
 =
 b^{i'\overline{j'}}
 \bigl(
 \delta_{ji}
 -
 z^{jl'}
 b^{l'\overline{m'}}
 z^{\overline{im'}}
 \bigr).
 \label{Gpq_metric_mid}
\end{align}
On the other hand, recalling that
$
 \mathrm{Id}_{p}
 =
 BB^{-1}
 =
 \bigl(
 \mathrm{Id}_{p}
 +
 Z^{\dagger}
 Z
 \bigr)
 B^{-1}
$,
$\mathrm{Id}_{q}$ can be written as
\begin{align}
 \mathrm{Id}_{q}= \mathrm{Id}_{q} + ZZ^{\dagger} - ZZ^{\dagger}
= \mathrm{Id}_{q} + ZZ^{\dagger} - Z \bigl( \mathrm{Id}_{p} + Z^{\dagger} Z \bigr) B^{-1} Z^{\dagger} . \label{A_inv_mid}
\end{align}
Note that
\begin{align*}
 Z\bigl( \mathrm{Id}_{p} + Z^{\dagger} Z\bigr) B^{-1} Z^{\dagger} =
 Z B^{-1} Z^{\dagger} + Z Z^{\dagger} Z B^{-1} Z^{\dagger}
 =\bigl( \mathrm{Id}_{q} + Z Z^{\dagger} \bigr) Z B^{-1} Z^{\dagger}
\end{align*}
and
$ A = \mathrm{Id}_{q} + Z Z^{\dagger}$,
\eqref{A_inv_mid} can also be rewritten as
\begin{align*}
 \eqref{A_inv_mid}= A - A Z B^{-1} Z^{\dagger}= A \bigl( \mathrm{Id}_{q} - Z B^{-1} Z^{\dagger} \bigr).
\end{align*}
This means that
$ \mathrm{Id}_{q} - Z B^{-1} Z^{\dagger}$ is the inverse matrix of $A$, i.e.,
\begin{align}
 a^{\overline{j}i}
 =
 \delta_{ji}
 -
 z^{jl'}
 b^{l'\overline{m'}}
 z^{\overline{im'}}
 .
 \label{A_inv_entry}
\end{align}
Hence, substituting \eqref{A_inv_entry} into \eqref{Gpq_metric_mid}, we obtain
$
 g_{I\overline{J}}
 =
 g_{ii',\overline{jj'}}
 =
 a^{\overline{j}i}
 b^{i'\overline{j'}}
$.
\end{proof}

In addition, the second derivative of $\Phi$ with respect to $z^{I}$ (or $\bar{z}^{I}$) can be expressed as a~product of the first derivative of $\Phi$ with respect to $z^{I}$ (or $\bar{z}^{I}$).
\begin{Proposition}\label{potential_deriv_grassmann_2nd}
For $ G_{p,p+q}(\mathbb{C})$, the second derivative of the K\"ahler potential $\Phi$ satisfies
\begin{align*}
 \partial_{I}
 \partial_{J}
 \Phi = -
 \partial_{ij'}
 \Phi
 \partial_{ji'}
 \Phi,
 \qquad
 \partial_{\overline{I}}
 \partial_{\overline{J}}
 \Phi = -
 \partial_{\overline{ij'}}
 \Phi
 \partial_{\overline{ji'}} \Phi.
\end{align*}
\end{Proposition}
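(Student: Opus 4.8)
The plan is to reduce the statement to the first-derivative formula \eqref{partial_1st} together with the derivative rule \eqref{Binv_deriv} for the entries of $B^{-1}$, both already established above. I will carry out the holomorphic identity $\partial_{I}\partial_{J}\Phi=-\partial_{ij'}\Phi\,\partial_{ji'}\Phi$ in detail; the anti-holomorphic one then follows by the conjugate computation.

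First I would start from Proposition~\ref{prop_first_deriv_potential}, writing $\partial_{J}\Phi=\partial_{jj'}\Phi=\overline{z}^{jl'}b^{j'\overline{l'}}$, and then apply $\partial_{I}=\partial_{ii'}=\partial/\partial z^{ii'}$. The key observation is that the prefactor $\overline{z}^{jl'}$ is anti-holomorphic, so $\partial_{ii'}\overline{z}^{jl'}=0$ and the derivative acts only on $b^{j'\overline{l'}}$. Using \eqref{Binv_deriv} in the form $\partial_{ii'}b^{j'\overline{l'}}=-\overline{z}^{im'}b^{j'\overline{m'}}b^{i'\overline{l'}}$, I obtain
\begin{align*}
 \partial_{I}\partial_{J}\Phi
 =\overline{z}^{jl'}\,\partial_{ii'}b^{j'\overline{l'}}
 =-\overline{z}^{jl'}\overline{z}^{im'}b^{j'\overline{m'}}b^{i'\overline{l'}}.
\end{align*}

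Next I would identify this expression with the claimed product by reading off the two first derivatives from \eqref{partial_1st}: namely $\partial_{ij'}\Phi=\overline{z}^{im'}b^{j'\overline{m'}}$ and $\partial_{ji'}\Phi=\overline{z}^{jl'}b^{i'\overline{l'}}$. Multiplying these and comparing the dummy-index structure shows $-\partial_{ij'}\Phi\,\partial_{ji'}\Phi=-\overline{z}^{im'}b^{j'\overline{m'}}\overline{z}^{jl'}b^{i'\overline{l'}}$, which coincides termwise with the computed second derivative. This establishes the first identity. The second identity is obtained by the same steps starting from $\partial_{\overline{J}}\Phi=z^{jl'}b^{l'\overline{j'}}$, differentiating with $\partial_{\overline{I}}=\partial_{\overline{ii'}}$ (which now annihilates the holomorphic factor $z^{jl'}$) and using the conjugate of \eqref{Binv_deriv}; alternatively it follows directly by complex conjugating the first identity, since $\Phi$ is real.

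Since every ingredient is already available, I do not expect a genuine obstacle: the only care needed is bookkeeping of the primed and unprimed indices and checking that the factorization of the quartic expression in $\overline{z}$ and the entries $b^{\bullet\overline{\bullet}}$ matches the product of the two first derivatives exactly. The mild subtlety worth stating explicitly is that $\partial_{ii'}$ kills the anti-holomorphic prefactor $\overline{z}^{jl'}$, so no extra term arises and the entire second derivative comes from differentiating $B^{-1}$ through \eqref{Binv_deriv}.
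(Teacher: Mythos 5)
Your proof is correct and takes essentially the same route as the paper's own: both apply $\partial_{I}$ to $\partial_{J}\Phi=\overline{z}^{jl'}b^{j'\overline{l'}}$ from \eqref{partial_1st}, use \eqref{Binv_deriv} (the anti-holomorphic prefactor being killed by $\partial_{I}$, as you note), and identify the resulting quartic expression with $-\partial_{ij'}\Phi\,\partial_{ji'}\Phi$. The only cosmetic difference is that you additionally observe the anti-holomorphic identity follows by conjugation from the reality of $\Phi$, where the paper just remarks that the conjugate computation is analogous.
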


\begin{proof}
By using \eqref{partial_1st} and \eqref{Binv_deriv}, we obtain
\begin{align*}
 \partial_{I}
 \partial_{J}
 \Phi
 =
 \partial_{I}
 \bigl(
 \overline{z}^{jl'}
 b^{j'\overline{l'}}
 \bigr)
 =
 -
 \overline{z}^{in'}
 b^{j'\overline{n'}}
 \overline{z}^{jl'}
 b^{i'\overline{l'}}
 =
 -
 \partial_{ij'}
 \Phi
 \partial_{ji'}
 \Phi.
\end{align*}
In a similar way, we can also prove
$ \partial_{\overline{I}} \partial_{\overline{J}} \Phi = - \partial_{\overline{ij'}} \Phi \partial_{\overline{ji'}} \Phi$.
\end{proof}

In general, to explicitly determine the curvature
\smash{$
 {{R_{\overline{I}}}^{\overline{J}\overline{K}}}_{\overline{L}}
 =
 g^{\overline{J}P}
 g^{\overline{K}Q}
 R_{\overline{I}PQ\overline{L}}
$}
of K\"ahler manifolds, it is necessary to calculate the higher-order derivatives of $\Phi$.
Fortunately, the curvature of
$ G_{p,p+q}(\mathbb{C})$
can be obtained easily by Proposition~\ref{potential_deriv_grassmann_2nd}.
\begin{Proposition}\label{curv_grassmann}
For $G_{p,p+q}(\mathbb{C})$, the curvature \smash{${{R_{\overline{I}}}^{\overline{J}\overline{K}}}_{\overline{L}}$} is given by
\begin{align*}
 {{R_{\overline{I}}}^{\overline{J}\overline{K}}}_{\overline{L}}
 =
 -
 \delta_{
 \overline{il'},
 \overline{J}
 }
 \delta_{
 \overline{li'},
 \overline{K}
 }
 -
 \delta_{
 \overline{li'},
 \overline{J}
 }
 \delta_{
 \overline{il'},
 \overline{K}
 }
 =
 -
 \delta_{ij}
 \delta_{kl}
 \delta_{i'k'}
 \delta_{j'l'}
 -
 \delta_{ik}
 \delta_{jl}
 \delta_{i'j'}
 \delta_{k'l'},
\end{align*}
where $\delta_{IJ}=\delta_{ii',jj'}:=\delta_{ij}\delta_{i'j'}$.
\end{Proposition}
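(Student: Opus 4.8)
The plan is to compute $R_{\overline{I}PQ\overline{L}}$ directly in the local coordinates from the Kähler potential $\Phi=\log\det B$ and then raise indices with $g^{\overline{J}P}g^{\overline{K}Q}$. For a Kähler metric $g_{P\overline{Q}}=\partial_{P}\partial_{\overline{Q}}\Phi$ the lowered curvature has the standard coordinate expression
\[
R_{\overline{I}PQ\overline{L}}
=-\partial_{P}\partial_{Q}\partial_{\overline{I}}\partial_{\overline{L}}\Phi
+g^{\overline{M}N}\bigl(\partial_{P}\partial_{Q}\partial_{\overline{M}}\Phi\bigr)\bigl(\partial_{N}\partial_{\overline{I}}\partial_{\overline{L}}\Phi\bigr),
\]
so the whole computation is reduced to third and fourth derivatives of $\Phi$. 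The crucial input is Proposition~\ref{potential_deriv_grassmann_2nd}, which factorizes the purely holomorphic (and purely anti-holomorphic) second derivatives; I would use it to rewrite every third and fourth derivative as products of first derivatives of $\Phi$ and mixed second derivatives, the latter being just metric components $g_{P\overline{Q}}$.

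Concretely, first I would differentiate $\partial_{P}\partial_{Q}\Phi=-\partial_{pq'}\Phi\,\partial_{qp'}\Phi$ by $\partial_{\overline{M}}$ to obtain $\partial_{P}\partial_{Q}\partial_{\overline{M}}\Phi=-g_{pq',\overline{M}}\,\partial_{qp'}\Phi-\partial_{pq'}\Phi\,g_{qp',\overline{M}}$, and symmetrically on the anti-holomorphic side. Then the fourth derivative $\partial_{P}\partial_{Q}\partial_{\overline{I}}\partial_{\overline{L}}\Phi$ follows by one more Leibniz expansion, splitting into two kinds of terms: products of two mixed second derivatives (pure metric products such as $g_{pq',\overline{I}}\,g_{qp',\overline{L}}$) and terms still carrying a holomorphic first derivative multiplied by an anti-holomorphic third derivative. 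I expect the second group to be cancelled exactly by the connection-squared term $g^{\overline{M}N}(\cdots)(\cdots)$ once the third derivatives appearing there are expanded by the same factorization, so that only the metric-product terms survive with an overall minus sign.

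Finally, I would raise the indices. Using Proposition~\ref{thm_metric_mtx}, $g_{I\overline{J}}=a^{\overline{j}i}b^{i'\overline{j'}}$ splits into an unprimed $A^{-1}$-factor and a primed $B^{-1}$-factor, and its inverse $g^{\overline{J}P}$ correspondingly into $A$- and $B$-factors; contracting the surviving metric products against $g^{\overline{J}P}g^{\overline{K}Q}$ then collapses each factor to a Kronecker delta and yields $-\delta_{\overline{il'},\overline{J}}\delta_{\overline{li'},\overline{K}}-\delta_{\overline{li'},\overline{J}}\delta_{\overline{il'},\overline{K}}$, the symmetrization in $\overline{J}\leftrightarrow\overline{K}$ being the usual Kähler symmetry of the curvature.

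The main obstacle will be the index bookkeeping in the cancellation step—verifying that the connection-squared term annihilates precisely the unwanted first-derivative pieces and that the leftover metric products assemble with the correct sign—together with matching the paper's curvature convention (the relations ${R_{ABC}}^{D}={\mathfrak{R}^{D}}_{CAB}$ and ${R_{ABC}}^{D}=-{\mathfrak{R}_{ABC}}^{D}$ recorded in Section~\ref{G24_review}) so that the final signs come out exactly as stated.
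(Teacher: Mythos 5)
Your strategy is the paper's own: write the curvature in terms of derivatives of $\Phi$, use Proposition~\ref{potential_deriv_grassmann_2nd} to factorize every purely (anti-)holomorphic second derivative into first derivatives, let the first-derivative-times-third-derivative terms cancel against the connection-squared term, and contract with the inverse metric at the end. That cancellation does work exactly as you anticipate; the paper carries it out in precisely this order.

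There is, however, a concrete sign gap that would make your computation come out as the negative of the stated result. In the paper's convention ($R^{\nabla}(X,Y)=\nabla_{X}\nabla_{Y}-\nabla_{Y}\nabla_{X}-\nabla_{[X,Y]}$, with the antisymmetric pair in the first two slots of ${R_{ABC}}^{D}$), the potential formula you quote computes the component with \emph{holomorphic} indices in the first and third slots,
\begin{align*}
R_{P\overline{I}Q\overline{L}}
=-\partial_{P}\partial_{Q}\partial_{\overline{I}}\partial_{\overline{L}}\Phi
+g^{\overline{M}N}\bigl(\partial_{P}\partial_{Q}\partial_{\overline{M}}\Phi\bigr)\bigl(\partial_{N}\partial_{\overline{I}}\partial_{\overline{L}}\Phi\bigr),
\end{align*}
whereas $R_{\overline{I}PQ\overline{L}}=-R_{P\overline{I}Q\overline{L}}$; so labelling this right-hand side as $R_{\overline{I}PQ\overline{L}}$, as you do, drops a sign. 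The inconsistency shows up inside your own outline: expanding this right-hand side with Proposition~\ref{potential_deriv_grassmann_2nd}, the surviving terms are $+g_{P,\overline{il'}}g_{Q,\overline{li'}}+g_{Q,\overline{il'}}g_{P,\overline{li'}}$ — with a \emph{plus} sign, not the ``overall minus sign'' you assert — and contracting with $g^{\overline{J}P}g^{\overline{K}Q}$ would then yield $+\delta_{\overline{il'},\overline{J}}\delta_{\overline{li'},\overline{K}}+\delta_{\overline{li'},\overline{J}}\delta_{\overline{il'},\overline{K}}$, the negative of Proposition~\ref{curv_grassmann}. The paper's proof evades this by computing $R_{\overline{I}P\overline{L}Q}$ (which equals $R_{P\overline{I}Q\overline{L}}$ after swapping within both index pairs) from the potential formula, and only at the contraction step inserting the antisymmetry, ${{R_{\overline{I}}}^{\overline{J}\overline{K}}}_{\overline{L}}=g^{\overline{J}P}g^{\overline{K}Q}R_{\overline{I}PQ\overline{L}}=-g^{\overline{J}P}g^{\overline{K}Q}R_{\overline{I}P\overline{L}Q}$; that single swap is where the overall minus sign in the statement originates, and your proof must include it explicitly. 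A minor further remark: the final contraction needs only $g^{\overline{J}P}g_{P,\overline{il'}}=\delta_{\overline{il'},\overline{J}}$, i.e., the definition of the inverse metric, so the block factorization of $g_{I\overline{J}}$ into $A^{-1}$ and $B^{-1}$ pieces from Proposition~\ref{thm_metric_mtx} plays no role here.
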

\begin{proof}
Recalling that for any K\"ahler manifold,
\begin{align*}
 R_{\overline{I}P\overline{L}Q}
 =
 R_{P\overline{I}Q\overline{L}}
 =
 -\partial_{P}\partial_{\overline{I}}\partial_{Q}\partial_{\overline{L}}\Phi
 +
 g^{A\overline{B}}
 (
 \partial_{P}\partial_{\overline{B}}\partial_{Q}\Phi
 )
 (
 \partial_{A}\partial_{\overline{I}}\partial_{\overline{L}}\Phi
 )
\end{align*}
(see \cite[Appendix A]{OS_jrnl}),
we get
\begin{align}
 R_{\overline{I}P\overline{L}Q}
& =
 -
 \partial_{P}\partial_{Q}
 (
 \partial_{\overline{I}}\partial_{\overline{L}}\Phi
 )
 +
 g^{A\overline{B}}
 \{
 \partial_{\overline{B}}
 (
 \partial_{P}\partial_{Q}\Phi
 )
 \}
 \{
 \partial_{A}
 (
 \partial_{\overline{I}}\partial_{\overline{L}}\Phi
 )
 \}
 \nonumber
 \\
 &=
 -
 \partial_{P}\partial_{Q}
 (
 -
 \partial_{\overline{il'}}\Phi
 \partial_{\overline{li'}}\Phi
 )
 +
 g^{A\overline{B}}
 \{
 \partial_{\overline{B}}
 (
 -
 \partial_{pq'}\Phi
 \partial_{qp'}\Phi
 )
 \}
 \{
 \partial_{A}
 (
 -
 \partial_{\overline{il'}}\Phi
 \partial_{\overline{li'}}\Phi
 )
 \}
 \label{curv_calc_mid1}
\end{align}
by Proposition~\ref{potential_deriv_grassmann_2nd}.
By calculating the derivative in \eqref{curv_calc_mid1}, it can be rewritten as
\begin{gather}
 \partial_{\overline{il'}}
 (
 \partial_{P}\partial_{Q}\Phi
 )
 \partial_{\overline{li'}}\Phi
 +
 g_{Q,\overline{il'}}
 g_{P,\overline{li'}}
 +
 g_{P,\overline{il'}}
 g_{Q,\overline{li'}}
 +
 \partial_{\overline{il'}}\Phi
 \partial_{\overline{li'}}
 (
 \partial_{P}\partial_{Q}\Phi
 ) +
 g^{A\overline{B}}
 \bigl(
 g_{\overline{B},pq'}
 g_{A,\overline{il'}}
 \partial_{qp'}\Phi
 \partial_{\overline{li'}}\Phi
 \nonumber
 \\
 \qquad
 {}+
 g_{\overline{B},pq'}
 g_{A,\overline{li'}}
 \partial_{qp'}\Phi
 \partial_{\overline{il'}}\Phi
 +
 g_{\overline{B},qp'}
 g_{A,\overline{il'}}
 \partial_{pq'}\Phi
 \partial_{\overline{li'}}\Phi
 +
 g_{\overline{B},qp'}
 g_{A,\overline{li'}}
 \partial_{pq'}\Phi
 \partial_{\overline{il'}}\Phi
 \bigr)
 \nonumber
 \\
 \phantom{\qquad
 +}{}=
 \partial_{\overline{il'}}
 (
 -
 \partial_{pq'}\Phi
 \partial_{qp'}\Phi
 )
 \partial_{\overline{li'}}\Phi
 +
 g_{Q,\overline{il'}}
 g_{P,\overline{li'}}
 +
 g_{P,\overline{il'}}
 g_{Q,\overline{li'}}
 +
 \partial_{\overline{il'}}\Phi
 \partial_{\overline{li'}}
 (
 -
 \partial_{pq'}\Phi
 \partial_{qp'}\Phi
 )
 \nonumber
 \\
 \phantom{\qquad
 +=}{}
 +
 \delta^{A}_{pq'}
 g_{A,\overline{il'}}
 \partial_{qp'}\Phi
 \partial_{\overline{li'}}\Phi
 +
 \delta^{A}_{pq'}
 g_{A,\overline{li'}}
 \partial_{qp'}\Phi
 \partial_{\overline{il'}}\Phi
 +
 \delta^{A}_{qp'}
 g_{A,\overline{il'}}
 \partial_{pq'}\Phi
 \partial_{\overline{li'}}\Phi \nonumber
 \\
 \phantom{\qquad
 +=}{}
 +
 \delta^{A}_{qp'}
 g_{A,\overline{li'}}
 \partial_{pq'}\Phi
 \partial_{\overline{il'}}\Phi.
 \label{curv_calc_mid2}
\end{gather}
Here we use Proposition~\ref{potential_deriv_grassmann_2nd} and
$
 g_{\overline{I}J}
 =
 \partial_{\overline{I}}
 \partial_{J}
 \Phi
$,
then \eqref{curv_calc_mid2} becomes
\begin{gather}
 -
 g_{\overline{il'},pq'}
 \partial_{qp'}\Phi
 \partial_{\overline{li'}}\Phi
 -
 g_{\overline{il'},qp'}
 \partial_{pq'}\Phi
 \partial_{\overline{li'}}\Phi
 +
 g_{Q,\overline{il'}}
 g_{P,\overline{li'}}
 +
 g_{P,\overline{il'}}
 g_{Q,\overline{li'}}
 -
 g_{\overline{li'},pq'}
 \partial_{qp'}\Phi
 \partial_{\overline{il'}}\Phi
 \nonumber
 \\
 \qquad
 {}-
 g_{\overline{li'},qp'}
 \partial_{pq'}\Phi
 \partial_{\overline{il'}}\Phi
 +
 g_{\overline{il'},pq'}
 \partial_{qp'}\Phi
 \partial_{\overline{li'}}\Phi
 +
 g_{\overline{li'},pq'}
 \partial_{qp'}\Phi
 \partial_{\overline{il'}}\Phi
 +
 g_{\overline{il'},qp'}
 \partial_{pq'}\Phi
 \partial_{\overline{li'}}\Phi \nonumber
 \\
 \qquad
 {}+
 g_{\overline{li'},qp'}
 \partial_{pq'}\Phi
 \partial_{\overline{il'}}\Phi
 \nonumber
 \\
 \phantom{
 \qquad+}{}=
 g_{P,\overline{il'}}
 g_{Q,\overline{li'}}
 +
 g_{Q,\overline{il'}}
 g_{P,\overline{li'}}.
 \label{curv_calc_mid3}
\end{gather}
Substituting \eqref{curv_calc_mid3} into
$
 {{R_{\overline{I}}}^{\overline{J} \overline{K}}}_{\overline{L}}
 =
 g^{\overline{J}P}
 g^{\overline{K}Q}
 R_{\overline{I}PQ\overline{L}}
 =
 -
 g^{\overline{J}P}
 g^{\overline{K}Q}
 R_{\overline{I}P\overline{L}Q}
$,
we obtain
\begin{align*}
 {{R_{\overline{I}}}^{\overline{J} \overline{K}}}_{\overline{L}}
 &=
 -
 g^{\overline{J}P}
 g^{\overline{K}Q}
 (
 g_{P,\overline{il'}}
 g_{Q,\overline{li'}}
 +
 g_{Q,\overline{il'}}
 g_{P,\overline{li'}}
 )
=
 -
 \delta_{
 \overline{il'},
 \overline{J}
 }
 \delta_{
 \overline{li'},
 \overline{K}
 }
 -
 \delta_{
 \overline{li'},
 \overline{J}
 }
 \delta_{
 \overline{il'},
 \overline{K}
 }
 \\
 &=
 -
 \delta_{ij}
 \delta_{kl}
 \delta_{i'k'}
 \delta_{j'l'}
 -
 \delta_{ik}
 \delta_{jl}
 \delta_{i'j'}
 \delta_{k'l'}.
\end{align*}
This proof is completed.
\end{proof}

Note that, due to the difference with the notation \eqref{Riem_curv_KN} for the Riemann curvature tensor, \smash{${{R_{\overline{I}}}^{\overline{J} \overline{K}}}_{\overline{L}}$} has opposite sign to \smash{${{\mathfrak{R}_{\overline{I}}}^{\overline{J} \overline{K}}}_{\overline{L}}$} in the notation by Kobayashi--Nomizu, i.e.,
\[
{{R_{\overline{I}}}^{\overline{J} \overline{K}}}_{\overline{L}}
= -{{\mathfrak{R}_{\overline{I}}}^{\overline{J} \overline{K}}}_{\overline{L}}.
\] In addition, this curvature has the symmetry
\[
 {{R_{\overline{I}}}^{\overline{J} \overline{K}}}_{\overline{L}}
 =
 {{R_{\overline{L}}}^{\overline{J} \overline{K}}}_{\overline{I}}
 =
 {{R_{\overline{I}}}^{\overline{K} \overline{J}}}_{\overline{L}}
 =
 {{R_{\overline{L}}}^{\overline{K} \overline{J}}}_{\overline{I}}
\] with respect to capital letter indices $I$, $J$, $K$ and $L$. See \cite{HS1,OS_jrnl} for more details.

\section{A property of symmetric functions of capital letter indices}
\label{appendix_symmetry_function}

We denote a property of symmetric functions whose variables are capital letter indices in this appendix. By using this property, we state that some functions and operators appearing in this paper do not depend on the choice of a capital letter index $I$. Finally, we show from this fact that the star product~\eqref{star_prd_G2_4} in our main theorem (see Theorem~\ref{thm_star_prd_G2_4}) which is independent of $I$.

We now introduce the following proposition related to the functions which depends on capital letter indices.
\begin{Proposition}\label{prop_app_sym_fct}
Let
$
f\colon
\mathcal{I}
\times
\mathcal{I}
\to
\mathbb{R}
$
be a function satisfying
$
f(I,J)
=
f(J,I)
$
for
$
I,J
\in
\mathcal{I}
=
\{
I,
\cancel{I},
i\cancel{i}',
\cancel{i}i'
\}
$.
Then,
$
f(I,\cancel{i}i')
f(\cancel{I},i\cancel{i}')
$
does not depend on
$
I
=
ii'
\in
\mathcal{I}
$.
\end{Proposition}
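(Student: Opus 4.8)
The plan is to exploit the fact that $\mathcal{I}$ has only four elements and that, for any fixed $I = ii'$, the two unordered pairs $\{I, \cancel{i}i'\}$ and $\{\cancel{I}, i\cancel{i}'\}$ partition $\mathcal{I}$ according to the second component of the index. First I would note that $I = ii'$ and $\cancel{i}i'$ are precisely the two elements of $\mathcal{I}$ whose second component equals $i'$, while $\cancel{I} = \cancel{i}\cancel{i}'$ and $i\cancel{i}'$ are precisely the two elements whose second component equals $\cancel{i}'$. Hence $\{I,\cancel{i}i'\}$ and $\{\cancel{I},i\cancel{i}'\}$ are the two ``columns'' $\{11',21'\}$ and $\{12',22'\}$ of $\mathcal{I}$, appearing in one order or the other according to whether $i' = 1'$ or $i' = 2'$.

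Next, using the hypothesis $f(A,B) = f(B,A)$, I would observe that the value of $f$ on a pair is independent of the order of its arguments, so that $f$ assigns a well-defined number to each of the two columns; write $F_{1} := f(11',21')$ and $F_{2} := f(12',22')$. Then for any choice of $I$ the product $f(I,\cancel{i}i')\,f(\cancel{I},i\cancel{i}')$ equals $F_{1}F_{2}$ when $i' = 1'$ and equals $F_{2}F_{1}$ when $i' = 2'$. Since multiplication in $\mathbb{R}$ is commutative, both cases give $F_{1}F_{2} = f(11',21')\,f(12',22')$, which manifestly does not depend on the chosen $I \in \mathcal{I}$.

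The argument is essentially a finite verification, so there is no serious obstacle; the only point requiring care is the bookkeeping of the slash notation $\cancel{i}$, $\cancel{i}'$, $\cancel{I}$ as $I$ ranges over the four elements of $\mathcal{I}$. As a safeguard I would, if a reader prefers an unconditional check, simply tabulate all four cases $I \in \{11',12',21',22'\}$, record the triple $(\cancel{I}, i\cancel{i}', \cancel{i}i')$ in each, and confirm via the symmetry of $f$ that each of the four resulting products collapses to the single value $f(11',21')\,f(12',22')$.
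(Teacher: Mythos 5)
Your proposal is correct and is essentially the paper's own argument: both rest on exactly the two facts that $f$ is symmetric in its arguments and that multiplication in $\mathbb{R}$ is commutative, the paper phrasing this as the chain $f(I,\cancel{i}i')f(\cancel{I},i\cancel{i}') = f(\cancel{I},i\cancel{i}')f(I,\cancel{i}i') = f(\cancel{i}i',I)f(i\cancel{i}',\cancel{I}) = f(i\cancel{i}',\cancel{I})f(\cancel{i}i',I)$, whose four members are precisely the products arising from the four possible choices of $I\in\mathcal{I}$. Your identification of the two ``columns'' $\{11',21'\}$ and $\{12',22'\}$ as unordered pairs on which $f$ is well defined is simply a tidier bookkeeping of that same finite verification.
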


\begin{proof}
From the assumption of $f$, the following relations hold
$
f(I,\cancel{i}i')
=
f(\cancel{i}i',I)$,
$
f(\cancel{I},i\cancel{i}')
=
f(i\cancel{i}',\cancel{I})$.
Since the product of
$
f(I,\cancel{i}i')
$
and
$
f(\cancel{I},i\cancel{i}')
$
is commutative, we eventually obtain
\begin{align*}
f(I,\cancel{i}i')
f(\cancel{I},i\cancel{i}')
=
f(\cancel{I},i\cancel{i}')
f(I,\cancel{i}i')
=
f(\cancel{i}i',I)
f(i\cancel{i}',\cancel{I})
=
f(i\cancel{i}',\cancel{I})
f(\cancel{i}i',I).
\end{align*}
This shows that
$
f(I,\cancel{i}i')
f(\cancel{I},i\cancel{i}')
$
does not depend on the choice of a capital letter index $I\in\mathcal{I}$.
\end{proof}

This property can be extended to commutative operators on the Fock space $V$ as well as symmetric functions on $\mathbb{R}$.
\begin{Proposition}\label{cor_app_sym_opr}
Let
$
\mathfrak{f}\colon
\mathcal{I}
\times
\mathcal{I}
\to
\mathcal{L}(V)
$
be an operator satisfying
$
\mathfrak{f}(I,J)
=
\mathfrak{f}(J,I)
$
for
$
I,J
\in
\mathcal{I}
$,
where $\mathcal{L}(V)$ is the set of operators on the Fock space $V$.
If $\mathfrak{f}$ is commutative, i.e.,
$
\mathfrak{f}(I,J)
\circ
\mathfrak{f}(K,L)
=
\mathfrak{f}(K,L)
\circ
\mathfrak{f}(I,J)
$
for $I,J,K,L\in\mathcal{I}$, then
$
\mathfrak{f}(I,\cancel{i}i')
\mathfrak{f}(\cancel{I},i\cancel{i}')
$
does not depend on
$
I
=
ii'
\in
\mathcal{I}
$.
\end{Proposition}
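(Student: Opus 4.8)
The plan is to reproduce verbatim the reasoning used in the proof of Proposition~\ref{prop_app_sym_fct}, the sole difference being that the automatic commutativity of real numbers is now supplied by the explicit commutativity hypothesis placed on $\mathfrak{f}$. Concretely, I would first apply the symmetry $\mathfrak{f}(I,J)=\mathfrak{f}(J,I)$ to rewrite the two operator factors as $\mathfrak{f}(I,\cancel{i}i')=\mathfrak{f}(\cancel{i}i',I)$ and $\mathfrak{f}(\cancel{I},i\cancel{i}')=\mathfrak{f}(i\cancel{i}',\cancel{I})$, and then invoke the commutativity assumption $\mathfrak{f}(I,J)\circ\mathfrak{f}(K,L)=\mathfrak{f}(K,L)\circ\mathfrak{f}(I,J)$ to reorder the product freely. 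Chaining these two operations gives
\begin{align*}
\mathfrak{f}(I,\cancel{i}i')
\mathfrak{f}(\cancel{I},i\cancel{i}')
=
\mathfrak{f}(\cancel{I},i\cancel{i}')
\mathfrak{f}(I,\cancel{i}i')
=
\mathfrak{f}(\cancel{i}i',I)
\mathfrak{f}(i\cancel{i}',\cancel{I})
=
\mathfrak{f}(i\cancel{i}',\cancel{I})
\mathfrak{f}(\cancel{i}i',I),
\end{align*}
which is exactly the operator analogue of the string of identities appearing in the scalar proof.

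The key point I would then emphasise is that the four terms of this chain are precisely the four products $\mathfrak{f}(I,\cancel{i}i')\mathfrak{f}(\cancel{I},i\cancel{i}')$ obtained from the four admissible choices $I\in\mathcal{I}=\{11',12',21',22'\}$. Indeed, replacing the base index $I$ by $\cancel{I}$, $\cancel{i}i'$, or $i\cancel{i}'$ permutes the roles of the four elements of $\mathcal{I}$ in such a way that the associated product coincides with one of the displayed terms; since all four terms of the chain are equal, the product is independent of the choice of $I$. To make this airtight, I would simply record the correspondence between each choice of $I$ and the term of the chain it produces, a finite check over the four elements of $\mathcal{I}$.

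The only subtlety — and the reason the statement warrants a separate proof rather than being an immediate corollary of Proposition~\ref{prop_app_sym_fct} — is that operators in $\mathcal{L}(V)$ need not commute in general, so the reordering step cannot be taken for granted and must rely on the commutativity hypothesis. This is the exact place where the scalar argument silently used the trivial commutativity of $\mathbb{R}$; everything else transfers unchanged. Hence there is no genuine computational obstacle here, and the proof reduces to the symmetry-and-commutativity manipulation displayed above, together with the finite verification over $\mathcal{I}$.
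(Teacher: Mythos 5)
Your proof is correct and matches the paper's approach: the paper gives no separate proof for this operator version, treating it as the verbatim extension of the scalar argument in Proposition~\ref{prop_app_sym_fct} with the commutativity of $\mathbb{R}$ replaced by the explicit commutativity hypothesis on $\mathfrak{f}$, which is precisely the chain of identities you wrote. Your added observation that the four terms of the chain realize the four choices of $I\in\mathcal{I}$ makes explicit the finite check the paper leaves implicit.
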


By Proposition~\ref{cor_app_sym_opr}, the operators
\begin{align*}
 (
 N_{I}
 +
 N_{\cancel{i}i'}
 )
 (
 N_{\cancel{I}}
 +
 N_{i\cancel{i}'}
 ),
 \qquad
 (
 N_{I}
 +
 N_{\cancel{i}i'}
 -
 \Delta_{
 I,
 \cancel{i}i',
 l,
 \{
 J_{i}
 \}_{n}
 }
 )
 (
 N_{\cancel{I}}
 +
 N_{i\cancel{i}'}
 -
 \Delta_{
 \cancel{I},
 i\cancel{i}',
 l,
 \{
 J_{i}
 \}_{n}
 }
 )
\end{align*}
do not depend on the choice of $I\in\mathcal{I}$, where
\smash{$
\Delta_{
I,
\cancel{i}i',
l,
\{
J_{i}
\}_{n}
}
$},
\smash{$
\Delta_{
\cancel{I},
i\cancel{i}',
l,
\{
J_{i}
\}_{n}
}
$}
are given by \eqref{notation_delta_Iisi}, \eqref{notation_delta_Isiis}. Then, it is shown that
$
\mathcal{F}_{
l,
\{
 J_{i}
\}_{n},
\{
 k_{i}
\}_{n}
}
$
in Section~\ref{G24_recrel_solve} is independent of
$
I
\in
\mathcal{I}
$.
Furthermore, by Proposition~\ref{prop_app_sym_fct}, the following functions do not depend on $I\in\mathcal{I}$:
\begin{gather*}
 (
\beta_{I}^{n}
+
\beta_{\cancel{i}i'}^{n}
 )
 (
\beta_{\cancel{I}}^{n}
+
\beta_{i\cancel{i}'}^{n}
 ),
\qquad
 (
\beta_{I}^{n}
+
\beta_{\cancel{i}i'}^{n}
-
\Delta_{
I,
\cancel{i}i',
l,
\{
J_{i}
\}_{n}
}
 )
 (
\beta_{\cancel{I}}^{n}
+
\beta_{i\cancel{i}'}^{n}
-
\Delta_{
\cancel{I},
i\cancel{i}',
l,
\{
J_{i}
\}_{n}
}
 ),
\\
\left\{
\sum_{m=1}^{l}
 (
\delta_{IJ_{m}}
+
\delta_{\cancel{i}i',J_{m}}
 )
\right\}
\left\{
\sum_{m=1}^{l}
 (
\delta_{\cancel{I}J_{m}}
+
\delta_{i\cancel{i}',J_{m}}
 )
\right\}.
\end{gather*}
From this fact,
$
\Upsilon_{
l,
\{J_{i}\}_{n},
\{k_{i}\}_{n}
}
$
in Theorem~\ref{thm_star_prd_G2_4} (our main theorem) and \eqref{star_coeff_G2_4}
do not depend on the choice of
$
I
\in
\mathcal{I}
$.
Hence, the star product with separation of variables \eqref{star_prd_G2_4} on $G_{2,4}(\mathbb{C})$ in Theorem~\ref{thm_star_prd_G2_4} (our main theorem) does not depend on a capital letter index $I$.

\section[Calculations of T\string^n in Section 3.3]{Calculations of $\boldsymbol{T^{n}_{\protect\vec{\alpha}_{n},\protect\vec{\beta}_{n}^{\ast}}}$ in Section~\ref{G24_recrel_solve}}
\label{appendix_coeff_T}

In this appendix, we denote the calculation of \smash{$T\raisebox{2pt}{${}^{n}_{\vec{\alpha}_{n},\vec{\beta}_{n}^{\ast}}$}$} using
\smash{$
 T\raisebox{2pt}{${}^{n}_{\vec{\alpha}_{n},\vec{\beta}_{n}^{\ast}}$}
 =
\langle
 \vec{\alpha}_{n}
|
 T_{n}
|
 \vec{\beta}_{n}^{\ast}
\rangle
$}
in
Theorem~\ref{thm_G24_opr}. Since
\smash{$
\mathcal{C}_{
l,
\{
J_{i}
\}_{n}
,
\{
k_{i}
\}_{n}
}
$}
and
\smash{$
\mathcal{F}_{
l,
\{
J_{i}
\}_{n}
,
\{
k_{i}
\}_{n}
}
$}
depend on only number operators, then
\smash{$
\mathcal{C}_{
l,
\{
J_{i}
\}_{n}
,
\{
k_{i}
\}_{n}
}
|
\vec{\beta}_{n}^{\ast}
\rangle
$}
and~\smash{$
\mathcal{F}_{
l,
\{
J_{i}
\}_{n}
,
\{
k_{i}
\}_{n}
}
|
\vec{\beta}_{n}^{\ast}
\rangle
$}
are immediately calculated as
\begin{gather}
\mathcal{C}_{
l,
\{
J_{i}
\}_{n}
,
\{
k_{i}
\}_{n}
}
|
\vec{\beta}_{n}^{\ast}
\rangle
=
 (
 \tau_{l}
 \delta_{j_{l}k_{l}}
 +
 \beta_{j_{l}\cancel{j_{l}}'}^{n}
 -
 \Lambda_{
 l,
 j_{l}\cancel{j_{l}}',
 \{
 J_{i}
 \}_{n}
 ,
 \{
 k_{i}
 \}_{n}
 }
 +
 1
 )
|
\vec{\beta}_{n}^{\ast}
\rangle,
\label{act_C}
\\
\mathcal{F}_{
l,
\{
J_{i}
\}_{n}
,
\{
k_{i}
\}_{n}
}
|
\vec{\beta}_{n}^{\ast}
\rangle\nonumber\\
\qquad
=
\{
 l
 (
 \tau_{l}
 +
 1
 )
 +
 2
 (
 \beta_{I}^{n}
 +
 \beta_{\cancel{i}i'}^{n}
 -
 \Delta_{
 I,
 \cancel{i}i',
 l,
 \{
 J_{i}
 \}_{n}
 }
 )
 (
 \beta_{\cancel{I}}^{n}
 +
 \beta_{i\cancel{i}'}^{n}
 -
 \Delta_{
 \cancel{I},
 i\cancel{i}',
 l,
 \{
 J_{i}
 \}_{n}
 }
 )
\}^{-1}
|
\vec{\beta}_{n}^{\ast}
\rangle,
\label{act_F}
\end{gather}
respectively.
Here we use \eqref{notation_Lambda}, \eqref{notation_delta_Iisi} and \eqref{notation_delta_Isiis}.
By using \eqref{act_C} and \eqref{act_F}, we obtain
\begin{gather}
 \mathcal{C}_{
 1,
 \{
 J_{i}
 \}_{n}
 ,
 \{
 k_{i}
 \}_{n}
 }
 \cdots
 \mathcal{C}_{
 n,
 \{
 J_{i}
 \}_{n}
 ,
 \{
 k_{i}
 \}_{n}
 }
 \mathcal{F}_{
 1,
 \{
 J_{i}
 \}_{n}
 ,
 \{
 k_{i}
 \}_{n}
 }
 \cdots
 \mathcal{F}_{
 n,
 \{
 J_{i}
 \}_{n}
 ,
 \{
 k_{i}
 \}_{n}
 }
|
 \vec{\beta}_{n}^{\ast}
\rangle
 \nonumber
 \\
\qquad
 =
 \left\{
 \prod_{l=1}^{n}
 \frac{
 \tau_{l}
 \delta_{j_{l}k_{l}}
 +
 \beta_{j_{l}\cancel{j_{l}}'}^{n}
 +
 1
 -
 \Lambda_{
 l,
 j_{l}\cancel{j_{l}}',
 \{
 J_{i}
 \}_{n}
 ,
 \{
 k_{i}
 \}_{n}
 }
 }{
 l
 (
 \tau_{l}+1
 )
 +
 2
 (
 \beta_{I}^{n}
 +
 \beta_{\cancel{i}i'}^{n}
 -
 \Delta_{
 I,
 \cancel{i}i',
 l,
 \{
 J_{i}
 \}_{n}
 }
 )
 (
 \beta_{\cancel{I}}^{n}
 +
 \beta_{i\cancel{i}'}^{n}
 -
 \Delta_{
 \cancel{I},
 i\cancel{i}',
 l,
 \{
 J_{i}
 \}_{n}
 }
 )
 }
 \right\}
|
 \vec{\beta}_{n}^{\ast}
\rangle . \label{act_CandF}
\end{gather}
Next, we calculate
$
\mathcal{A}_{
J_{1},
k_{1}
}
\cdots
\mathcal{A}_{
J_{n},
k_{n}
}
$
acting on
\smash{$
|
\vec{\beta}_{n}^{\ast}
\rangle
$}.
Recalling \eqref{braket_ann_step_fct}, we have
\begin{align*} 
 \mathcal{A}_{ J_{n}, k_{n} }
|
 \vec{\beta}_{n}^{\ast}
\rangle
 ={}&
 a_{J_{n}}
 \frac{1}{\sqrt{N_{J_{n}}}}
 \left(
 a_{\cancel{J_{n}}}
 \frac{1}{
 \sqrt{
 N_{\cancel{J_{n}}}
 }
 }
 a_{j_{n}\cancel{j_{n}}'}^{\dagger}
 \frac{1}{
 \sqrt{
 N_{j_{n}\cancel{j_{n}}'}+1
 }
 }
 \right)^{\delta_{\cancel{j_{n}}k_{n}}}
|
 \vec{\beta}_{n}^{\ast}
\rangle
 \\
 ={}&
 \bigg(
 \prod_{
 S
 \in
 \mathcal{I}
 }
 \theta
 (
 \beta_{S}^{n}
 -
 \delta_{SJ_{n}}
 -
 \delta_{\cancel{j_{n}}k_{n}}
 (
 \delta_{S\cancel{J_{n}}}
 -
 \delta_{S,j_{n}\cancel{j_{n}}'}
 )
 )
 \bigg)\\
&\times
 \big|
 \vec{\beta}_{n}^{\ast}
 -
 \vec{e}_{J_{n}}^{\ast}
 -
 \delta_{\cancel{j_{n}}k_{n}}
 \bigl(
 \vec{e}_{\cancel{J_{n}}}^{\ast}
 -
 \vec{e}_{j_{n}\cancel{j_{n}}'}^{\ast}
 \bigr)
 \big\rangle
 .
\end{align*}
Doing a similar calculation $(n-1)$ times,
\smash{$
 \mathcal{A}_{
 1,
 J_{1},
 k_{1}
 }
 \cdots
 \mathcal{A}_{
 n,
 J_{n},
 k_{n}
 }
|
 \vec{\beta}_{n}^{\ast}
\rangle
$}
is calculated as
\begin{gather}
 \mathcal{A}_{
 1,
 J_{1},
 k_{1}
 }
 \cdots
 \mathcal{A}_{
 n,
 J_{n},
 k_{n}
 }
|
 \vec{\beta}_{n}^{\ast}
\rangle\nonumber\\
 \qquad
 =
 \left(
 \prod_{
 S
 \in
 \mathcal{I}
 }
 \prod_{r=1}^{n}
 \theta
 \left(
 \beta_{S}^{n}
 -
 \sum_{m=r}^{n}
 d_{
 S,
 J_{m},
 k_{m}
 }
 \right)
 \right)
 \left|\,
 \vec{\beta}_{n}^{\ast}
 -
 \sum_{
 X
 \in
 \mathcal{I}
 }
 \sum_{m=1}^{n}
 d_{
 X,
 J_{m},
 k_{m}
 }
 \vec{e}_{X}^{\ast}
 \right\rangle
 ,
 \label{act_A}
\end{gather}
where we use \eqref{notation_d}, i.e.,
$
d_{
 S,
 J_{m},
 k_{m}
}
=
\delta_{
 S,
 J_{m}
}
+
\delta_{\cancel{j_{m}}k_{m}}
 (
 \delta_{
 S,
 \cancel{J_{m}}
 }
 -
 \delta_{
 S,
 j_{m}\cancel{j_{m}}'
 }
 )
$.
From \eqref{act_CandF} and \eqref{act_A}, we obtain
\begin{gather}
 \frac{
 g_{\overline{k_{1}{j_{1}}'},D_{1}}
 \cdots
 g_{\overline{k_{n}{j_{n}}'},D_{n}}
 }{
 \tau_{1}
 \cdots
 \tau_{n}
 }
 \mathcal{A}_{
 1,
 J_{1},
 k_{1}
 }
 \cdots
 \mathcal{A}_{
 n,
 J_{n},
 k_{n}
 }
 \mathcal{C}_{
 1,
 \{
 J_{i}
 \}_{n}
 ,
 \{
 k_{i}
 \}_{n}
 }
 \cdots
 \mathcal{C}_{
 n,
 \{
 J_{i}
 \}_{n}
 ,
 \{
 k_{i}
 \}_{n}
 }
 \mathcal{F}_{
 1,
 \{
 J_{i}
 \}_{n}
 ,
 \{
 k_{i}
 \}_{n}
 }\nonumber\\
 \qquad
 \cdots
 \mathcal{F}_{
 n,
 \{
 J_{i}
 \}_{n}
 ,
 \{
 k_{i}
 \}_{n}
 }
 |
 \vec{\beta}_{n}^{\ast}
 \rangle
 \nonumber
 \\
 \qquad\phantom{\cdots}{}=
 \left(
 \prod_{
 S
 \in
 \mathcal{I}
 }
 \prod_{r=1}^{n}
 \theta
 \left(
 \beta_{S}^{n}
 -
 \sum_{m=r}^{n}
 d_{
 S,
 J_{m},
 k_{m}
 }
 \right)
 \right)
 \left(
 \prod_{l=1}^{n}
 \frac{g_{\overline{k_{l}j_{l}'},D_{l}}}{\tau_{l}}
 \right)\nonumber
 \\
 \qquad\phantom{\cdots=}{}
 \times
 \left\{
 \prod_{l=1}^{n}
 \frac{
 \tau_{l}
 \delta_{j_{l}k_{l}}
 +
 \beta_{j_{l}\cancel{j_{l}}'}^{n}
 +
 1
 -
 \Lambda_{
 l,
 j_{l}\cancel{j_{l}}',
 \{
 J_{i}
 \}_{n}
 ,
 \{
 k_{i}
 \}_{n}
 }
 }{
 l
 (
 \tau_{l}+1
 )
 +
 2
 (
 \beta_{I}^{n}
 +
 \beta_{\cancel{i}i'}^{n}
 -
 \Delta_{
 I,
 \cancel{i}i',
 l,
 \{
 J_{i}
 \}_{n}
 }
 )
 (
 \beta_{\cancel{I}}^{n}
 +
 \beta_{i\cancel{i}'}^{n}
 -
 \Delta_{
 \cancel{I},
 i\cancel{i}',
 l,
 \{
 J_{i}
 \}_{n}
 }
 )
 }
 \right\}\nonumber\\
 \qquad\phantom{\cdots=}{}\times
 \left|
 \vec{\beta}_{n}^{\ast}
 -
 \sum_{
 X
 \in
 \mathcal{I}
 }
 \sum_{m=1}^{n}
 d_{
 X,
 J_{m},
 k_{m}
 }
 \vec{e}_{X}^{\ast}
 \right\rangle
 ,
 \label{act_ACF}
\end{gather}
where we use \eqref{notation_Lambda}--\eqref{notation_delta_Isiis}.
Recalling that Proposition~\ref{prop:covar_const_0and1} and \eqref{braket_coeff}, $T_{0}$ is immediately obtained as
\smash{$
T_{0}
=
|
\vec{0}
\rangle
\langle
\vec{0}
|
$}.
Thus
\smash{$
\langle
 \vec{\alpha}_{n}
|
 a_{D_{n}}^{\dagger}
 \frac{1}{
 \sqrt{
 N_{D_{n}}
 +
 1
 }
 }
 \cdots
 a_{D_{1}}^{\dagger}
 \frac{1}{
 \sqrt{
 N_{D_{1}}
 +
 1
 }
 }
 T_{0}
$}
is easily calculated as
\begin{align}
\langle
 \vec{\alpha}_{n}
|
 a_{D_{n}}^{\dagger}
 \frac{1}{
 \sqrt{
 N_{D_{n}}
 +
 1
 }
 }
 \cdots
 a_{D_{1}}^{\dagger}
 \frac{1}{
 \sqrt{
 N_{D_{1}}
 +
 1
 }
 }
 T_{0}
 =
\Big\langle
 \vec{\alpha}_{n}
|
 \sum_{m=1}^{n}
 \vec{e}_{D_{m}}
 \Big\rangle
 \langle
 \vec{0}
 |
 =
 \delta_{
 \vec{\alpha}_{n},
 \sum_{m=1}^{n}
 \vec{e}_{D_{m}}
 }
 \langle
 \vec{0}
 | .
 \label{act_a_opr}
\end{align}
Hence, substituting \eqref{act_ACF} and \eqref{act_a_opr} into \eqref{opr_T_norm}, we eventually obtain the explicit expression of~\smash{$T^{n}_{\vec{\alpha}_{n},\vec{\beta}_{n}^{\ast}}$}
\begin{align*}
 T^{n}_{\vec{\alpha}_{n},\vec{\beta}_{n}^{\ast}}
 ={}&
\langle
 \vec{\alpha}_{n}
|
 T_{n}
|
 \vec{\beta}_{n}^{\ast}
\rangle
 =
 \sum_{
 \substack{
 J_{i}
 \in
 \{J_{i}\}_{n}
 \\
 D_{i}
 \in
 \{D_{i}\}_{n}
 }
 }
 \sum_{
 \substack{
 k_{i}
 =
 1
 \\
 k_{i}
 \in
 \{k_{i}\}_{n}
 }
 }^{2}
 \delta_{
 \vec{\alpha}_{n},
 \sum_{m=1}^{n}
 \vec{e}_{D_{m}}
 }
 \delta_{
 \vec{\beta}_{n}^{\ast},
 \sum_{
 X
 \in
 \mathcal{I}
 }
 \sum_{m=1}^{n}
 d_{
 X,
 J_{m},
 k_{m}
 }
 \vec{e}_{X}^{\ast}
 }
 \nonumber
 \\
 &
 \times
 \left(
 \prod_{
 S
 \in
 \mathcal{I}
 }
 \prod_{r=1}^{n}
 \theta
 \left(
 \beta_{S}^{n}
 -
 \sum_{m=r}^{n}
 d_{
 S,
 J_{m},
 k_{m}
 }
 \right)
 \right)
 \left(
 \prod_{l=1}^{n}
 \frac{g_{\overline{k_{l}j_{l}'},D_{l}}}{\tau_{l}}
 \right)
 \\
 &
 \times
 \left\{
 \prod_{l=1}^{n}
 \frac{
 \tau_{l}
 \delta_{j_{l}k_{l}}
 +
 \beta_{j_{l}\cancel{j_{l}}'}^{n}
 +
 1
 -
 \Lambda_{
 l,
 j_{l}\cancel{j_{l}}',
 \{
 J_{i}
 \}_{n},
 \{
 k_{i}
 \}_{n}
 }
 }{
 l
 (
 \tau_{l}+1
 )
 +
 2
 (
 \beta_{I}^{n}
 +
 \beta_{\cancel{i}i'}^{n}
 -
 \Delta_{
 I,
 \cancel{i}i',
 l,
 \{
 J_{i}
 \}_{n}
 }
 )
 (
 \beta_{\cancel{I}}^{n}
 +
 \beta_{i\cancel{i}'}^{n}
 -
 \Delta_{
 \cancel{I},
 i\cancel{i}',
 l,
 \{
 J_{i}
 \}_{n}
 }
 )
 }
 \right\}
 .
\end{align*}

\subsection*{Acknowledgements}

A.S.\ was supported by JSPS KAKENHI Grant Number 21K03258. The authors are grateful to Masashi Hamanaka, Shota Hamanaka, Yasushi Homma, Noriaki Ikeda, Taichiro Kugo, Thomas Raujouan, Hiroshi Tamaru and Masashi Yasumoto for useful advice and comments in ``Mini-School on Differential Geometry and Integrable Systems'', ``Poisson geometry and related topics~23'', ``The 4th International Conference on Surfaces, Analysis, and Numerics in Differential Geometry'', ``Strings and Fields 2024'' and our private discussion at Nagoya University. The authors thank the Yukawa Institute for Theoretical Physics at Kyoto University. Discussions during the YITP workshop YITP-W-24-08 on ``Strings and Fields 2024'' were useful to complete this work. The authors sincerely appreciate the anonymous referees, their thoughtful feedback on errors and typos contributed greatly to improving the quality of this paper.

\pdfbookmark[1]{References}{ref}
\LastPageEnding

\end{document}